\newcommand{\rr}{\mathbb R}
\newcommand{\zz}{\mathbb Z}
\newcommand{\nn}{\mathbb N}
\newcommand{\qq}{\mathbb Q}
\newcommand{\ff}{\mathbb F}
\newcommand{\inv}{^{-1}}
\newcommand{\abs}[1]{\left|{#1}\right|}
\newcommand{\suchthat}{\ | \ }
\newcommand{\tth}{^\text{th}}
\newcommand{\genseq}[3]{{#1}_1 {#3} {#1}_2 {#3} \dots {#3} {#1}_{#2}}
\newcommand{\seq}[2]{\genseq{#1}{#2}{,}}
\newcommand{\twocases}[4]{\begin{cases} #2 & #1 \\ #4 & #3 \end{cases}}
\newcommand{\threecases}[6]{\begin{cases} #2 & #1 \\ #4 & #3 \\ #6 & #5 \end{cases}}
\newcommand{\txt}[1]{\text{#1}}
\newcommand{\stext}[1]{\ \ \ \ \ \text{(#1)}}
\newcommand{\stextn}[1]{\\&\ \ \ \ \ \ \stext{#1}}
\newcommand{\bcause}[1]{\stext{because ${#1}$}}
\newcommand{\snc}[1]{\stext{since ${#1}$}}
\newcommand{\push}{\\ & \ \ \ \ \ \ \ \ \ \ }
\newcommand{\ipnc}[3]{\begin{figure}[H]\begin{center}\includegraphics[scale = {#1}]{{#2}.pdf}\caption{#3}\end{center}\end{figure}}
\newcommand{\ipns}[4]{\begin{figure}[H]\begin{center}\includegraphics[scale = {#1}]{{#2}.pdf}\caption[#3]{#4}\end{center}\end{figure}}
\g@addto@macro{\@algocf@init}{\SetKwInOut{Parameter}{Parameters}} 
\newcommand{\PP}{\mathsf{P}}
\newcommand{\NP}{\mathsf{NP}}
\def\authorname{Jamie R.\ Tucker-Foltz\xspace}
\def\authorcollege{Churchill College\xspace}
\def\authoremail{jtuckerfoltz@gmail.com}
\def\dissertationtitle{Approximating Constraint Satisfaction Problems Symmetrically}
\theoremstyle{plain}
\newtheorem{theorem}{Theorem}
\newtheorem{lemma}[theorem]{Lemma} 
\newtheorem{proposition}[theorem]{Proposition}
\newtheorem{conjecture}[theorem]{Conjecture}
\numberwithin{theorem}{section}
\newcommand{\maxcut}{\textsf{MaxCut}\xspace}
\newcommand{\threecol}{\textsf{3-Colourability}\xspace}
\newcommand{\threesat}{\textsf{3SAT}\xspace}
\newcommand{\threexor}{\textsf{3XOR}\xspace}
\newcommand{\vc}{\textsf{VertexCover}\xspace}
\newcommand{\labelcover}{\textsf{LabelCover}\xspace}
\newcommand{\ug}{\textsf{UniqueGames}\xspace}
\newcommand{\tsp}{\textsf{TravelingSalesman}\xspace}
\newcommand{\UG}{\textsf{UG}}
\newcommand{\gug}{\textsf{GroupUniqueGames}\xspace}
\newcommand{\maxtwolin}{\textsf{Max2Lin}}
\newcommand{\gap}[2]{\ensuremath{\mathsf{Gap}_{{#1}, {#2}}}}
\newcommand{\thmspace}{\vspace{14pt}}
\newcommand{\LC}{\textsf{LC}\xspace}
\newcommand{\gaplambda}{\ensuremath{\textsf{Gap}_\Lambda}}
\newcommand{\opt}{\textup{opt}}
\newcommand{\sdp}{\textup{sdp}}
\newcommand{\tauuug}{\tau_{\textup{\UG}(q)}}
\newcommand{\weight}{\texttt{weight}}
\newcommand{\fm}{\ff_2^m}
\newcommand{\verteq}{\rotatebox{90}{$\,=$}}
\newcommand{\equalto}[2]{\underset{\scriptstyle\overset{\mkern4mu\verteq}{#2}}{#1}}
\newcommand\fs@nocaptionruled{
	\let\@fs@capt\relax
	\def\@fs@pre{}
	\def\@fs@post{\kern2pt\hrule\relax}%
	\def\@fs@mid{\kern2pt\hrule\kern2pt}%
	\let\@fs@iftopcapt\iftrue}
\newcommand{\lemGSameSatisfiability}{
	For any \gug instance $U$, the satisfiability of $\mathcal{G}(U)$ is the same as the satisfiability of $U$.
}
\newcommand{\lemUGExactClaim}{
	For all $i \geq 0$, for all $g \in A$, for all variables $x_{v_1}^{g_1}$ and $x_{v_2}^{g_2}$:
	\begin{enumerate}[label={(\arabic*)}]
		\item\label{itmUGExactGood2} If $\{v_1, v_2\} \neq r_i$, there is a constraint $x_{v_1}^{g_1} + x_{v_2}^{g_2} = g$ in $\mathcal{G}(U_1)$ if and only if there is a constraint $f_i(x_{v_1}^{g_1}) + f_i(x_{v_2}^{g_2}) = g$ in $\mathcal{G}(U_2)$.
		\item\label{itmUGExactGood3} If $\{v_1, v_2\} = r_i$, there is a constraint $x_{v_1}^{g_1} + x_{v_2}^{g_2} = g$ in $\mathcal{G}(U_1)$ if and only if there is \emph{not} a constraint $f_i(x_{v_1}^{g_1}) + f_i(x_{v_2}^{g_2}) = g$ in $\mathcal{G}(U_2)$.
	\end{enumerate}
}
\newcommand{\lemUGExactSoundness}{
	The satisfiability of $U_2$ (and thus of $\mathcal{G}(U_2)$) is strictly less than $\frac12$.
}
\newcommand{\lemNUGLGSoundnessUTwo}{
	With probability at least $\frac12 - \varepsilon$, the satisfiability of $U_2$ (and thus of $\mathcal{G}(U_2)$) is less than $\frac{\alpha}{2^{\ell}}$.
}
\newcommand{\lemNUGLGMostEdgesGood}{
	With probability at least $\frac12$, at most a $\gamma$ fraction of the edges of $\widetilde{H}$ are bad edges.
}
\newcommand{\lemNUGLGRadiusProperty}{
	Let $p = v_0, \seq{v}{n}$ be a path in $H$ of length $n \geq r$. Given any values in $\fm$ for $g^*(v_0)$ and $g^*(v_n)$, it is possible to extend $g^*$ to all of the intermediate vertices of $p$ so that the map $f(x_v^g) := x_v^{g + g^*(v)}$ is a partial isomorphism between $\mathcal{G}(U_1)$ and $\mathcal{G}(U_2)$ over the set $\{x_v^g \suchthat v \in p,\ g \in \fm\}$.
}
\newcommand{\lemGirth}{
	On any round $i$, for any vertex $u \in V(H)$, there does not exist any path contained in $F_i(u)$ with both endpoints in $T_{i - 1}$.
}
\begin{document}
 
\pagestyle{empty}
\singlespacing
\begin{titlepage} 
	
	\begin{center}
		\noindent
		\huge
		\dissertationtitle \\
		\vspace*{\stretch{1}}
	\end{center}
	
	\begin{center}
		\noindent
		\huge
		\authorname \\
		\Large
		\authorcollege      \\[24pt]
		\includegraphics{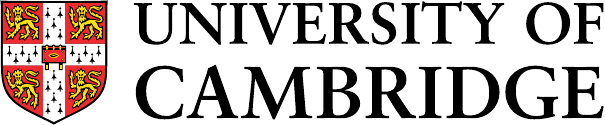}
	\end{center}
	
	\vspace{24pt} 
	
	\begin{center}
		\noindent
		\large
		{\it A dissertation submitted to the University of Cambridge \\ 
			in partial fulfilment of the requirements for the degree of \\ 
			Master of Philosophy in Advanced Computer Science} 
		\vspace*{\stretch{1}}
	\end{center}
	
	\begin{center}
		\noindent
		University of Cambridge \\
		Computer Laboratory     \\
		William Gates Building  \\
		15 JJ Thomson Avenue    \\
		Cambridge CB3 0FD       \\
		{\sc United Kingdom}    \\
	\end{center}
	
	\begin{center}
		\noindent
		Email: \authoremail \\
	\end{center}
	
	\begin{center}
		\noindent
		\today
	\end{center}
	
\end{titlepage} 

\newpage
\vspace*{\fill}

\onehalfspacing
\par\vspace*{.35\textheight}{\large\emph{Dedicated in memory of Lyle A. McGeoch.}}
\vspace*{\fill}

\singlespacing
\newpage
{\Huge \bf Abstract}
\vspace{24pt}

This thesis investigates the extent to which the optimal value of a constraint satisfaction problem (CSP) can be approximated by some sentence of fixed point logic with counting (FPC). It is known that, assuming $\PP \neq \NP$ and the Unique Games Conjecture, the best polynomial time approximation algorithm for any CSP is given by solving and rounding a specific semidefinite programming relaxation. We prove an analogue of this result for algorithms that are definable as FPC-interpretations, which holds without the assumption that $\PP \neq \NP$. While we are not able to drop (an FPC-version of) the Unique Games Conjecture as an assumption, we do present some partial results toward proving it. Specifically, we give a novel construction which shows that, for all $\alpha > 0$, there exists a positive integer $q = \txt{poly}(\frac{1}{\alpha})$ such that no there is no FPC-interpretation giving an $\alpha$-approximation of Unique Games on a label set of size $q$.

\newpage
\vspace*{\fill}

{\Huge \bf Acknowledgments}
\vspace{24pt} 

I would like to thank Anuj Dawar for suggesting this exciting and fruitful project for me to work on, and also for his unparalleled attentiveness and generosity as a supervisor. He has always been available to quickly answer my questions and read what I have written, from my vague outlines of ideas to my long and detailed proofs. I am extraordinarily grateful for all of the time and advice he has given me.

I would also like to thank the Winston Churchill Foundation for funding my year in Cambridge and generously supporting me when COVID-19 hit and I was suddenly forced to return to the USA.

Most importantly, a huge thank you to my parents who have kept me happy and well-fed as I wrote this the bulk of this thesis from home.

\vspace{200pt}

\pagenumbering{roman}
\setcounter{page}{0}
\pagestyle{plain}
\tableofcontents
\onehalfspacing



\chapter{Introduction}\label{chaIntro}
\pagenumbering{arabic}
\setcounter{page}{1} 

The study of \emph{approximation algorithms} asks the question, ``Given some $\NP$-hard optimization problem $\mathcal{P}$, what is the smallest multiplicative error to which we can approximate the optimal values of instances of $\mathcal{P}$ using a polynomial time algorithm?" For some problems, the error can be made arbitrarily small, while for others, there is a fundamental limit beyond which any better approximation could be used to solve the problem exactly, which is impossible unless $\PP = \NP$. In some cases it is known exactly what this limit is, while in others it is still open.

In this thesis we ask the same question, but with the additional requirement that the algorithm must be definable in \emph{fixed point logic with counting (FPC)}. Roughly, an algorithm is definable in FPC if it respects the natural symmetries of its input, without making any arbitrary choices that break those symmetries. (The formal connection between polynomial time algorithms, FPC and symmetry is discussed in greater detail in Section \ref{secFPCBackground}.)

In a recent paper, Atserias and Dawar \cite{DefinableInapproximabilityJournal} give the first (to the author's knowledge) FPC-\emph{inapproximability} results, showing that the problems \threexor, \threesat, \vc and \labelcover cannot be approximated closer than within specific constant factors in FPC. We extend this work to consider a broad class of problems, called \emph{constraint satisfaction problems (CSPs)}.

Using a reduction from the \ug problem to an arbitrary CSP $\Lambda$, Raghavendra \cite{RagThesis} shows that, assuming $\PP \neq \NP$ and the \emph{Unique Games Conjecture}, the best polynomial time approximation algorithm for $\Lambda$ is given by solving and rounding a specific semidefinite programming relaxation. In Chapter \ref{chaCSPs} we argue, firstly, that this algorithm is definable in FPC, and secondly, that the reduction from \ug is definable in FPC. Together, these two facts imply an analogue of Raghavendra's result for algorithms that are definable in FPC, holding without the assumption that $\PP \neq \NP$ (Theorems \ref{thmCSPApproxInFPC} and \ref{thmFPCUGHardness}).

However, the result still depends on an FPC-version of the Unique Games Conjecture (Conjecture \ref{cnjFPCUGC}). While we are not able to prove this conjecture, we do make some partial progress. In Chapter \ref{chaUG} we present a series of CFI-constructions\footnote{A CFI-construction is a construction used to prove a logical inexpressibility result similar to that of Cai, F\"urer and Immerman \cite{CaiFurerImmerman}. } culminating in the following result (Theorem \ref{thmUGLowGapMain}): for all $\alpha > 0$, there exists a positive integer $q = \txt{poly}(\frac{1}{\alpha})$ such that there is no FPC-definable $\alpha$-approximation algorithm for \ug on a label set of size $q$.


\chapter{Preliminaries}\label{chaBackground}

We assume that the reader is familiar with basic complexity theory, linear algebra, group theory and notation from graph theory. All graphs we consider are undirected, but may contain multiple edges between a pair of vertices and/or loops from a vertex to itself. A graph is \emph{simple} if it has no multiple edges or self-loops. We also assume an intuitive understanding of the meaning of sentences and formulas of first order logic.


\section{Constraint satisfaction problems and approximation}\label{secCSPsAndApproximationClassic}

An instance of a \emph{constraint satisfaction problem (CSP)} is specified by a set of variables taking values in some fixed, finite domain and a set of constraints between certain subsets of variables. The objective is to assign values to the variables to satisfy a maximum number of constraints, or, in a related \emph{weighted} version, to satisfy a set of constraints of maximum total weight.

The complexity of a CSP is determined by the size of the domain and the kinds of constraints which are allowed. For example, suppose the domain has size 3, and each constraint specifies that a certain pair of variables must not take the same value. Thinking of the constraints as edges in a graph, determining whether all constraints can be satisfied is the \threecol problem, which is $\NP$-complete. If, instead, the domain has size 2, then satisfying all constraints amounts to checking whether the graph is bipartite, which is in $\PP$. However, satisfying a \emph{maximum number} of constraints when it is impossible to satisfy all of them is still $\NP$-hard; this is the same as the \maxcut problem, where we have to partition the vertices of a graph into two sets such that a maximum number of edges are cut by the partition. We have a similar situation if the set of values is a finite field and the constraints are linear equations involving any number of variables: determining whether all constraints can be satisfied is solvable in polynomial time via Gaussian elimination, though satisfying the \emph{maximum number} of constraints when the system is inconsistent is $\NP$-hard.

Since finding the exact optimal value is $\NP$-hard for almost any interesting CSP, the next logical question is, is it possible to efficiently approximate the optimal value? For $0 \leq \alpha \leq 1$, an \emph{$\alpha$-approximation algorithm} for a CSP $\Lambda$ is a polynomial time algorithm that, given an instance of $\Lambda$ with optimal value $x^*$, returns a value $x$ such that $\alpha x^* \leq x \leq x^*$. Usually, an $\alpha$-approximation algorithm works by finding a specific assignment of variables $f$ and returning the number of constraints it satisfies; the hard part is in proving that there is no alternative assignment that beats $f$ by more than a factor of $\frac{1}{\alpha}$. The constant $\alpha$ is called the \emph{approximation ratio} attained by the algorithm.

For example, there is a greedy $\frac12$-approximation algorithm for \maxcut by Sahni and Gonzalez \cite{MAXCUTOneHalfApprox} which iteratively places vertices, in arbitrary order, on the side of the partition that maximizes the number of cut edges between the new vertex and the already-placed vertices. At each step, at least half of the new edges are cut, for otherwise the new vertex should have been placed on the other side. So by the end, if there are $m$ edges, at least $\frac{m}{2}$ edges are cut, i.e., the returned value of the cut $x$ must satisfy $\frac{m}{2} \leq x$. If $x^*$ denotes the maximum number of edges that can be cut, then $x^* \leq m$, so
$$\frac{x^*}{2} \leq \frac{m}{2} \leq x \leq x^*,$$
and thus we have a $\frac12$-approximation algorithm.

The analysis of the Sahni-Gonzalez algorithm reveals an important point about how one often thinks about approximately ``solving" a constraint satisfaction problem. This proof shows us that, instead of returning $x$, the algorithm could have instead just returned the value $\frac{E}{2}$. The greedy algorithm serves as nothing more than a proof of existence of such a cut, and actually carrying out this computation does not yield a better approximation ratio in the worst case. Intuitively, we expect that any reasonable algorithm for approximating a CSP should return not just the number of constraints satisfied, but also an assignment of values to variables which satisfies that many constraints. However, from a purely theoretical standpoint, this is unnecessary, and it is important to keep this in mind for Chapter \ref{chaCSPs} when we consider a setting where it is impossible to compute such an assignment.

\subsection{Semidefinite programming and the\\ Goemans-Williamson algorithm}\label{subSDPAndGW}
Semidefinite programming is an extremely powerful tool in the design of approximation algorithms. It is a generalization of linear programming which allows for certain kinds of nonlinear constraints, yet it is still solvable in polynomial time up to arbitrary precision. A \emph{semidefinite program (SDP)} is specified by an $n \times n$ objective matrix $C$, a collection of $n \times n$ constraint matrices $\{A_k\}_{k \in [m]}$ and a collection of corresponding constraint bounds $\{b_k\}_{k \in [m]}$. All numbers and matrices are $\qq$-valued. Given such an SDP, a \emph{feasible solution} is an $n \times n$ matrix $X$ such that, for all $k \in [m]$,
$$\langle A_k, X \rangle := \sum_{i \in [n]}\sum_{j \in [n]}A_{i, j}X_{i, j} \leq b_k,$$
and, additionally, $X$ is \emph{semidefinite}, written $X \succeq 0$. There are many equivalent definitions of semidefiniteness \cite[App.\txt{} A]{Aspects}; the most useful one for our purposes is that $X \succeq 0$ if there exists an $n \times n$ matrix $B$ such that $X = B^\top B$. An SDP is \emph{feasible} if the set of feasible solutions is nonempty, and \emph{bounded} if the set of feasible solutions is bounded. The objective is to find a feasible solution $X$ that maximizes the value of $\langle C, X \rangle$.

A paradigm in the design of approximation algorithms is to define a semidefinite program whose variables represent the variables of the input CSP instance, with the objective matrix capturing the quantity to be maximized in the problem. It is usually straightforward to write the constraints of the SDP so that an integral solution satisfying all of the constraints represents a valid solution to the CSP. However, after solving the SDP, we may get a solution with variables taking non-integral values, so the optimal SDP value may be larger than the value of the optimal integral solution. Thus, the final step is to ``round" the variables to integers, preserving feasibility without reducing the objective value too much. The approximation ratio attained by such an algorithm depends crucially on the analysis of the rounding step.

The canonical example of this technique is in the Goemans-Williamson algorithm for approximating \maxcut \cite{GWOriginal}. In the algorithm, a cut in an $n$-vertex graph is thought of as an assignment of $\pm 1$ to each of $n$ variables $\seq{v}{n}$, each representing one of the vertices, where the variables assigned 1 represent one side of the cut and the variables assigned $-1$ represent the other side. If an edge between vertex $i$ and vertex $j$ crosses the cut, then $v_iv_j = -1$, so $1 - v_iv_j = 2$. If such an edge does not cross the cut, we instead have $1 - v_iv_j = 0$. Therefore, the objective can be written as follows, where $w_{i, j}$ is the weight of the edge between vertex $i$ and vertex $j$ (so each $w_{i, j} \in \{0, 1\}$ for an unweighted graph):
\begin{align*}
	\textbf{Maximize }\ \ \ & \frac{1}{2}\sum_{i < j}w_{i, j}(1 - v_i v_j)\\
	\textbf{subject to }\ \ \ & v_i \in \{-1, 1\} \ \forall i \in [n]
\end{align*}
This is a quadratic integer program, so there are no known efficient algorithms to compute an optimal assignment. Instead, the Goemans-Williamson algorithm solves the following relaxation, where $v_i$ and $v_j$ are vectors that are allowed to take on values in the unit sphere $S^{n - 1}$:
\begin{align*}
	\textbf{Maximize }\ \ \ & \frac{1}{2}\sum_{i < j}w_{i, j}(1 - \langle v_i, v_j \rangle)\\
	\textbf{subject to }\ \ \ & v_i \in S^{n - 1} \ \forall i \in [n]
\end{align*}
By defining variables $X_{i, j} := \langle v_i, v_j \rangle$, this maximization problem becomes an SDP, since the constraint that $v_i \in S^{n - 1}$ can be written as the pair of linear constraints $X_{i, i} \leq 1$ and $X_{i, i} \geq 1$, while $X \succeq 0$ if and only if $X = B^\top B$ for some $B$, which happens if and only if each $X_{i, j}$ is the inner product of column $i$ of $B$ with column $j$ of $B$---thus the $v_i$ vectors are precisely the columns of $B$.

The first step is to solve this SDP, which can be done in polynomial time via various different algorithms \cite{SDPPolyTime}. Given a solution $X = B^\top B$, the next step is to extract the matrix $B$. This can be accomplished efficiently via an algorithm known as \emph{incomplete Choleski decomposition} \cite[Alg.\txt{} 4.2.2]{CholeskiRef}. The vectors $\seq{v}{n}$, which are the columns of $B$, define an embedding of the input graph into $n$-dimensional Euclidean space. The final cut is then obtained by splitting these points by a random hyperplane through the origin: choose a random $h \in S^{n - 1}$ and define one side of the cut to be all vertices $i$ such that $\langle h, v_i \rangle \geq 0$. Leveraging the geometry behind this algorithm, one can compute that the expected value of the cut is at least
$$\frac{\alpha_\txt{GW}}{2}\sum_{i < j}w_{ij}(1 - v_i \cdot v_j)$$
where
$$\alpha_\txt{GW} := \min_{0 \leq \theta \leq \pi} \frac{2\theta}{\pi(1 - \cos \theta)} \approx 0.87856.$$
Since the value of the optimal cut is at most the optimal SDP value,
$$\frac{1}{2}\sum_{i < j}w_{ij}(1 - v_i \cdot v_j),$$
this gives an $(\alpha_\txt{GW} - \delta)$-approximation algorithm for any $\delta > 0$, where the $-\delta$ comes from the fact that we cannot solve SDPs exactly, but can solve them up to any arbitrarily small error. (Technically, it is a \emph{randomized} approximation algorithm, though it was subsequently derandomized \cite{GWDerandomization}.)

\subsection{Inapproximability}\label{subInapproximability}
The Goemans-Williamson algorithm was the first improvement from the trivial $\frac12$-approximation algorithm of Sahni and Gonzalez in 19 years, and remains the best known approximation algorithm for \maxcut to date. One might naturally ask, is there any better polynomial time algorithm, achieving an even greater approximation ratio than $\alpha_\txt{GW}$? Unconditionally answering such a question in the negative is hopeless since we cannot even rule out the existence of an efficient algorithm solving \maxcut exactly. Until the $\PP$ vs.\txt{} $\NP$ question is resolved, the best we can hope for is a guarantee that an algorithm is the optimal polynomial time approximation algorithm assuming $\PP \neq \NP$.

The standard technique for showing such so-called \emph{inapproximability} results for a maximization problem $\Lambda$ is to show that the following \emph{gap problem}, written $\gap{c}{s}\Lambda$ for $0 < s \leq c$ (borrowing notation from \cite{UGCSurvey}), is hard: given an instance $I$ of $\Lambda$ in which either
\begin{enumerate}[label={(\arabic*)}]
	\item\label{itmGapProblemCompleteness} the optimal value of $I$ is at least $c$, or
	\item\label{itmGapProblemSoundness} the optimal value of $I$ is less than $s$,
\end{enumerate}
decide which of the two cases \ref{itmGapProblemCompleteness} or \ref{itmGapProblemSoundness} holds. Suppose there existed an $\frac{s}{c}$-approximation algorithm $A$ for $\Lambda$. Then, given an instance $I$ of $\Lambda$, we can run $A$ to compute the approximate value $x$. Since $x$ is the value of some solution, if $x \geq s$ we know we are not in case \ref{itmGapProblemSoundness}, so we must be in case \ref{itmGapProblemCompleteness}. Otherwise, if $x < s$, we know that the optimal value is at most $\frac{c}{s} x < \frac{c}{s} s = c$, so we cannot be in case \ref{itmGapProblemCompleteness}, and hence must be in case \ref{itmGapProblemSoundness}. Thus, we can use $A$ to decide the gap problem in polynomial time, so if deciding the gap problem is $\NP$-hard, then there does \emph{not} exist an $\alpha$-approximation algorithm for any $\alpha \geq \frac{s}{c}$ unless $\PP = \NP$. The ratio $\frac{s}{c}$ is called the \emph{gap ratio}.

The most famous result of this kind is the PCP Theorem \cite[Chapter 11]{AroraBarak}, which gives a reduction from \threesat to \gap{1}{s}\threesat, mapping satisfiable formulas to satisfiable formulas, and unsatisfiable formulas to formulas in which no more than an $s$ fraction of clauses can be simultaneously satisfied, for a universal constant $s < 1$. Further such \emph{gap-preserving} reductions have been discovered from that gap problem to other gap problems, implying many useful inapproximability bounds \cite[Sec.\txt{} 2.3]{UGCSurvey}. Some of these bounds are tight in the sense that there are known algorithms attaining those bounds, while for other problems, there is still a gap in our knowledge.

\subsection{Unique games}\label{subUGBackground}
A central research question in the study of approximation algorithms concerns the approximability of a certain constraint satisfaction problem, called \ug. For any positive integer $q$, \UG($q$) is the unweighted CSP where the domain has size $q$ and constraints may be imposed between pairs of variables such that the value of one variable uniquely determines the value of the other. It is often convenient to think of \ug instances as being defined on some graph $G$, where the vertices represent variables and each edge $\{u, v\} \in E(G)$ has a permutation $\pi_{u, v}$ on the label set $[q]$ defining which labels for vertex $u$ correspond to which labels for vertex $v$. The goal is to label the vertices with elements from $[q]$ to be consistent with a maximum number of permutations. For example, Figure \ref{figUGExample} shows a \UG(2) instance with edge permutations written in cycle notation, along with one of the optimal vertex labelings, satisfying $\frac{3}{4}$ of the constraints.

\ipns{1.8}{UGExampleCropped}{A \ug instance over the label set $\{1, 2\}$.}{\label{figUGExample} A \ug instance over the label set $\{1, 2\}$ represented graphically, along with one optimal solution (green). Only the bottom edge (red) is unsatisfied by this solution.}

The uniqueness property of the constraints makes it easy to determine whether an instance is completely satisfiable. Just pick any vertex and enumerate all of the $k$ possible labels for it. For each label, inductively derive the unique labels of neighbouring vertices, until labels for the entire graph have been determined. The instance is completely satisfiable if and only if some initial choice of label for the first vertex makes all edges consistent with the derived labels. If the graph is disconnected, then repeat this algorithm for each connected component.

However, when the input instance is not completely satisfiable, there are no known good algorithms for approximating the maximal fraction of simultaneously satisfiable constraints. Indeed, it has been shown that for any $\delta > 0$, there exists a $q$ such that \gap{\frac12}{\delta}\UG($q$) is $\NP$-hard \cite[Theorem I.5]{UGMidGap2018-2}. As a consequence, for a sufficiently large label set, it is impossible to approximate \ug to within any constant factor. The \emph{Unique Games Conjecture (UGC)} is a strengthening of this statement:\thmspace

\begin{conjecture}[Unique Games Conjecture]\label{cnjUGC}
	For all $\varepsilon, \delta > 0$, there exists a positive integer $q$ such that deciding \gap{1-\varepsilon}{\delta}\textup{\UG($q$)} is $\NP$-hard. In other words, for a large enough label set, it is $\NP$-hard to distinguish instances in which at least $1 - \varepsilon$ constraints can be satisfied from instances in which less than $\delta$ constraints can be satisfied.
\end{conjecture}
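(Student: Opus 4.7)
The plan would follow the established paradigm of PCP-based hardness reductions: start from a problem for which $\NP$-hardness of gap approximation is already known (the $\gap{1}{s}\threesat$ instance produced by the PCP theorem, or more commonly a gap version of \labelcover obtained by composing the PCP theorem with parallel repetition), and construct a reduction to $\UG(q)$ that simultaneously (a) preserves the completeness up to a $1 - \varepsilon$ loss and (b) amplifies the soundness down to $\delta$, at the cost of letting the label-set size $q$ grow as a function of $\varepsilon$ and $\delta$. The unique-constraint property is the delicate feature the gadget must enforce, and is what distinguishes this from more routine PCP reductions.

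First I would fix a \labelcover instance with very small soundness, which is standard via parallel repetition. The reduction would then encode each ``super-vertex'' of the \labelcover instance using an error-correcting code (historically the long code on $\{0, 1\}^k$, or in more recent work the Grassmann code on subspaces of $\fm$), and create one \ug variable for every coordinate of the code. The \ug constraints would impose equality, up to a permutation determined by the \labelcover projection, between pairs of code coordinates, so that any labeling of the \ug instance which satisfies many constraints must ``look like'' a dictator inside each code block. Completeness would be immediate from the ``honest'' labeling in which each code block is labeled by the true dictator corresponding to the \labelcover label; the $\varepsilon$ loss comes from noise built into the test to force the no-instance to be far from every dictator.

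The hard step, and the reason this conjecture is still open, is the soundness analysis. One must show that if a labeling of the \ug instance satisfies more than a $\delta$ fraction of the constraints, then inside each code block the labeling must have non-trivial correlation with a small list of coordinates, which can then be decoded back into candidate labels for the original \labelcover instance. The standard machinery is Fourier analysis on the chosen code combined with a dictator-versus-quasirandom test (hypercontractivity, the invariance principle of Mossel--O'Donnell--Oleszkiewicz, or more recently the Grassmann-graph expansion theorems). The central obstacle is producing a sharp enough expansion/pseudorandomness statement for \emph{unique} constraints: the best known partial result in this direction is the resolution of the 2-to-2 Games Conjecture by Khot, Minzer, and Safra, which gives gap $\frac12$ versus arbitrarily small $\delta$, and the earlier $\NP$-hardness of $\gap{\frac12}{\delta}\UG(q)$ cited in the excerpt, but not the $1 - \varepsilon$ vs $\delta$ gap the conjecture demands.

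If I could not push the completeness past $\frac12$, I would retreat to that weaker statement, which already suffices for many downstream inapproximability bounds and, in the FPC setting considered in this thesis, serves as the model for the partial result proved in Chapter \ref{chaUG}. Honestly, absent a genuinely new ingredient that controls the soundness under the full $1 - \varepsilon$ completeness regime, I would expect my attempt to stall exactly where the literature has stalled.
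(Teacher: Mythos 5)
This statement is not a theorem of the paper; it is Khot's Unique Games Conjecture, which the thesis states as Conjecture~\ref{cnjUGC} precisely because it is open, and then uses only as a hypothesis (and, in the FPC-adapted form of Conjecture~\ref{cnjFPCUGC}, as the assumption behind Theorem~\ref{thmFPCUGHardness}). There is no proof in the paper for you to match, and your proposal---correctly---does not prove it either. Your survey of the standard attack (start from a low-soundness \labelcover instance via parallel repetition, encode with a long code or Grassmann code, run a dictatorship test whose soundness is established by Fourier-analytic or Grassmann-expansion machinery) is an accurate account of the state of the art, and your observation that the completeness can currently only be pushed to $\frac12$ rather than $1-\varepsilon$ (the Khot--Minzer--Safra 2-to-2 result, and the $\gap{\frac12}{\delta}\UG(q)$ hardness the paper cites) is exactly the known frontier. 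The one thing worth flagging is that you should not label this a ``proof proposal'' at all: what you have written is an honest explanation of why no proof exists, and the intellectually correct move---which you in fact make in your final paragraph---is to say so plainly rather than to dress the obstacle up as a plan.
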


Unlike the other central open problems in complexity theory, UGC is a conjecture built upon another conjecture. It is not claiming that it is ``impossible" to solve the \ug gap problem; rather, that it is $\NP$-hard. Thus, even if the UGC is proven, one will still not be able to make any unconditional claims about the nonexistence of algorithms until $\PP \neq \NP$ is proved as well. This is one reason it is believed that the UGC will be resolved sooner than other longstanding open questions.

If it is true (as most researchers suspect), many other inapproximability results would immediately follow. Khot, Kindler, Mossel and O'Donnell \cite{KKMO} show that, assuming the UGC, it is $\NP$-hard to approximate \maxcut to within any factor greater than $\alpha_\txt{GW}$. In other words, this seemingly arbitrary constant obtained by the geometric analysis of the Goemans-Williamson algorithm is, in fact, \emph{the} optimal approximation ratio, a fundamental constant for the \maxcut problem. Subsequently, Raghavendra \cite{RagThesis} discovered a general explanation for this surprising fact, proving that, for \emph{every} CSP, there is a polynomial time SDP-based algorithm that gives the optimal approximation ratio assuming the UGC (and $\PP \neq \NP$).

In Chapter \ref{chaUG} we consider a special subclass of \ug instances which we call \gug instances. These are instances with the following additional properties:
\begin{enumerate}[label={(\arabic*)}]
	\item\label{itmGUGVertexLabels} The set of labels is identified with a finite Abelian group $A$.
	\item\label{itmGUGEdgeLabels} For every edge permutation $\pi$, there is some $g \in A$ such that $\pi(x) = g + x$ (we always write the group operation additively). Thus, we can identify the set of edge permutations with $A$ as well.
\end{enumerate}
By restricting the UG optimization problem in this way, one might hope that it is easier to solve. However, it turns out that the problem \maxtwolin($q$), in which $A = \zz / q\zz$, is just as hard as the general UG problem in the sense that there is a gap-preserving reduction from \UG($k$) to \maxtwolin($q$) \cite{KKMO} for arbitrary $k$ and $q$. This reduction implies that, by replacing \UG($k$) by \maxtwolin($k$) we get a conjecture which is logically equivalent to the UGC.


\section{Background on logic}\label{secFPCBackground}

This thesis concerns approximating CSPs not with polynomial time algorithms but with sentences of logic. What does this mean? It turns out that there is a natural way in which a description of a computational problem in a formal logic can be translated into an algorithm for solving it. Conversely, algorithms for solving problems can, in many cases, be translated back into logical sentences describing the problem being solved. Under this correspondence, the resources used by the algorithm (time, space, nondeterminism, etc.) correspond to the operators which are allowed by the logic (different kinds of quantifiers, inductive definitions, etc.). The study of this correspondence is known as \emph{descriptive complexity}, a more detailed and complete discussion of which is the topic of several books; see \cite{DCBook}, for example. Here we only give a brief introduction to the ideas and notation of descriptive complexity that are needed for this thesis.

\subsection{Descriptive complexity and FPC}\label{subWhyFPCMatters}
Sentences of logic and the structures whose properties they describe are all defined with respect to a \emph{vocabulary}, or \emph{signature}, which enumerates special symbols that are necessary to talk about a given problem. All vocabularies we consider are \emph{relational}, meaning the only special symbols allowed are for relations (no function or constant symbols). A relational vocabulary $\tau$ takes the form
$$\tau := \langle R_1^{a_1}, R_2^{a_2}, \dots, R_m^{a_m}\rangle,$$
where $a_i$ is the \emph{arity} of \emph{relation symbol} $R_i$ (the arity is sometimes dropped when obvious from context). A \emph{$\tau$-structure} consists of a nonempty set, called its \emph{universe}, together with relations instantiating, or \emph{interpreting}, as it is commonly called, the symbols of the signature $\tau$. To make a programming analogy, if a vocabulary $\tau$ is a type, then a $\tau$-structure is a term of that type. A $\tau$-structure $\mathbb{A}$ with universe $A$ is written as
$$\mathbb{A} := \langle A, R_1^{\mathbb{A}}, R_2^{\mathbb{A}}, \dots, R_m^{\mathbb{A}} \rangle,$$
where each $R_i^\mathbb{A}$ is a relation over $A$ of arity $a_i$.

For example, the vocabulary of graphs, $\tau_\textup{graph}$, consists of a single binary relation, $E^2$. A graph can be encoded as a $\tau_\textup{graph}$-structure $\mathbb{G} = \langle V, E^\mathbb{G} \rangle$ in which the universe $V$ is the set of vertices and the edge relation symbol $E$ is interpreted by a symmetric relation $E^\mathbb{G}$ consisting of all the edges.

The fundamental question asked in descriptive complexity is, given some vocabulary $\tau$ and some decision problem $\mathcal{P}$ concerning $\tau$-structures, what logic is needed to express $\mathcal{P}$? For example, the problem of deciding whether a graph is 2-regular (meaning all vertices have exactly 2 neighbours) is expressible in \emph{first order (FO) logic}, by the sentence
$$\forall u \exists v_1 \exists v_2 (\neg (v_1 = v_2)) \wedge E(u, v_1) \wedge E(u, v_2) \wedge \forall v_3 \ E(u, v_3) \to (v_3 = v_1 \vee v_3 = v_2).$$

Any property expressible in FO logic can be decided in polynomial time (as a function of the size of the universe of the input structure), but it is not the case that any polynomial time decidable property is expressible in FO logic---for example, there is no sentence of FO logic expressing the property that a graph is connected. It is therefore said that FO logic does not \emph{capture} the complexity class $\PP$. Thus, several extensions to FO logic have been proposed. \emph{LFP} is an extension of FO logic allowing for relational variables to be defined inductively, in such a way that LFP-definable properties can still be computed in polynomial time (see \cite[Chapter 4]{DCBook}). If we assume that the input structure is \emph{ordered}, meaning that the vocabulary comes with a binary relation ``$\leq$" interpreted as a total order on the universe, then LFP captures $\PP$. This important result is known as the Immerman-Vardi Theorem \cite{ImmermanVardi1, ImmermanVardi2}.

Over \emph{unordered} structures, however, LFP is not even able to express the simple property that the input structure has an even number of elements in its universe. For this, we can augment LFP with the ability to quantify over numeric variables (taking values from 0 up to the size of the input), along with \emph{counting quantifiers}, which assert that a certain number of objects satisfy a certain predicate (see \cite[Sec.\txt{} 12.3]{DCBook}). The resulting logic is called \emph{fixed point logic with counting (FPC)}. While FPC-definable properties are decidable in polynomial time, FPC still does not capture $\PP$, but counterexamples are highly nontrivial. Nevertheless, a wide range of powerful algorithmic techniques, including linear and semidefinite programming, \emph{are} expressible in FPC, making it an important logic to study.

Since the order of the elements assumed by the Immerman-Vardi theorem can be completely arbitrary, the only power gained from assuming an order relation is the ability to repeatedly choose arbitrary elements. In the absence of an ordering relation, indistinguishable elements must be treated equally. Thus, FPC encapsulates the idea of symmetric computation. Polynomial time algorithms correspond to FPC sentences only if they do not make arbitrary, symmetry-breaking choices. A non-example is solving systems of linear equations over finite fields. The standard algorithm for this problem is Gaussian elimination, which requires one repeatedly choose a pivot. In the presence of an ordering of the rows and columns of a matrix, one can choose the nonzero entry in the least column of the ordering, breaking ties by choosing the least row. Without an order, any such choice would break the symmetry of the input problem, so the only thing a symmetric algorithm could do is to try \emph{all} possible pivots at each iteration, which would take an exponential amount of time. Indeed, it has been shown that solving systems of linear equations (over finite fields) is not definable in FPC  \cite{SolvingEquationsNotInFPC}. It is difficult to rigorously define exactly what is meant by ``symmetry breaking," though hopefully the intuition is clear. Anderson and Dawar \cite{SymmetricCircuits} give a precise instantiation of this meta-observation defined in terms of symmetric circuits.

\subsection{Finite structures for CSPs}\label{subFiniteStructures}
To represent an unweighted CSP as a relational structure, we use a vocabulary consisting of relation symbols $\seq{P}{m}$, one for each kind of constraint of the problem. An instance $\mathbb{A} = \langle A, P_1^\mathbb{A}, P_2^\mathbb{A}, \dots, P_m^\mathbb{A} \rangle$ has a universe $A$ consisting of the set of variables, where each $P_i^\mathbb{A}$ defines the set of tuples of variables to which the constraint $P^i$ is applied. For example, in \threesat, there are $m = 8$ different kinds of constraints (clauses), all of arity 3, where $P_1$ is for clauses of the form $(x_1 \vee x_2 \vee x_3)$, $P_2$ is for clauses of the form $(x_1 \vee x_2 \vee \overline{x_3})$, and so on. For \ug on a label set of size $q$, there is one constraint $P_\pi$ of arity 2 for each permutation $\pi: [q] \to [q]$. We call this vocabulary\footnote{Note that $\tauuug$ is for \emph{unweighted} \ug instances only, which breaks a notational convention used throughout this thesis: for any CSP $\Lambda$ other than \UG($q$), $\tau_\Lambda$ is the vocabulary of \emph{weighted} instances of $\Lambda$. } $\tauuug$.

To represent a weighted CSP, we first have to introduce some extra machinery to deal with numbers. This definition is loosely based on the structures used by Dawar and Wang \cite{SDPInFPC1} to represent vectors and matrices. We can represent a natural number $n$ as a relational structure $\mathbf{n} = ([b], B^\mathbf{n})$ in the vocabulary $\tau_\nn := \langle B, \leq \rangle$. The universe $[b]$ has size $b = \lceil\log_2(n + 1)\rceil$, $\leq$ is a binary relation interpreted as the usual linear order on $[b]$ (from least significant bits to most significant bits), and $B$ is a unary relation encoding the bit representation of $n$, i.e.,
$$B^\mathbf{n} := \{k \in [b] \suchthat \txt{the $k\tth$ (least significant) bit of $n$ in binary is 1}\}.$$
To represent negative integers, we add a new unary relation symbol $S$ to $\tau_\nn$ to obtain a new vocabulary $\tau_\zz$, where $n$ is positive if and only if $S^\mathbf{n} = \emptyset$. To represent rational numbers, we replace $B$ with unary relation symbols $N$ and $D$ for the numerator and denominator (which act in the same way as $B$) to obtain a new vocabulary $\tau_\qq$.

Let $\Lambda$ be a CSP. To represent a weighted instance of $\Lambda$, we have to combine variables and numbers together. That is, we use what is known as a \emph{two-sorted universe}, in which there are two different kinds of elements, in this case a variable sort $T$ and a number sort $[b]$, where
$$b := \left\lceil \log_2\left(1 + (\txt{max numerator or denominator of any constraint weight})\right) \right\rceil.$$
The vocabulary $\tau_\Lambda$ consists of the usual order relation $\leq$ which is interpreted by the instance $\mathbb{A}$ as a total order on $[b]$ and relation symbols $N_i$, $D_i$ and $S_i$ of arity $r_i + 1$ for each constraint type $i$ of arity $r_i$, where $N_i$ is interpreted as 
\begin{align*}
	N_i^\mathbb{A} := \{(\mathbf{x}, k)&\in T^{r_i} \times [b] \suchthat \txt{the $k\tth$ bit of the numerator of}\\&\txt{ the weight of constraint $i$ applied to tuple $\mathbf{x}$ in binary is 1}\},
\end{align*}
$D_i$ is like $N_i$, but for the denominator, and $S_i^\mathbb{A}(\mathbf{x}, \cdot)$ is empty if and only if the weight of constraint $i$ applied to $\mathbf{x}$ is negative. Having negative weights allows us to consider minimization problems as well as maximization problems, matching the framework for CSPs developed by Raghavendra \cite{RagThesis}; the objective is always to maximize the total weight. Note that the order relation is only imposed on the bit positions, not the variables, so we can still represent unordered structures without breaking symmetry.

\subsection{Table of signatures}
For reference, Table \ref{tabSignatures} lists several important signatures used throughout this thesis. The last two signatures, $\tau_\txt{mat}$ and $\tau_\txt{SDP}$, are introduced in Section \ref{secGWAlgoInFPC}. In all cases, $\leq^2$ is interpreted as an order on bit positions only, not on variables or abstract indices.

\begin{table}[H]\begin{center}\begin{tabular}{|c|c|l|}
	\hline
	Signature & Relation symbols & Used to represent\\
	\hline\hline
	$\tau_\nn$ & $\leq^2$, $B^1$ & Natural numbers\\
	\hline
	$\tau_\zz$ & $\leq^2$, $B^1$, $S^1$ & Integers\\
	\hline
	$\tau_\qq$ & $\leq^2$, $N^1$, $D^1$, $S^1$ & Rationals\\
	\hline
	$\tau_\textup{graph}$ & $E^2$ & Graphs\\
	\hline
	$\tauuug$ & $P_\pi^2$ for $\pi: [q] \to [q]$ & Unweighted UG instances\\
	\hline
	$\tau_\Lambda$ & \thead{$\leq^2$; $N_i^{r_i + 1}$, $D_i^{r_i + 1}$, $S_i^{r_i + 1}$\\\textup{for each constraint}\\\textup{type $i$ of arity $r_i$}} & Weighted $\Lambda$ instances\\
	\hline
	$\tau_{\txt{\maxcut}}$ & $\leq^2$, $N_1^3$, $D_1^3$, $S_1^3$ & Weighted \maxcut instances\\
	\hline
	$\tau_{\txt{mat}}$ & $\leq^2$, $X^3$, $D^3$, $S^3$ & Matrices\\
	\hline
	$\tau_{\txt{SDP}}$ & \thead{$\leq^2$, $X_A^4$, $D_A^4$, $S_A^4$, $X_b^2$,\\ $D_b^2$, $S_b^2$, $X_C^3$, $D_C^3$, $S_C^3$} & Semidefinite programs\\
	\hline
\end{tabular}\end{center}\caption[Table of signatures.]{\label{tabSignatures}%
Table of signatures. Note that $\tau_{\txt{\maxcut}}$ and $\tau_{\txt{mat}}$ are the same up to renaming. We write $X$ instead of $N$ in $\tau_{\txt{mat}}$ just to be consistent with the notation of Dawar and Wang \cite{SDPInFPC1}.}\end{table}

\subsection{Lower bounds for FPC}\label{subPebblingGameDefinition}
To show that a property $\mathcal{P}$ is definable in FPC, we just need to exhibit a single FPC sentence $\phi$ and prove that a structure $\mathbb{A}$ satisfies $\phi$ (written $\mathbb{A} \models \phi$) if and only if $\mathbb{A}$ has property $\mathcal{P}$. Showing that a property is \emph{not} definable in FPC is trickier, since we must argue that no such sentence works. The standard proof technique is to assume, for the sake of contradiction, that there was such a sentence $\phi$ defining $\mathcal{P}$. Then there exists a $k$ such that $\phi$ can be translated into $C^k$, the fragment of \emph{infinitary} FO logic with counting quantifiers consisting of (possibly infinite) sentences with only $k$ variables \cite{InfinitaryTranslation}. We denote the minimum such $k$ by $\mu(\phi)$. To show the contradiction, we construct a pair of structures $\mathbb{A} = \mathbb{A}_k$ and $\mathbb{B} = \mathbb{B}_k$ such that $\mathbb{A}$ has property $\mathcal{P}$ but $\mathbb{B}$ does not, yet \emph{any sentence of $C^k$ cannot distinguish $\mathbb{A}$ from $\mathbb{B}$}, in the sense that $\mathbb{A}$ satisfies any $C^k$ sentence if and only if $\mathbb{B}$ does. When this is the case, we write $\mathbb{A} \equiv_{C^k} \mathbb{B}$.

There is a useful characterization of the relation $\equiv_{C^k}$ in terms of a game between two players, Spoiler and Duplicator, called the \emph{$k$-pebble bijective game}. The board on which they play consists of the universe $A$ of structure $\mathbb{A}$ and the the universe $B$ of structure $\mathbb{B}$. Spoiler's objective is to prove that the structures are different, while Duplicator's objective is to pretend that they are the same. There are $k$ pairs of \emph{pebbles}, initially not placed anywhere. Throughout the game, the pairs of pebbles will be placed on elements of the two universes, one pebble in each universe. Each round of the game consists of three parts:
\begin{enumerate}[label={(\arabic*)}]
	\item\label{itmPebbleGame1} Spoiler picks up one of the $k$ pairs of pebbles, removing them from the board.
	\item\label{itmPebbleGame2} Duplicator gives a bijection $f: A \to B$ such that, for all $1 \leq i \leq k$, if the $i\tth$ pebble pair is placed on some pair of elements $a_i \in A$, $b_i \in B$, then $f(a_i) = b_i$.
	\item\label{itmPebbleGame3} Spoiler places the pebbles back down, placing one pebble on some $a \in A$ and the other pebble on $f(a) \in B$.
\end{enumerate}
At the end of a round, Spoiler wins if the map sending each pebbled element in $A$ to its correspondingly-pebbled element in $B$ is not a \emph{partial isomorphism} between the two structures, i.e., there is some relation in one of the two structures that holds of a set of pebbled elements, but the corresponding relation does not hold in the other structure of the correspondingly-pebbled elements. If Spoiler is unable to win the game in any finite number of moves, then Duplicator wins.\thmspace

\begin{theorem}[Hella \cite{BijectiveGame}]
	Duplicator wins the $k$-pebble bijective game played on $\mathbb{A}$ and $\mathbb{B}$ if and only if $\mathbb{A} \equiv_{C^k} \mathbb{B}$.
\end{theorem}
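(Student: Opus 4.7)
The plan is to prove both directions simultaneously by introducing a round-bounded version of the game and matching it, via induction on the number of rounds, with agreement on $C^k$-formulas of bounded quantifier rank; the full statement then follows by passing to the limit.

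First I would define, for each $r \in \nn$, the $r$-round $k$-pebble bijective game (identical to the one in the statement but terminating after $r$ rounds with a Duplicator win if Spoiler has not yet won) and the relation $\mathbb{A} \equiv_{C^k}^r \mathbb{B}$ meaning that $\mathbb{A}$ and $\mathbb{B}$ satisfy exactly the same $C^k$ sentences of quantifier rank at most $r$. More generally I would carry along a \emph{position}, i.e.\ a partial map sending $k$ pebble slots to pairs $(a_i, b_i) \in A \times B$, and say a position is \emph{safe at depth $r$} if the induced correspondence is a partial isomorphism and Duplicator can survive $r$ more rounds starting from it. The target equivalence is then: a position is safe at depth $r$ iff for every $C^k$-formula $\varphi(\bar{x})$ of quantifier rank at most $r$ whose free variables sit in the pebbled slots, $\mathbb{A} \models \varphi(\bar{a})$ iff $\mathbb{B} \models \varphi(\bar{b})$.

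The base case $r = 0$ amounts to the definition of partial isomorphism: quantifier-free $C^k$-formulas on the pebbled tuple express precisely the atomic type, so safety at depth $0$ coincides with the pebble-correspondence being a partial isomorphism. For the inductive step I would separate the two directions. If the position is safe at depth $r + 1$, then for every Spoiler move of type \ref{itmPebbleGame1}--\ref{itmPebbleGame3} I must exhibit a good Duplicator bijection. The key is the counting quantifier: for each $C^k$-formula $\psi(\bar{x}, y)$ of rank $\leq r$ and each $n$, the sentence $\exists^{\geq n} y\, \psi$ agrees on the two structures, which means the cardinality of $\{a \in A : \mathbb{A} \models \psi(\bar{a}, a)\}$ equals that of the corresponding set in $\mathbb{B}$. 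Partitioning $A$ and $B$ by their $(r,k)$-types relative to the pebbled tuple therefore yields blocks of equal size on the two sides, and stitching together arbitrary bijections between matching blocks produces Duplicator's required bijection $f$; by construction, any element Spoiler picks has the same depth-$r$ type as $f(a)$, so the new position is safe at depth $r$. Conversely, if the position is not safe at depth $r+1$, then some rank-$(r+1)$ formula $\exists^{\geq n} y\, \psi(\bar{x}, y)$ distinguishes the structures, the two $\psi$-defined sets have different cardinalities, no bijection can preserve depth-$r$ types, and Spoiler picks up the slot corresponding to $y$, responds to any Duplicator bijection by locating an element whose image has a different depth-$r$ type, and wins within $r + 1$ rounds by the inductive hypothesis.

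Finally I would pass from the round-based game to the full game: a standard K\"onig-style argument shows that Duplicator has a winning strategy in the unbounded game iff they have one in the $r$-round game for every $r$, since the game tree is finitely branching at Spoiler's moves once we restrict to a representative bijection per type-pattern. On the logic side, infinitary $C^k$ is closed under arbitrary conjunctions of its rank-$r$ fragments, so $\mathbb{A} \equiv_{C^k} \mathbb{B}$ iff $\mathbb{A} \equiv_{C^k}^r \mathbb{B}$ for all $r$. Combining the two equivalences yields the theorem. The main obstacle I anticipate is handling the counting quantifier cleanly: one has to verify that ``same multiset of $C^k$-types under a partial isomorphism'' is precisely what a bijection respecting all rank-$r$ type classes buys, and that the number of such type classes is finite (which follows from having only $k$ variables and finitely many relation symbols), so that the partition-and-match argument actually produces a total bijection rather than merely a local matching.
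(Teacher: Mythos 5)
The paper cites Hella's theorem without proof; there is no ``paper's own proof'' to compare against. Your proposal is, in substance, the standard back-and-forth/type-counting argument by which this theorem is established in the literature, and the main ideas are all present: partitioning into quantifier-rank-$r$ $C^k$-types relative to the pebbled tuple, using counting quantifiers to force equal block sizes, stitching per-block bijections, and the converse in which a distinguishing $\exists^{\geq n} y\,\psi$ hands Spoiler a winning move.

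Two things are worth tightening. First, there is a genuine wording slip in the inductive step: you write ``If the position is safe at depth $r+1$, then \dots I must exhibit a good Duplicator bijection,'' but that direction has nothing to exhibit (safety already \emph{means} Duplicator has the bijections). What you actually proceed to prove is the converse direction, namely that agreement on rank-$\leq r{+}1$ formulas yields the required bijection via the type-count equality, hence safety. The other direction (safety implies agreement) should be argued separately: Duplicator's bijection, combined with the inductive hypothesis applied to each child position, shows that every element and its image have the same rank-$r$ type, so the bijection respects the type partition and the counting formulas agree. Second, the closing ``K\"onig-style'' step deserves a word of care. For finite structures (the only case used in this thesis) it is cleaner and safer to observe that the number of $C^k$-types over a structure with $k$ variables is finite and stabilizes at some finite rank $r_0$, so $\equiv_{C^k}$ coincides with $\equiv_{C^k}^{r_0}$ and the unbounded game reduces to a bounded one by a straightforward least-fixed-point argument on the finite position space; a K\"onig/compactness argument is both unnecessary and, if one allows infinite structures and genuinely infinitary $C^k$ with ordinal quantifier rank, insufficient as stated. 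Neither issue is fatal, but both should be fixed for the argument to read correctly.
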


So, to show that two structures are indistinguishable, and thus that $\phi$ does not express $\mathcal{P}$, we just need to present a winning strategy for Duplicator.

\subsection{Interpretations}\label{subInterpretations}
So far we have only discussed sentences of logic acting as algorithms for decision problems. If a logical sentence corresponds to a Turing machine, then whether a structure satisfies the sentence corresponds to whether the Turing machine accepts the encoding of that structure. For some applications, however, it is useful to consider Turing machines which output something more complicated than ``accept" or ``reject." The logical analogue of such a machine is called an \emph{interpretation}.

To construct an interpretation $\Theta$, suppose we are given an input structure $\mathbb{A}$ in some signature $\sigma$, and wish to define the output $\mathbb{B} = \Theta(\mathbb{A})$ in some potentially different signature $\tau$. First we must define the universe of $\mathbb{B}$ in terms of the universe of $\mathbb{A}$. This can be done by taking the universe of $\mathbb{B}$ to be the set of $d$-tuples of elements of $\mathbb{A}$ satisfying some FPC formula of $d$ free variables, written in the vocabulary $\sigma$. By choosing $d$ large enough, we can define universes of size up to $n^d$, where $n$ is the size of the input structure. Next, we must define each relation symbol appearing in $\tau$. For a symbol $R_i$ of arity $a_i$, we must define on which $a_i$-tuples of elements of $\mathbb{B}$, i.e., $a_i$-tuples of $d$-tuples of elements of $\mathbb{A}$, the relation $R_i^\mathbb{B}$ holds. This can be accomplished via a FPC formula of $da_i$ free variables, again written in the vocabulary $\sigma$, where we take the relation to hold if and only if the formula is satisfied.

Thus, a \emph{$d$-ary FPC-interpretation of $\tau$ in $\sigma$} is defined by a finite sequence of FPC formulas in the vocabulary $\sigma$, as outlined above \cite[Sec.\txt{} 2.2]{DefinableInapproximabilityJournal}. For an interpretation $\Theta$, we define $\mu(\Theta)$ to be the maximum value of $\mu(\phi)$ for any formula $\phi$ of $\Theta$.

Just as polynomial time reductions can be used to transfer computational hardness results from one problem to another, interpretations can transfer logical inexpressibility results from one problem to another: if some property $\mathcal{P}$ of $\sigma$-structures is \emph{not} definable in FPC, and there is an interpretation $\Theta$ of $\tau$ in $\sigma$ such that a $\sigma$-structure $\mathbb{A}$ has property $\mathcal{P}$ if and only if the $\tau$-structure $\Theta(A)$ has property $\mathcal{P'}$, then $\mathcal{P'}$ is not definable in FPC either.

Interpretations also give us a useful way to define what it means to ``solve" an optimization problem. Recall that, for a CSP $\Lambda$, $\tau_\Lambda$ is the vocabulary of weighted instances of $\Lambda$, and $\tau_\qq$ is the vocabulary of rational numbers. By an \emph{FPC-definable algorithm} for $\Lambda$ we mean an interpretation $\Theta$ of $\tau_\qq$ in $\tau_\Lambda$ such that, for any $\tau_\Lambda$-structure $\mathbb{A}$, the optimal value of $\mathbb{A}$ is equal to $\Theta(\mathbb{A})$.


\section{Definable inapproximability}\label{secDefinableInapproximabilityReview}

We are interested not just in solving CSPs exactly in FPC, but in approximating them. Bringing together our earlier definitions of FPC-definable algorithm and approximation algorithm, we say that, for a CSP $\Lambda$ and for $0 \leq \alpha \leq 1$, an \emph{FPC-definable $\alpha$-approximation algorithm} for $\Lambda$ is an FPC-interpretation $\Theta$ of $\tau_\qq$ in $\tau_\Lambda$ such that, for any $\tau_\Lambda$-structure $\mathbb{A}$ of optimal value $x^*$, $\alpha x^* \leq \Theta(\mathbb{A}) \leq x^*$.

In a recent paper, Atserias and Dawar \cite{DefinableInapproximabilityJournal} prove the first (to the author's knowledge) inapproximability result for FPC. Their main construction is a pair of \threexor instances (like \threesat except with XORs in place of ORs between literals) $\mathbb{A}_k$ and $\mathbb{B}_k$, for any $k$, such that $\mathbb{A}_k$ is completely satisfiable, $\mathbb{B}_k$ is only $\frac12 + \delta$ satisfiable (for arbitrarily small $\delta$), but $\mathbb{A}_k \equiv_{C^k} \mathbb{B}_k$. As a consequence, there is no FPC-definable $\alpha$-approximation algorithm for $\alpha > \frac12$, for if there was such an FPC-interpretation $\Theta$, if we let $\delta$ be such that $\frac12 + \delta < \alpha$ and let $k := \mu(\Theta)$, we would necessarily have $\Theta(\mathbb{A}_k) = \Theta(\mathbb{B}_k)$, which contradicts the requirements that $\Theta(\mathbb{A}_k) \geq \alpha n$ and $\Theta(\mathbb{B}_k) \leq (\frac{1}{2} + \delta) n$ (where $n$ is the total number of constraints). This is analogous to showing that a gap problem is hard---in this case, $s = \frac12 + \delta$ and $c = 1$, so the gap ratio is $(\frac12 + \delta) / 1 = \frac12 + \delta$. Atserias and Dawar then show that several existing gap-preserving reductions from \threexor to other problems could be cast as FPC-interpretations, resulting in FPC inapproximability bounds for \threesat, \vc and \labelcover.


\chapter{Approximating constraint satisfaction problems in FPC}\label{chaCSPs}

In this section we prove an FPC-analogue of Raghavendra's result \cite{RagThesis} that, assuming the UGC, the optimal approximation algorithm for any CSP is obtained by rounding a specific SDP relaxation. The proof consists of verifying, firstly, that Raghavendra's general polynomial time algorithm is definable as an FPC-interpretation of $\tau_\qq$ in $\tau_\Lambda$, and secondly, that Raghavendra's reduction from \ug to $\Lambda$ is definable as an FPC-interpretation of $\tau_{\Lambda}$ in $\tauuug$. We begin by discussing the special case of \maxcut.


\section{An FPC sentence approximating \maxcut}\label{secGWAlgoInFPC}

To translate the Goemans-Williamson algorithm into an FPC-interpretation, we must first understand how to translate its core subroutine: solving a semidefinite program. This is studied by Dawar and Wang \cite{SDPInFPC1}, who define a vocabulary $\tau_\txt{SDP}$ for SDP instances and an FPC-interpretation which approximately solves them. Like the vocabulary $\tau_\Lambda$ for weighted CSPs, SDPs are defined over a multi-sorted universe, with an unordered sort for indexing the rows and columns of matrices, another unordered sort for indexing the constraints and, finally, an ordered sort for representing numbers in binary. There are 10 relation symbols,
$$\tau_{\txt{SDP}} := \langle \leq^2, X_A^4, D_A^4, S_A^4, X_b^2, D_b^2, S_b^2, X_C^3, D_C^3, S_C^3 \rangle,$$
encoding the constraint matrices $\{A_k\}$, corresponding constraint vectors $\{b_k\}$ and objective matrix $C$. The $X$ relations encode the numerators, the $D$ relations encode denominators, and the $S$ relations encode signs. For example, if a tuple $(k, i, j, m)$ is in the relation $D_A^\mathbb{A}$ it means that, in the $k\tth$ constraint matrix of SDP $\mathbb{A}$, the $m\tth$ bit of the numerator of the entry at row $i$, column $j$ is a 1. If the unary relation $S_C^\mathbb{A}(i, j, \cdot)$ is nonempty, it means that the entry of the objective matrix of $\mathbb{A}$ at row $i$, column $j$ is negative. As usual, $\leq$ encodes the total order on the bit sort.

The output of an SDP solver is the matrix of optimal variable values, which is encoded in the vocabulary
$$\tau_{\txt{mat}} := \langle \leq^2, X^3, D^3, S^3 \rangle,$$
similarly as in the encoding of $C$.\thmspace
\begin{theorem}[Dawar and Wang \cite{SDPInFPC1}]\label{thmSDPInFPC}
	There is an FPC-interpretation $\Phi$ of $\tau_\txt{mat}$ in $\tau_\txt{SDP} \ \dot{\cup} \ \tau_\qq$ such that, given a bounded and feasible SDP $\mathbb{A}$ (encoded as a $\tau_{\txt{SDP}}$-structure) and some $\delta > 0$ (encoded as a $\tau_\qq$-structure), $\Phi(\mathbb{A}, \delta)$ encodes a matrix $X$ which is within $\delta$ of a feasible solution to $\mathbb{A}$ (e.g., in the $L^2$-norm), and has value within $\delta$ of an optimal solution.
\end{theorem}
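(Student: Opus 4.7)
The plan is to implement a classical polynomial-time SDP solver (the ellipsoid method, or equivalently a matrix multiplicative-weights update) as an FPC-interpretation, taking care that every internal step is definable without any arbitrary, symmetry-breaking choice. The trajectory of the ellipsoid method is determined entirely by the input data together with a separation oracle, neither of which intrinsically requires picking pivots or orderings, so on its face there is hope that a symmetric implementation exists.

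First I would build a toolkit of FPC-definable arithmetic over $\tau_\qq$: addition, subtraction, multiplication, division with controlled truncation, comparison, and integer square roots to any desired precision. Since the bit-position sort comes equipped with an order $\leq$, each of these reduces to the standard binary algorithms, which are expressible by fixed-point inductions. Lifting to matrices, the entries of $A + B$, $AB$, $A^\top$, the trace, and inner products $\langle A, B \rangle := \sum_{i, j} A_{i, j} B_{i, j}$ are all definable by FPC formulas summing over the unordered index sort via counting quantifiers. With these primitives, each linear constraint $\langle A_k, X \rangle \leq b_k$ is directly checkable and the objective $\langle C, X \rangle$ directly computable.

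The main obstacle is the separation oracle for the semidefiniteness constraint $X \succeq 0$: given a candidate $X$, we must either approximately certify PSDness or produce a violating direction $v$ giving a separating hyperplane $\langle vv^\top, Y \rangle \geq 0$. The natural algorithm here, incomplete Cholesky decomposition, is exactly the kind of procedure that breaks symmetry, since it iteratively selects pivots over the unordered index sort. I would avoid this by working at the level of whole matrices via a matrix-exponential construction: for sufficiently large $\lambda, \tau$, the symmetric matrix $M := e^{\tau(\lambda I - X)}$ concentrates on the bottom eigenspace of $X$, and $M$ itself---rather than any single extracted eigenvector---can be used to drive the update in the manner of the matrix multiplicative-weights framework. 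Since $e^{\tau N}$ is approximated to any desired accuracy by a truncated Taylor polynomial in $N$, and matrix polynomials are FPC-definable via the primitives above, the whole separation step is expressible in FPC. The precision analysis---bounding the Taylor truncation order, the bit-length of intermediate quantities, and the number of outer iterations all by polynomials in the encoding size of $\mathbb{A}$ and in $\log(1/\delta)$---is routine but must be carried out carefully so that the entire computation fits within the bounds that an FPC-interpretation imposes.

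Finally I would assemble these pieces into a single interpretation $\Phi$: a fixed-point construction iterates the ellipsoid (or multiplicative-weights) update, shrinking an enclosing ellipsoid whose center is the current iterate, until a termination condition indexed by $\delta$ is met; the final iterate is read off as the output matrix $X$ and encoded as a $\tau_\txt{mat}$-structure. Boundedness and feasibility of $\mathbb{A}$ exactly guarantee that the method terminates in polynomially many iterations at an $X$ that is within $\delta$ of feasibility and within $\delta$ of the optimum, as required by the statement.
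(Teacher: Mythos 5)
This statement is Theorem~\ref{thmSDPInFPC} in the thesis, which is not proved there: it is cited verbatim from Dawar and Wang \cite{SDPInFPC1} and used as a black box. So there is no ``paper's own proof'' to compare your attempt against; the only way to judge it is on its own merits and against what Dawar and Wang actually do in their paper.

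On the merits, your sketch has the right shape and identifies the genuine crux: solving LPs and SDPs in FPC comes down to running a symmetric variant of the ellipsoid method, and the only nontrivially symmetric ingredient for SDP is the separation oracle for the PSD cone, where a pivot-based factorisation such as Cholesky is forbidden. Your observation that one must work with a whole-matrix spectral object rather than extracting a single eigenvector is exactly the key idea. The more standard symmetric choice is the spectral projection onto the negative eigenspace of $X$ (a canonical object since it does not depend on a basis choice inside degenerate eigenspaces, and is definable by evaluating polynomials in $X$ with coefficients drawn from the characteristic polynomial), rather than the matrix exponential $e^{\tau(\lambda I - X)}$; either can work, but your writeup wavers between two different outer loops --- the ellipsoid method, which consumes a separating linear functional, and matrix multiplicative weights, which consumes a density matrix --- without committing to a single convergence argument, and the two have quite different precision and iteration-count analyses. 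You should also note that the FPC-definability of the ellipsoid method itself is not free: it relies on the prior result of Anderson, Dawar and Holm that linear programming is FPC-definable, which involves a careful ``folding'' argument to keep the maintained ellipsoid symmetric under input automorphisms. Any self-contained proof of Theorem~\ref{thmSDPInFPC} needs to either reprove or explicitly invoke that machinery; as written your proposal treats the ellipsoid iteration as straightforwardly definable, which elides a substantial part of the difficulty.
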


Notice how it is crucial that the index sets are unordered. If the entire universe was ordered, then the statement would follow immediately from the Immerman-Vardi Theorem, but would be useless. For example, in the SDP for \maxcut, the rows and columns of the matrices, as well as the linear constraints, correspond to vertices. If there was a relation symbol in $\tau_\txt{SDP}$ which must encode the order on these rows, columns and constraints, then, to define the SDP from the \maxcut instance we would have to define that order, which is impossible if the vertices are not ordered to begin with. Since there is no such order on the index and constraint sorts, we may simply define the row sort to be the vertex set, and so on.

Setting up the rest of the \maxcut SDP is easy, but tedious, so here we just go through one example. Consider the task of defining the relations $X_C$ and $D_C$. Recalling the SDP for \maxcut defined in Section \ref{subSDPAndGW}, for $i < j$, the coefficient in row $i$, column $j$ of the objective matrix is $-\frac{1}{2}w_{ij}$. Thus, an arbitrary entry $(i, j, m)$ is in the relation $X_C$ (respectively, $D_C$) if and only if the $m\tth$ bit of the numerator (respectively, denominator) of $-\frac{1}{2}w_{ij}$ is a 1. Recall that, in the encoding for \maxcut instances, the weights are encoded as ternary relations $N_1$ and $D_1$ expressing the numerator and denominator in binary. Since multiplying by $\frac{1}{2}$ is the same as shifting the bits of the denominator up by 1, we may define $X_C(i, j, m)$ to be true if and only if
\begin{equation}\label{equInterpretationFormulaExample1}
	N_1(i, j, m)
\end{equation}
is true, and define $D_C(i, j, m)$ be true if and only if $D_1(i, j, m - 1)$ is true, i.e.,
\begin{equation}\label{equInterpretationFormulaExample2}
	\exists m_1\ m_1 \leq m \wedge (\neg \exists m_2\ m_1 \leq m_2 \wedge m_2 \leq m) \wedge D_1(i, j, m_1)
\end{equation}
is true. In this case, both (\ref{equInterpretationFormulaExample1}) and (\ref{equInterpretationFormulaExample2}) are FO formulas; all that we require is that they be FPC formulas. More complicated arithmetical operations can be translated into FPC formulas as well \cite[Sec.\txt{} 3.3]{HolmThesis}, so we ignore these details hereafter.

Thus, we have an interpretation of $\tau_\txt{SDP}$ in $\tau_{\txt{\maxcut}}$, which we can compose with the interpretation $\Phi$ from Theorem \ref{thmSDPInFPC} to obtain an approximately optimal solution matrix $X$. The next steps of the Goemans-Williamson algorithm are to find a matrix $B$ such that $X = B^\top B$ and pick a random hyperplane $h$. From $h$ and $B$, we would then be able to define the two sets of the cut, and from that, the value of the cut. The first difficulty is that the incomplete Choleski decomposition algorithm for extracting $B$ contains symmetry-breaking steps, as it is similar to Gaussian elimination. The difficulty runs even deeper though. In fact, the whole approach to these latter steps of the algorithm is unattainable in FPC, since merely defining a cut at some intermediate step would break symmetry. For instance, if the input is a set of size $2n$ containing edges of nonzero weight only between $n$ disjoint pairs of vertices, then there are $2^n$ optimal cuts in the graph with automorphisms taking any one to any other. Since FPC-interpretations must respect automorphisms of the input structure, if some sentence of FPC was able to select one of these cuts, it would have to simultaneously select \emph{all} of them. This is impossible, since the output of an FPC-interpretation necessarily has polynomial size.

Therefore, we must compute the optimal value \emph{without ever computing a specific cut}, or even computing a specific valid matrix $B$, for that matter. While we may not be able to compute in FPC the exact value returned by the algorithm, we can at least use Goemans' and Williamson's analysis to bound it. As mentioned in Section \ref{subSDPAndGW}, the expected value of the cut returned by the algorithm is at least
$$\frac{\alpha_\txt{GW}}{2}\sum_{i < j}w_{ij}(1 - v_i \cdot v_j) = \frac{\alpha_\txt{GW}}{2}\sum_{i < j}w_{ij}(1 - X_{i, j}).$$
Fortunately, this quantity \emph{is} definable in FPC, since we have already constructed FPC definitions for $w_{i, j}$ (part of the input) and $X_{i, j}$ (coming from the interpretation $\Phi$ of Theorem \ref{thmSDPInFPC}). The rest is just simple arithmetic, so each bit can be defined by an FPC formula.

Thus, we have shown the following result, which completely parallels Goemans' and Williamson's result for polynomial time computation.\thmspace
\begin{theorem}\label{thmMaxCutInFPC}
	For any $\varepsilon > 0$, there is an FPC-definable\\ $(\alpha_\txt{GW} - \varepsilon)$-approximation algorithm for \maxcut.
\end{theorem}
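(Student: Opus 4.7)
The plan is to compose three FPC-interpretations. First, construct an interpretation $\Psi$ of $\tau_\txt{SDP}$ in $\tau_{\txt{\maxcut}}$ encoding the Goemans--Williamson relaxation from Section \ref{subSDPAndGW}: the row/column sort and the constraint sort of $\Psi(\mathbb{A})$ are both the (unordered) variable sort of $\mathbb{A}$, the entries of the objective matrix are $-\frac{1}{2}w_{i,j}$ and are FPC-definable from the weight relations $N_1, D_1, S_1$ exactly as sketched in (\ref{equInterpretationFormulaExample1}) and (\ref{equInterpretationFormulaExample2}), and the per-vertex constraints $X_{i,i} = 1$ are constant FO-definable gadgets. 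Crucially, Theorem \ref{thmSDPInFPC} does not require the index sorts to be ordered, so no symmetry-breaking choice is ever made.

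Second, fix a rational $\alpha'$ with $\alpha_\txt{GW} - \frac{\varepsilon}{2} < \alpha' < \alpha_\txt{GW}$ and a rational $\delta > 0$ whose value will be pinned down at the end. Compose $\Psi$ with the Dawar--Wang interpretation $\Phi$ of Theorem \ref{thmSDPInFPC} at precision $\delta$ to obtain a $\tau_\txt{mat}$-structure encoding a rational matrix $X$ that is within $\delta$ of feasible and whose objective value is within $\delta$ of the SDP optimum.

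Third, and this is the key idea already flagged in the paragraphs preceding the theorem: we must not attempt to extract a Choleski factor $B$ or to round to an explicit cut, since any single cut or factorization would break symmetries of $\mathbb{A}$ and is therefore unavailable in FPC. Instead we simply output the scalar
$$r(\mathbb{A}) := \frac{\alpha'}{2}\sum_{i < j} w_{i,j}(1 - X_{i,j}),$$
which is already exactly the Goemans--Williamson lower bound on the expected value of their randomized rounding. Since $w_{i,j}$ and $X_{i,j}$ are both FPC-definable rationals, the bit representation of $r(\mathbb{A})$ is FPC-definable by the standard translation of rational arithmetic \cite[Sec.\txt{} 3.3]{HolmThesis}, yielding the required interpretation $\Theta$ of $\tau_\qq$ in $\tau_{\txt{\maxcut}}$.

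It remains to check correctness. The lower bound $r(\mathbb{A}) \geq \alpha'(x^* - \delta)$ is immediate: $\frac{1}{2}\sum_{i<j} w_{i,j}(1 - X_{i,j})$ is within $\delta$ of the SDP optimum, which dominates the \maxcut optimum $x^*$. The upper bound is the content of the Goemans--Williamson analysis applied to a genuinely feasible $X$: $\frac{\alpha_\txt{GW}}{2}\sum w_{i,j}(1 - X_{i,j})$ is at most the expected value of the random-hyperplane cut, which in turn is at most $x^*$; since $\alpha' < \alpha_\txt{GW}$ and $X$ is only $\delta$-close to feasible, this gives $r(\mathbb{A}) \leq x^*$ modulo an additive $O(\delta)$ slack. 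The principal technical nuisance, rather than a deep obstacle, is splitting the $\varepsilon$ budget so as to absorb simultaneously the irrational-constant gap $\alpha_\txt{GW} - \alpha'$ and the $O(\delta)$ slack due to approximate feasibility; choosing $\alpha'$ first, and then $\delta$ small enough relative to $\alpha_\txt{GW} - \alpha'$ (and if desired clipping the output by a negligible amount to enforce $r(\mathbb{A}) \leq x^*$ strictly), delivers the claimed $(\alpha_\txt{GW} - \varepsilon)$-approximation.
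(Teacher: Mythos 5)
Your proof follows essentially the same route as the paper: define an FO interpretation sending a $\tau_{\txt{\maxcut}}$-structure to the Goemans--Williamson SDP in $\tau_\txt{SDP}$, compose with the Dawar--Wang interpretation $\Phi$ of Theorem \ref{thmSDPInFPC}, and then, rather than decompose $X$ or round to a cut (which would break symmetry), output the GW lower bound on the expected cut value directly as a rational. The one place you are slightly more careful than the paper is in replacing the irrational constant $\alpha_\txt{GW}$ by a rational $\alpha'$ slightly below it and explicitly splitting the $\varepsilon$ budget between the $\alpha_\txt{GW} - \alpha'$ gap and the $O(\delta)$ feasibility slack; the paper folds both of these into a single ``by continuity'' appeal.
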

The $-\varepsilon$ factor comes from the fact that, in applying Theorem \ref{thmSDPInFPC}, we cannot solve the SDP exactly, but only up to an additive $\delta$. By choosing $\delta$ sufficiently small, we can ensure by continuity that the approximation ratio is at least $\alpha_\txt{GW} - \varepsilon$.


\section{General algorithm for CSPs}\label{secRagAlgo}

The translation into FPC of Raghavendra's \cite{RagThesis} general approximation algorithm for any CSP is similar in essence to that of \maxcut. We first define an SDP from a given instance, then apply Theorem \ref{thmSDPInFPC}, then extract the optimal value. Throughout the remainder of this chapter, we make the simplifying assumption (as Raghavendra does) that CSP instances are normalized so that the sum of all weights is in $[-1, 1]$.

Given an instance $I$ of CSP $\Lambda$, with variable set $\mathcal{V}$, domain $[q]$ and constraint set $\mathcal{P}$, Raghavendra's algorithm defines and solves an SDP called the \emph{\LC relaxation} \cite[Sec.\txt{} 4.5]{RagThesis}. For any constraint $P \in \mathcal{P}$, let $\mathcal{V}(P)$ denote the set of variables appearing in constraint $P$, and $\weight_I(P)$ denote the weight of that constraint. The variables of the SDP consist of a set of $(q \cdot \abs{\mathcal{V}})$-dimensional vectors,
$$\{\mathbf{b}_{i, a} \suchthat i \in \mathcal{V},\ a \in [q]\},$$
and a set of probability distributions over local assignments of variables within each constraint,
$$\{\mu_P \suchthat P \in \mathcal{P}\}.$$
That is, each $\mu_P$ variable is a distribution over $[q]^{\mathcal{V}(P)}$. The \LC relaxation\footnote{Raghavendra's original \LC relaxation looks slightly different because it is written using a more general notation, in which constraints are arbitrary ``payoff" functions from assignments to values in $[-1, 1]$. The SDP written here is what results when constraints are merely ``satisfied" or ``unsatisfied," with satisfied constraints yielding payoffs equal to their weights. } is as follows:

\begin{align*} 
	\textbf{Maximize }\ \ \ & \sum_{P \in \mathcal{P}}\ \weight_I(P) \underset{f \sim \mu_P}{\mathbb{P}} \{f \txt{ satisfies } P\}\\
	\textbf{subject to }\ \ \ & \langle \mathbf{b}_{i, a}, \mathbf{b}_{j, b} \rangle = \underset{f \sim \mu_P}{\mathbb{P}}\{f(i) = a,\ f(j) = b\} \push\hspace{3.44cm}\forall P \in \mathcal{P},\ i, j \in \mathcal{V},\ a, b \in [q]\\
	& \mu_P \in \triangle([q]^{\mathcal{V}(P)}) \hspace{2cm} \forall P \in \mathcal{P}
\end{align*}

At first, the \LC relaxation may look like an ordinary quadratic program, yet it is implicitly an SDP. To show that Raghavendra's algorithm can be defined in FPC, however, we have to put this SDP into the explicit form required by Theorem \ref{thmSDPInFPC}.

First observe that the probability distributions $\mu_P$ can be defined as sets of numbers 
$$\{\mu_P(f) \suchthat P \in \mathcal{P},\ f: \mathcal{V}(P) \to [q]\}$$
summing to 1. Thus, we may rewrite the \LC relaxation as:

\begin{align}\label{equLC2} 
	\textbf{Maximize }\ \ \ & \sum_{P \in \mathcal{P}}\ \sum_{\substack{f: \mathcal{V}(P) \to [q]\\\txt{satisfying } P}} \weight_I(P)\mu_P(f)\\
	\textbf{subject to }\ \ \ & \langle \mathbf{b}_{i, a}, \mathbf{b}_{j, b} \rangle = \sum_{\substack{f: \mathcal{V}(P) \to [q]\\f(i) = a,\ f(j) = b}} \mu_P(f) && \forall P \in \mathcal{P},\ i, j \in \mathcal{V},\ a, b \in [q]\nonumber\\
	& \mu_P(f) \geq 0 && \forall P \in \mathcal{P}.\ f: \mathcal{V}(P) \to [q]\nonumber\\
	& \sum_{f: \mathcal{V}(P) \to [q]} \mu_P(f) = 1 && \forall P \in \mathcal{P}\nonumber
\end{align}

Since there are two kinds of variables, vectors and scalars, we take the variable matrix $X$ to be block-diagonal, where the first block has rows and columns indexed by $\mathcal{V} \times [q]$, with $X_{(i, a), (j, b)}$ representing the inner product $\langle \mathbf{b}_{i, a}, \mathbf{b}_{j, b} \rangle$. The second block is indexed by $\prod_{P \in \mathcal{P}} [q]^{\mathcal{V}(P)}$, where each diagonal entry $X_{(P, f), (P, f)}$ represents $\mu_P(f)$ and off-diagonal entries are zero. All of the constraints can then easily be written as linear constraints on entries of $X$. Also, observe that $X$ is semidefinite if and only if both blocks are. Since the second block is diagonal and all entries are nonnegative anyway, it is always semidefinite, so $X$ is semidefinite if and only if the first block is, which happens if and only if there exist vectors $\mathbf{b}_{i, a}$ for each $i \in \mathcal{V}$, $a \in [q]$, such that $X_{(i, a), (j, b)} \equiv \langle \mathbf{b}_{i, a}, \mathbf{b}_{j, b} \rangle$. Thus, we indeed have a semidefinite program.

To define an interpretation of $\tau_\txt{SDP}$ in $\tau_\Lambda$, the first step is to define the universe of the index sort in $\tau_\txt{SDP}$ (the bit sort and constraint sort universes must be defined as well, but they are much easier, so we ignore them) in terms of the universe of $\tau_\Lambda$, which is $\mathcal{V}$. As described in the previous paragraph, the universe of the index sort must represent
$$(\mathcal{V} \times [q])\ \dot{\cup}\ \prod_{P \in \mathcal{P}} [q]^{\mathcal{V}(P)}.$$
Let $\seq{P}{m}$ be the constraint types and let $k$ be the maximum arity of any constraint (recall that there are only finitely many constraint types allowed, so $k$ and $m$ are universal constants for the problem $\Lambda$, and do not depend on the instance at hand). Adding extra (ignored) variables to the second block of the matrix, we can enlarge the index set to be
$$(\mathcal{V} \times [q])\ \dot{\cup}\ \left([m] \times [q]^k \times \mathcal{V}^k\right),$$
representing an index $(P, f)$ as $(t, \seq{a}{k}, \seq{i}{k})$, where $P$ is of type $P_t$ and the first set of consecutive $i$-variables up to the arity of $P$ are assigned the corresponding $a$-values. Thus, the index sort can be constructed from $q$ disjoint copies of $\mathcal{V}$ and $mq^k$ disjoint copies of $\mathcal{V}^k$. Since $q$, $m$ and $k$ are constants, a universe like this can be defined via FO formulas using the \emph{method of finite expansions} \cite[Sec.\txt{} 2.2]{DefinableInapproximabilityJournal}.

Similarly as with \maxcut, defining the rest of the interpretation is easy but tedious. Inspecting the \LC relaxation (\ref{equLC2}), it is clear that all coefficients of the objective matrix, constraint matrices and constraint bounds can be defined via FO formulas and simple arithmetic, so the SDP can be defined via a FO interpretation.

After solving the SDP, Raghavendra's algorithm then proceeds to round the $\mathbf{b}_{i, a}$ vectors to an integral solution \cite[Theorem 5.1]{RagThesis}. As with the Goemans-Williamson algorithm, there are many symmetry-breaking steps in this process, so we must find a different way to extract the approximately optimal value.

For any $I \in \Lambda$ (that is, $I$ is an instance of CSP $\Lambda$) let $\opt(I)$ denote the maximal value $I$, and let $\sdp(I)$ denote the maximal value of the \LC relaxation of $I$ (which may be greater). For any $c \in \rr$, define
$$\gaplambda(c) := \inf_{I \in \Lambda,\ \sdp(I) = c} \opt(I).$$
As with \maxcut, we can compute the value of $\sdp(I)$ in FPC by applying Theorem \ref{thmSDPInFPC}. Using that value alone, the best approximation algorithm we can hope for would be to simply return $\gaplambda(\sdp(I))$. If the goal is just to guarantee an approximation ratio of $\alpha$ for some fixed constant $\alpha$ which is the worst-case ratio between $a$ and $\gaplambda(a)$ overall all values $a$, then we are done, for we can just return $\alpha \cdot \sdp(I)$, which is computable in FPC. However, Raghavendra's rounding algorithm has a stronger performance guarantee: that it always returns a solution of value at least $\gaplambda(\sdp(I) - \eta) - \eta$ for any fixed constant $\eta > 0$. To meet this guarantee without breaking symmetry, we instead use another one of Raghavendra's results.\thmspace

\begin{theorem}[\txt{Raghavendra \cite[Theorem 5.2]{RagThesis}}]\label{thmRagComputeGapCurve}
	For every constant $\eta > 0$ and every CSP $\Lambda$, $\gaplambda(c)$	can be computed to an additive approximation of $\eta$ in time $\exp(\exp(\textup{poly}(kq/\eta)))$, where $k$ and $q$ are constants depending only on $\Lambda$.
\end{theorem}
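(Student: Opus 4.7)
The plan is to reduce the computation of $\gaplambda(c)$ to a finite search over a discretized parameter space of canonical worst-case instances. The double-exponential running time reflects the fact that this parameter space has dimension roughly $q^k$, which is single-exponential in $kq$, so that discretizing it at resolution $\eta$ produces a doubly-exponential number of candidates to inspect.

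First, I would show that the infimum defining $\gaplambda(c)$ is attained, up to an additive $\eta$, by some instance of bounded size $N = \textup{poly}(kq/\eta)$. Given any $I$ with $\sdp(I) = c$, sub-sampling $N$ constraints i.i.d.\ from the weighted constraint distribution yields an instance $I'$ whose $\opt$ and $\sdp$ both lie within $\eta/2$ of those of $I$ with high probability, using Chernoff-type concentration combined with a union bound over the $q^N$ possible assignments (for $\opt$) and a matrix concentration argument on the dual certificate (for $\sdp$). Second, I would enumerate a discretization of the space of such bounded-size instances: each is determined by rational weights assigned to the constraint shapes on $N$ variables, and, after identifying by variable relabeling, these weights live in an $O(q^k)$-dimensional simplex. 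Rounding each weight to resolution $\eta/\textup{poly}(kq)$ yields $\exp(\exp(\textup{poly}(kq/\eta)))$ candidate instances. For each candidate, one can compute $\sdp(I')$ via an SDP solver and $\opt(I')$ by brute force over all $q^N = \exp(\textup{poly}(kq/\eta))$ assignments; returning the minimum of $\opt(I')$ over all candidates with $\sdp(I') \in [c - \eta/2, c + \eta/2]$ gives $\gaplambda(c)$ to within additive $\eta$, all within the claimed double-exponential time.

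The main obstacle is the first step: simultaneously preserving both $\sdp$ and $\opt$ under sub-sampling. Preserving $\opt$ is standard via concentration on each fixed assignment, but preserving $\sdp$ is delicate because feasibility is a global property of the vector solution. A workable approach is to pass to the dual formulation of the $\LC$ relaxation and use matrix Chernoff bounds to show that the dual optimal value is preserved under sub-sampling as well. Once this is established, the rest of the argument is an exhaustive search that is conceptually routine but combinatorially large, accounting for the double exponential in the running time.
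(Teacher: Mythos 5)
The paper does not prove this theorem; it is Theorem 5.2 of Raghavendra's thesis, cited as a black box. The only gloss the paper gives is the one-line description that ``the algorithm approximates the infimum over all instances by computing $\opt(I)$ and $\sdp(I)$ for a set of instances $S$ of size $\exp(\exp(\textup{poly}(kq/\eta)))$.'' Your enumeration-and-evaluation outer loop (Steps 2 and 3) matches this description. The interesting question is how $S$ is obtained, and there your proposal departs from Raghavendra's argument in a way that raises concrete concerns.

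The crux of your plan is a random-subsampling reduction to instances with $N = \textup{poly}(kq/\eta)$ constraints (hence $O(kN)$ variables). Two problems. First, it is far from clear that subsampling to a \emph{polynomial}-size instance simultaneously preserves $\opt$ and $\sdp$ to additive accuracy $\eta$ for a general CSP. Preservation of $\sdp$ under subsampling is a genuinely difficult result; for specific problems this is the subject of entire papers (e.g., Barak, Hardt, Holenstein and Steurer, ``Subsampling mathematical relaxations and average-case complexity''), and the size needed there depends on structural parameters of the instance, not only on $k$, $q$, $\eta$. Flagging this as ``delicate'' and sketching a ``matrix Chernoff on the dual'' does not yet constitute a proof; the dual of the $\LC$ relaxation involves a semidefinite constraint whose feasible region is not obviously preserved under a randomly chosen sub-instance, and you would need to control the spectral norm of the deviation of the dual certificate \emph{uniformly} over the constraint sampling, which is exactly where a careful argument is needed. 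Second, and more tellingly: if such a polynomial-size subsampling were true, the total running time would be $\exp(\textup{poly}(kq/\eta))$, \emph{single} exponential, since the discretized parameter space over $N=\textup{poly}(kq/\eta)$ constraints has dimension $\textup{poly}(kq/\eta)$, not $O(q^k)$ as you assert. Your $O(q^k)$ count does not match the sub-sampled instance you constructed, and the double exponential in the theorem strongly suggests that Raghavendra's reduction is to instances of size $\exp(\textup{poly}(kq/\eta))$ (consistent with the dictatorship-test/dimension-reduction route: discretize SDP vectors in $\textup{poly}(kq/\eta)$ dimensions, merge variables with the same vector type, ending with exponentially many distinct variable types), not $\textup{poly}(kq/\eta)$. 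So either your subsampling claim is too strong to be true, or your enumeration count is inconsistent with it. Until Step 1 is made rigorous and the dimension bookkeeping is reconciled, the proposal has a genuine gap.
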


This algorithm approximates the infimum over all instances by computing $\opt(I)$ and $\sdp(I)$ for a set of instances $S$ of size $\exp(\exp(\txt{poly}(kq/\eta)))$. Note that $k$, $q$ and $\eta$ are constants that do not depend on $I$, so $S$ is just a fixed, finite set of instances. So, in other words, the mapping $c \mapsto \gaplambda(c)$ is essentially stored in a large, but finite, lookup table, where the value of $\gaplambda(c)$ on  an arbitrary input $c$ is approximated by looking at the greatest entry of the table below $c$. Using a (very large) disjunction over all of the entries in the table, is possible to write a FO interpretation $\Theta$ of $\tau_\qq$ in $\tau_\qq$ approximating $\gaplambda(c)$, of the form
$$\Theta(c) := \max_{\substack{I \in S\\\sdp(I) < c}} \opt(I) - \eta$$
(as described in Section \ref{secGWAlgoInFPC}, this can be translated into more primitive logical definitions of each bit in the numerator and denominator).

Composing this interpretation with the interpretations defining and solving the \LC relaxation, we have the following result, generalizing Theorem \ref{thmMaxCutInFPC}.\thmspace
\begin{theorem}\label{thmCSPApproxInFPC}
	For any CSP $\Lambda$ and any $\varepsilon > 0$, there is an FPC-definable algorithm, which, on instance $I \in \Lambda$ of SDP value $\sdp(I) = c$, returns a value of at least $\gaplambda(c) - \varepsilon$.
\end{theorem}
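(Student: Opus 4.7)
The plan is to chain together three FPC-interpretations: one that builds the \LC relaxation from an input instance, Theorem \ref{thmSDPInFPC} to solve it, and the lookup-table interpretation $\Theta$ approximating $\gaplambda$. Given an input $\tau_\Lambda$-structure $I$, I would first define a $\tau_\txt{SDP}$-structure encoding the SDP form of the \LC relaxation (\ref{equLC2}). As sketched in the excerpt, the index sort is obtained by taking $q$ disjoint copies of $\mathcal{V}$ together with $mq^k$ disjoint copies of $\mathcal{V}^k$, which is expressible by a fixed-arity FO interpretation via the method of finite expansions (since $q$, $m$, $k$ depend only on $\Lambda$). Each entry of the objective matrix, each constraint matrix, and each constraint bound of (\ref{equLC2}) is then either $0$, $\pm 1$, or $\weight_I(P)$ for some constraint $P$, all of which can be read off directly from the $N_i$, $D_i$, $S_i$ relations of $I$ via FO formulas, exactly as in the worked $X_C$/$D_C$ example for \maxcut.

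Next, I would apply Theorem \ref{thmSDPInFPC} with some additive precision $\delta > 0$ (to be chosen below) to obtain a $\tau_\txt{mat}$-structure encoding a matrix $X^*$ whose objective value $v^*$ is within $\delta$ of $\sdp(I)$. Extracting $v^*$ from $X^*$ just requires evaluating the linear objective of (\ref{equLC2}), which is an FPC-definable arithmetic computation on the entries of $X^*$ and the constraint weights of $I$. At this stage, I have an FPC-interpretation of $\tau_\qq$ in $\tau_\Lambda$ that outputs a rational $v^*$ satisfying $|v^* - \sdp(I)| \leq \delta$.

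Then I would invoke Theorem \ref{thmRagComputeGapCurve} with a parameter $\eta > 0$ to precompute, \emph{at metatheory time}, a finite table of pairs $(c_s, \gaplambda(c_s))$ approximating $\gaplambda$ to additive error $\eta$. Because this table has size depending only on $\Lambda$ and $\eta$, not on $I$, it can be hardwired as a finite disjunction into an FPC-interpretation $\Theta$ of $\tau_\qq$ in $\tau_\qq$ of the form $\Theta(c) := \max_{I \in S,\ \sdp(I) < c} \opt(I) - \eta$ displayed in the excerpt. Composing $\Theta$ with the previous interpretation produces an FPC-definable rational output $\Theta(v^*)$.

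Finally, I need to set $\delta$ and $\eta$ so that $\Theta(v^*) \geq \gaplambda(\sdp(I)) - \varepsilon$. The monotonicity built into the lookup formula means that on input $c$ the value returned is at least $\gaplambda(c') - \eta$ for some $c' \leq c$ close to $c$; combined with the $\pm\delta$ error in $v^*$ and the appropriate continuity-from-the-left property of $\gaplambda$ established by Raghavendra, choosing $\eta$ and $\delta$ small (as constants depending only on $\varepsilon$ and $\Lambda$) yields the bound $\gaplambda(\sdp(I)) - \varepsilon$. The main subtlety I would expect is precisely this calibration step, since $\gaplambda$ is defined as an infimum and need not be continuous; however, Raghavendra's analysis already addresses this, and the argument needed here is the same as in the classical polynomial-time case, so no new ideas beyond the FPC-definability of each component are required.
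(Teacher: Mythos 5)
Your proposal follows exactly the same three-stage composition the paper uses: an FO/FPC interpretation encoding the \LC relaxation (\ref{equLC2}) as a $\tau_\textup{SDP}$-structure, Theorem \ref{thmSDPInFPC} to approximately solve it, and the hardwired lookup table for $\gaplambda$ from Theorem \ref{thmRagComputeGapCurve}, with $\delta$ and $\eta$ chosen small enough to absorb both error sources into the single $\varepsilon$. Your added discussion of the calibration step (in particular, the potential discontinuity of $\gaplambda$ and how Raghavendra's Theorem 5.2 handles it) is more explicit than the paper's one-line remark but reflects the same underlying argument.
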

Note that there are two sources for the $\varepsilon$ error term, coming from the $\delta$ in Theorem \ref{thmSDPInFPC} and the $\eta$ in Theorem \ref{thmRagComputeGapCurve}.


\section{An FPC analogue of Raghavendra's result on unique games and semidefinite programming}\label{secUGAndSDPReduction}

The performance guarantee of Raghavendra's algorithm is optimal in the following sense.\thmspace
\begin{theorem}[\txt{Raghavendra \cite[Theorem 7.1]{RagThesis}}]\label{thmRagUGHardness}
	Assume the Unique Games Conjecture. For any CSP $\Lambda$, for all $\eta > 0$ and $-1 < c \leq 1$, it is NP-hard to distinguish between instances $I \in \Lambda$ with value at least $c - \eta$ from those with value at most $\gaplambda(c)$.
\end{theorem}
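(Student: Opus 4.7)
The plan is to give a gap-preserving reduction from the \ug gap problem (which is $\NP$-hard under UGC) to the gap version of $\Lambda$ stated in the theorem. Fix $\eta > 0$ and $c \in (-1, 1]$. Choose an instance $I^* \in \Lambda$ (with variable set $\mathcal{V}^*$, domain $[q]$, and constraint set $\mathcal{P}^*$) realising (up to additive $\eta/4$) the infimum in the definition of $\gaplambda(c)$, so that $\sdp(I^*) \geq c - \eta/4$ and $\opt(I^*) \leq \gaplambda(c) + \eta/4$. Since $\gaplambda$ is defined as an infimum over \emph{all} instances, $I^*$ exists by Theorem \ref{thmRagComputeGapCurve} and can be taken to have constant size depending only on $\eta$ and $\Lambda$. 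This instance will serve as the ``hard core'' gadget in the reduction.

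Next, I would use $I^*$ to build a dictatorship test, in the manner of Khot--Kindler--Mossel--O'Donnell \cite{KKMO}. For each variable $i \in \mathcal{V}^*$, introduce a block of $q^R$ Boolean variables indexed by $[q]^R$, representing the truth table of a function $f_i : [q]^R \to [q]$; the ``honest'' functions are dictators $f_i(x) = x_j$ for some coordinate $j$. For each constraint $P \in \mathcal{P}^*$ on variables $i_1, \dots, i_k$, add constraints on the truth-table variables that test $P(f_{i_1}(x), \dots, f_{i_k}(x))$ on carefully correlated inputs $x \in ([q]^R)^k$ drawn from the noise distribution obtained from rounding the optimal \LC solution of $I^*$ together with a small amount of independent noise $\rho$. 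By construction, a dictator assignment achieves value $\sdp(I^*) - o(1) \geq c - \eta/2$ (completeness), while any ``far-from-dictator'' assignment achieves value at most $\opt(I^*) + o(1) \leq \gaplambda(c) + \eta/2$ (soundness). The soundness step is the key obstacle and is handled by Mossel's invariance principle: any collection of functions with all low-influence coordinates behaves, against the noisy product distribution, like a random rounding of the SDP solution, so its value cannot exceed $\opt(I^*)$ by more than $o(1)$.

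Finally, compose the dictatorship test with the standard UGC-based reduction. Given a \ug instance $U$ on label set $[R]$ with sufficiently small soundness $\delta$ (chosen depending on the influence threshold in the invariance principle), produce a $\Lambda$-instance $I_U$ as follows: for each UG vertex $v$ create a block of $q^R$ variables (one long-code per $v$), and for each hyperedge of the dictatorship test, average over the UG edges incident to $v$, applying the edge permutations to re-index coordinates. The completeness case (value at least $1 - \varepsilon$ for $U$) pulls back the dictator assignment to give $\opt(I_U) \geq c - \eta$. For soundness, if some assignment to $I_U$ has value more than $\gaplambda(c)$, a standard influence-decoding argument extracts a labeling of $U$ satisfying more than $\delta$ fraction of constraints, contradicting UGC.

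The main obstacle is the soundness analysis, which requires choosing the noise rate $\rho$, the UG soundness $\delta$, the influence threshold $\tau$, and the dimension $R$ in the right order so that the invariance principle's error term is absorbed into $\eta$; this is Raghavendra's main technical contribution and is where all the parameters are quantitatively balanced. Everything else — constructing $I^*$, defining the test and composing it with UG — is routine once the gadget and parameter choices are in place.
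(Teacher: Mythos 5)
The paper does not supply its own proof of this theorem: it is quoted directly from Raghavendra's thesis (his Theorem~7.1), and what Section~\ref{secUGAndSDPReduction} actually does is re-traverse the chain of reductions behind it (Khot's bipartization, Khot--Regev's Lemmas~3.3, 3.4, 3.6, and Raghavendra's verifier) only in order to argue that each step is realizable as an FPC-interpretation, not to re-derive completeness and soundness. So there is no internal proof to compare your sketch against; it has to be judged against Raghavendra's argument itself, filtered through the description the paper gives of the verifier.

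On that standard your outline is a fair account of the structure: fix a gadget instance $I^*$ witnessing $\sdp(I^*) \approx c$ and $\opt(I^*) \approx \gaplambda(c)$, build a long-code dictatorship test whose test distribution is derived from the LC relaxation of $I^*$, prove soundness via the invariance principle, and compose with a unique games instance with influence decoding. A few places are imprecise in ways a full proof has to fix. First, Raghavendra's verifier (as the paper describes it in steps~(1)--(6)) is defined on a \emph{bipartite} UG instance $\mathcal{W}_\Phi \cup \mathcal{V}_\Phi$ with a regularity property on one side; the long codes live only on the $\mathcal{V}_\Phi$ side, and the verifier draws a random $w \in \mathcal{W}_\Phi$ and $k$ neighbours in $\mathcal{V}_\Phi$. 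Your phrase ``for each UG vertex $v$ create a block'' collapses the two sides and omits the preprocessing chain that makes the verifier well-defined. Second, citing Theorem~\ref{thmRagComputeGapCurve} for the existence of a constant-size near-optimizer $I^*$ is not quite the right appeal: that theorem is about computing $\gaplambda$, and the bounded-size witness comes from a separate discretization argument in Raghavendra. Third, your slack budget ($\eta/4$ on $\sdp$, $\eta/4$ on $\opt$, $\eta/2$ absorbed into the test) leaves soundness at $\gaplambda(c) + O(\eta)$, whereas the theorem as stated has no slack on the soundness side; closing this requires either a monotonicity/continuity argument about $\gaplambda$ or a slightly different parametrization of $I^*$, neither of which you address. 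None of these sink the outline, but they are exactly the points at which the hand-waving would have to be replaced by Raghavendra's actual bookkeeping.
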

As a consequence, if we assume the UGC and $\PP \neq \NP$, Raghavendra's algorithm gives the best possible approximation ratio of any polynomial time algorithm. Our goal is to prove an FPC-analogue of this result, so first we must define an FPC-version of the UGC.\thmspace
\begin{conjecture}[FPC-UGC]\label{cnjFPCUGC}
	For all $\varepsilon, \delta > 0$, there exists $q$ such that there is no sentence $\phi$ of FPC such that, for all $\tauuug$ structures $\mathbb{A}$,
	\begin{enumerate}[label={(\arabic*)}]
		\item\label{itmFPCUGCCompleteness} if at least a $1 - \varepsilon$ fraction of constraints in $\mathbb{A}$ can be satisfied, then $\mathbb{A} \models \phi$, and
		\item\label{itmFPCUGCSoundness} if at most a $\delta$ fraction of constraints in $\mathbb{A}$ can be satisfied, then $\mathbb{A} \not\models \phi$.
	\end{enumerate}
\end{conjecture}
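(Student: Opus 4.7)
The plan is to follow the template set by Atserias and Dawar for proving FPC inexpressibility, mirroring the pebble-game framework of Section \ref{subPebblingGameDefinition}. Given any putative FPC sentence $\phi$ satisfying the conditions of the conjecture, I would derive a contradiction by producing, for $k := \mu(\phi)$, a pair of $\tauuug$-structures $\mathbb{A}_k$ and $\mathbb{B}_k$ on a label set of some size $q = q(\varepsilon, \delta, k)$ such that $\mathbb{A}_k$ is at least $(1 - \varepsilon)$-satisfiable, $\mathbb{B}_k$ is at most $\delta$-satisfiable, and yet Duplicator wins the $k$-pebble bijective game on $(\mathbb{A}_k, \mathbb{B}_k)$. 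By Hella's theorem this gives $\mathbb{A}_k \equiv_{C^k} \mathbb{B}_k$, so $\mathbb{A}_k \models \phi$ iff $\mathbb{B}_k \models \phi$, contradicting the requirements \ref{itmFPCUGCCompleteness} and \ref{itmFPCUGCSoundness}.

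For the construction, I would work in the \maxtwolin formulation noted at the end of Section \ref{subUGBackground}, whose gap-preserving equivalence with \ug means that an FPC inexpressibility result for one transfers to the other. The advantage is that \maxtwolin constraints over an Abelian group $A = \zz/q\zz$ make CFI-style twisting natural: starting from a base graph $G_k$ of sufficiently large tree-depth (or, more precisely, on which the $k$-pebble bijective game is a win for Duplicator after CFI replacement), one replaces each vertex with a gadget of $A$-valued variables and each edge with a bundle of additive constraints. The untwisted structure $\mathbb{A}_k$ admits a globally consistent labelling, while $\mathbb{B}_k$ is obtained by perturbing the gadgets by a nonzero group element so as to introduce a parity obstruction that no global assignment can resolve. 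Duplicator's strategy follows the standard CFI argument: for any $k$ pebbled variables, the induced substructure lies in a bounded-depth subgraph of $G_k$ which cannot witness the global twist, so a local relabelling of each gadget furnishes the required bijection round after round.

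The main obstacle, and the reason this is only a conjecture, is amplifying the soundness. The bare CFI construction outlined above produces structures that are either completely satisfiable or fail on only a single constraint, which gives essentially no multiplicative gap between the two sides. To push soundness below an arbitrary $\delta$ while retaining completeness close to $1 - \varepsilon$, one would need to combine the CFI twist with a long-code / dictatorship-test gadget in the spirit of Khot--Kindler--Mossel--O'Donnell and analyse its soundness via Fourier analysis over $A^n$, while simultaneously ensuring that the resulting instance still admits enough symmetry for the local relabellings to commute with the long-code encoding and survive the pebble game. This combinatorial tension---\emph{enough rigidity to rule out good assignments}, yet \emph{enough symmetry to fool bounded-variable counting logic}---is precisely the FPC analogue of the classical UGC, and I do not expect it to be substantially easier than its complexity-theoretic counterpart without a new idea.

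A more modest target, which is what Chapter \ref{chaUG} actually attains via Theorem \ref{thmUGLowGapMain}, is to fix the desired approximation ratio $\alpha$ and construct pairs $\mathbb{A}_k, \mathbb{B}_k$ whose ratio of satisfiabilities exceeds $1/\alpha$, rather than insisting on completeness arbitrarily close to $1$. This weaker statement suffices to rule out FPC-definable $\alpha$-approximations for every constant $\alpha$, and hence recovers many of the downstream FPC-inapproximability consequences one would expect from Conjecture \ref{cnjFPCUGC}, but it leaves open whether the full $(1 - \varepsilon, \delta)$ gap is achievable symmetrically.
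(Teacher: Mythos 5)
You correctly recognize that Conjecture~\ref{cnjFPCUGC} is exactly that---a conjecture---and that the paper makes no attempt to prove it, offering only partial progress. Your analysis of the situation is accurate: a bare CFI twist on a \gug instance yields essentially no multiplicative gap (this is precisely what happens in Section~\ref{secUGExact}, where $U_1$ has satisfiability exactly $\frac12$ and $U_2$ just below $\frac12$), and amplifying soundness while still allowing Duplicator to locally relabel round-by-round is the genuine obstruction. You also correctly identify Theorem~\ref{thmUGLowGapMain} as the weaker result the paper does attain. One comparison worth making explicit: the paper's partial progress does not go the long-code/dictatorship-test route you speculate might be needed for the full conjecture. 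Instead, Section~\ref{secUGLowGap} uses a rather different device---random $\ell$-dimensional subspaces $Z(v_1,v_2)\subseteq\fm$ defining bundles of $2^\ell$ additive constraints per edge of a high-girth $d$-regular base graph, together with random shifts $b(v_1,v_2)$---which deliberately \emph{sacrifices} completeness (it drops to $\frac{1}{2^\ell}$ rather than staying near $1$) in exchange for a constant-factor $\alpha$ gap that can be driven arbitrarily small. The CFI-style mechanism survives because good edges and high girth guarantee that any short path's subspaces span $\fm$ (Lemma~\ref{lemNUGLGRadiusProperty}), giving Duplicator the local freedom to relabel. Whether a dictatorship-test-style soundness argument can be made compatible with this kind of pebble-game symmetry is, as you say, exactly the open question separating the FPC-UGC from the partial result, and your proposal correctly identifies rather than resolves it.
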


Theorem \ref{thmRagUGHardness} is proved via a series of gap-preserving reductions from \ug to $\Lambda$. We show that each of these is an FPC-reduction.

First is an elementary reduction of Khot \cite[Sec.\txt{} 2.4]{UGCSurvey} taking as input an arbitrary \ug instance and producing a new instance with approximately the same satisfiability such that the underlying graph structure is bipartite. Basically, two duplicate copies of the variable set are created, and for each constraint of the input instance between a pair of variables $(u, v)$, we have constraints in the new instance between $u$ in the first copy and $v$ in the second copy, and vice versa. Next is a series of three reductions due to Khot and Regev \cite[Lemmas 3.3, 3.4 and 3.6]{KhotRegevUGVersions} taking as input a bipartite, possibly weighted \ug instance and producing a bipartite, unweighted instance of approximately the same optimal value, with some additional useful properties in the case where the input is highly satisfiable. With the exception of Lemma 3.4, all of these reductions are ``gadget reductions" which are easily implemented as FO reductions, with FO formulas defining what constraints appear and simple arithmetical operations defining their weights.

Lemma 3.4 passes from a weighted to an unweighted instance by replacing weighted edges by multiple edges sharing the same constraint\footnote{Even though we started with an unweighted instance, Lemma 3.3 produces a weighted instance, so it is still necessary to perform this reduction. }. Inevitably, there is a slight bit of rounding error that occurs. To ensure that all vertices on the left side of the bipartite graph have the same degree (an important property used later), each vertex $x$ on the left side is assigned an arbitrary vertex $y_0(x)$ of positive weight on the right side, then weights are rounded down on all edges from $x$ besides the one to $y_0(x)$, and any extra edges are added to $y_0(x)$ in the end. Choosing $y_0(x)$ for each $x$ arbitrarily clearly breaks symmetry, preventing this reduction from being translated into an FPC-interpretation. To circumvent this problem, we may simply define $y_0(x)$ to be the variable corresponding to $x$ from Khot's reduction which duplicated the variable set (one can easily verify that the uniqueness of this choice is preserved by the reduction in Lemma 3.3 of \cite{KhotRegevUGVersions}).

Finally, we come to Raghavendra's reduction from \ug to $\Lambda$. It is defined with respect to some fixed instance $I$ of CSP $\Lambda$ with variables taking values in $[q]$. The input to the reduction is an unweighted unique games instance $\Phi$ with label set $[R]$, whose underlying graph $G$ is bipartite, with vertex set $V(G) = \mathcal{W}_\Phi \cup \mathcal{V}_\Phi$. The output is a \emph{verifier}, which is an algorithm that randomly selects a constraint of the form allowed by $\Lambda$ and returns a numeric ``payoff" if the constraint is satisfied, and zero if it is not satisfied. The verifier can therefore be viewed as an instance of $\Lambda$, where the weight of a constraint is the probability it is selected by the verifier multiplied by the payoff.

The input to the verifier is an assignment $\mathcal{F}$ of values in $[q]$ to every element of $\mathcal{V}_\Phi \times [q]^R$. The verifier performs the following steps\footnote{This is paraphrased from Raghavendra's verifier \cite[Sec.\txt{} 7.5]{RagThesis} and its main subroutine, the ``dictatorship test" \cite[Sec.\txt{} 7.3]{RagThesis}. }:
\begin{enumerate}[label={(\arabic*)}]
	\item\label{itmUGAndSDPPickSubset} Pick a constraint $P$ of $I$, uniformly at random\footnote{Raghavendra's framework \cite[Definition 2.4.2]{RagThesis} allows for constraints of CSPs to have probabilities associated with them which get multiplied by the payoffs/weights in calculating the value of an assignment of variables, in which case those probabilities should be used as a distribution for this step, rather than the uniform distribution. We have not included these probabilities because they are redundant, as they can without loss of generality be absorbed into the payoffs/weights on the constraints. }. Denote the variables in $P$ by $\mathcal{V}(P) =: \{\seq{s}{k}\}$.
	\item\label{itmUGAndSDPPickXVertex} Pick a random vertex $w \in \mathcal{W}_\Phi$.
	\item\label{itmUGAndSDPPickNeighbors} Independently pick $k$ neighbours $\seq{v}{k} \in \mathcal{V}_\Phi$ of $w$, uniformly at random (there may be duplicates).
	\item\label{itmUGAndSDPGenerateZTildeVectors} Independently generate $k$ $[q]$-valued vectors of length $R$, $\tilde{z}_{s_1}, \tilde{z}_{s_2}, \dots, \tilde{z}_{s_k}$ using a random procedure that depends only on the instance $I$ and not on $\Phi$.
	\item\label{itmUGAndSDPPermuteAndLookup} For each $i \in [k]$, permute the components of the vector $\tilde{z}_{s_i}$ by $\pi_{w, v_i}$ to obtain the vector $z_i := \pi_{w, v_i}(\tilde{z}_{s_i})$, then check the value of the variable $(v_i, z_i)$ under $\mathcal{F}$.
	\item\label{itmUGAndSDPEvaluate} If $P$ is satisfied by $k$-tuple of values obtained in step \ref{itmUGAndSDPPermuteAndLookup}, return the weight of $P$, otherwise return 0.
\end{enumerate}

One can view this verifier as an instance $f(\Phi)$ of the CSP $\Lambda$ where the variable set is $\mathcal{V}_\Phi \times [q]^R$. Similarly as discussed in Section \ref{secRagAlgo}, it is possible to define this universe as $q^R$ disjoint copies of $\mathcal{V}_\Phi$ in FPC. So, in order prove that the reduction $\Phi \mapsto f(\Phi)$ can be realized as an FPC-interpretation, all that remains is to show that the weights of each of the constraints of $f(\Phi)$ can be defined in FPC.

The weight of an arbitrary constraint $P$ of type $t$ occurring on an arbitrary tuple of variables $(\mathbf{v}, \mathbf{z}) = ((v_1, z_1), (v_2, z_2), \dots, (v_k, z_k)$ is
$$\mathbb{P}\{P \txt{ is chosen in step \ref{itmUGAndSDPPickSubset}}\} \cdot \mathbb{P}\{(\mathbf{v}, \mathbf{z}) \txt{ is queried in step \ref{itmUGAndSDPPermuteAndLookup}}\} \cdot \weight_I(P)$$
The product of the first and last of these three terms is 0 if there is no such constraint $P$ of type $t$ occurring on those variables in $I$, and
$$\frac{\weight_I(P)}{\txt{\# of constraints in $I$}}$$
otherwise (in which case $P$ is unique). This is clearly definable in FPC, so all that remains is to check that the middle term is definable in FPC as well. This can be rewritten as 
\begin{align*} 
	&\mathbb{P}\{(\mathbf{v}, \mathbf{z}) \txt{ is queried in step \ref{itmUGAndSDPPermuteAndLookup}}\}\\
	=&\prod_{i \in [k]} \Big(\mathbb{P}\{v_i \txt{ is the $i\tth$ vertex chosen in step \ref{itmUGAndSDPPickNeighbors}}\} \push\cdot \mathbb{P}\{z_i \txt{ is the $i\tth$ vector computed in step \ref{itmUGAndSDPPermuteAndLookup}}\}\Big)\\
	=&\frac{1}{\abs{\mathcal{W}_\Phi}}\sum_{w \in \mathcal{W}_\Phi} \prod_{i \in [k]} \mathbb{P}\{v_i \txt{ is the $i\tth$ vertex chosen in step \ref{itmUGAndSDPPickNeighbors}}\push\txt{given $w$ is chosen in step \ref{itmUGAndSDPPickXVertex}}\}\push\cdot \mathbb{P}\{z_i \txt{ is the $i\tth$ vector computed in step \ref{itmUGAndSDPPermuteAndLookup}}\push\txt{given $w$ is chosen in step \ref{itmUGAndSDPPickXVertex}}\}\\
	=& \frac{1}{\abs{\mathcal{W}_\Phi}}\sum_{w \in \mathcal{W}_\Phi \txt{ s.t. } \forall i \in [k],\ \{v_i, w\} \in E(G)} \frac{1}{(\deg v)^k} \push\cdot \prod_{i \in [k]}\mathbb{P}\{z_i \txt{ is the $i\tth$ vector computed in step \ref{itmUGAndSDPPermuteAndLookup}}\push\txt{given $w$ is chosen in step \ref{itmUGAndSDPPickXVertex}}\}\\
	=& \frac{1}{\abs{\mathcal{W}_\Phi}}\sum_{w \in \mathcal{W}_\Phi \txt{ s.t. } \forall i \in [k],\ \{v_i, w\} \in E(G)} \frac{1}{(\deg v)^k} \push\cdot \prod_{i \in [k]} \mathbb{P}\{\pi_{w, v_i}\inv(z_i) \txt{ is the $i\tth$ vector drawn in step \ref{itmUGAndSDPGenerateZTildeVectors}}\}.
\end{align*}

Since the procedure in step \ref{itmUGAndSDPGenerateZTildeVectors} depends only on the fixed instance $I$ and not $\Phi$, computing the probabilities in the last line can be done by lookup, e.g., by writing out all of the cases in a long FPC sentence. Hence, this entire formula boils down to simple arithmetic and case analysis, so it can be translated into an FPC-interpretation using the methods discussed in previous sections.

Composing all of the FPC-interpretations together, we have the following result, paralleling Theorem \ref{thmRagUGHardness}.\thmspace
\begin{theorem}\label{thmFPCUGHardness}
	Assume Conjecture \ref{cnjFPCUGC} (FPC-UGC). For a any CSP $\Lambda$, for all $\eta > 0$ and $-1 < c \leq 1$, there is no sentence $\phi$ of FPC such that $\phi$ is satisfied by all $\tau_\Lambda$-structures $\mathbb{A}$ with maximal value at least $c - \eta$ and $\phi$ is unsatisfied by those with maximal value at most $\gaplambda(c)$.
\end{theorem}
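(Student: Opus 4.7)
The plan is to argue by contradiction, using the composed FPC-reductions already constructed in this section as a black box. Suppose, for contradiction, that for some CSP $\Lambda$, some $\eta > 0$, and some $-1 < c \leq 1$, there did exist an FPC sentence $\phi$ that is satisfied by every $\tau_\Lambda$-structure with optimal value at least $c - \eta$ and falsified by every such structure with optimal value at most $\gaplambda(c)$. I would derive a contradiction with FPC-UGC by exhibiting, for an appropriately chosen $q$, an FPC sentence on $\tauuug$-structures that separates highly satisfiable unique games instances from barely satisfiable ones.

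First I would choose the thresholds $\varepsilon, \delta > 0$ that will be fed into FPC-UGC. These are dictated by the completeness/soundness analysis of Raghavendra's reduction (Theorem \ref{thmRagUGHardness} in the classical setting): there exist $\varepsilon, \delta > 0$ (depending on $\eta$, $c$, and $\Lambda$) such that, for all sufficiently large label-set sizes $q$, the composed reduction $\Phi \mapsto f(\Phi)$ has the property that (i) if $\Phi$ is at least $(1-\varepsilon)$-satisfiable, then $\opt(f(\Phi)) \geq c - \eta$, and (ii) if $\Phi$ is at most $\delta$-satisfiable, then $\opt(f(\Phi)) \leq \gaplambda(c)$. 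Applying Conjecture \ref{cnjFPCUGC} to this pair $(\varepsilon, \delta)$ yields a positive integer $q$ for which no FPC sentence separates these two classes of $\tauuug$-structures.

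Next I would assemble the FPC-interpretation $\Theta$ from $\tauuug$ (on label set $[q]$) to $\tau_\Lambda$ by composing, in order, the interpretations developed in this section: Khot's bipartization gadget, the three Khot--Regev reductions (with the fix described in the text for handling the choice of $y_0(x)$ symmetrically in Lemma 3.4), and finally Raghavendra's verifier-based reduction whose constraint weights were shown above to be FPC-definable. Since FPC-interpretations compose, $\Theta$ is itself an FPC-interpretation. The sentence $\phi \circ \Theta$ is then a single FPC sentence on $\tauuug$-structures that holds on every $(1-\varepsilon)$-satisfiable instance (by completeness and the defining property of $\phi$) and fails on every $\delta$-satisfiable instance (by soundness and $\gaplambda(c)$-threshold property of $\phi$). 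This directly contradicts the choice of $q$ from FPC-UGC, completing the proof.

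The step I expect to require the most care is making the thresholds $\varepsilon$ and $\delta$ match up cleanly across the composition: each of the reductions in the chain introduces its own small loss in completeness and soundness, and one must verify that these losses can be made arbitrarily small independently, and in particular small enough to land on the $c - \eta$ and $\gaplambda(c)$ thresholds. This is really a book-keeping exercise that mirrors the classical proof of Theorem \ref{thmRagUGHardness}, but it must be done in our setting to confirm that invoking FPC-UGC at a single pair $(\varepsilon, \delta)$ suffices. Everything else is definitional: the FPC-definability of each stage of the reduction has already been established, and FPC is closed under composition of interpretations and under taking Boolean combinations, so no new logical machinery needs to be introduced.
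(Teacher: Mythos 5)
Your proposal is correct and takes essentially the same approach as the paper's own proof: assume such a $\phi$ exists, compose it with the FPC-interpretation $\Theta$ realizing Raghavendra's reduction (shown in the section to be FPC-definable), and observe that the resulting $\tauuug$-sentence violates Conjecture \ref{cnjFPCUGC} at the $(\varepsilon, \delta, q)$ dictated by the reduction's completeness and soundness analysis. Your additional remarks about threshold book-keeping and sufficiently large $q$ are a slightly more explicit unpacking of what the paper leaves implicit in citing Raghavendra, but the argument structure is identical.
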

\begin{proof}
	Suppose there was such a $\phi$. As Raghavendra shows, the reduction from \ug to $\Lambda$ maps highly satisfiable \ug instances to $\tau_\Lambda$-structures with maximal value at least $c - \eta$, and maps highly unsatisfiable \ug instances to $\Lambda$ instances with maximal value at most $\gaplambda(c)$. As we have argued in this section, this reduction is definable as an FPC-interpretation $\Theta$. Therefore, if we ``compose" $\Theta$ with $\phi$ (replacing relation symbols in $\phi$ by their definitions as $\tauuug$-formulas according to $\Theta$, etc.), we get an FPC sentence in vocabulary $\tauuug$ violating Conjecture \ref{cnjFPCUGC}. Hence, no such $\phi$ exists.
\end{proof}

As a consequence, if we assume the FPC-UGC, there is no better FPC-definable approximation algorithm than that of Theorem \ref{thmCSPApproxInFPC}.


\chapter{Results on Unique Games}\label{chaUG}

As Chapter \ref{chaCSPs} has shown, understanding the limits to which \ug can be approximated in FPC is the key gap in our knowledge of definable inapproximability of CSPs. Thus, in this chapter, we examine \ug in detail. The fundamental question we ask is, given some fixed integer $k$, to what extent can a sentence $\phi$ of FPC where $\mu(\phi) = k$ separate instances of different optimal values? To build intuition, we begin by considering the simple cases where $k \in \{1, 2, 3\}$. We then present a novel CFI-construction proving that there is no FPC-interpretation giving the exact optimal value of a \UG(4) instance. This construction is then generalized to prove the main result of this chapter (Theorem \ref{thmUGLowGapMain}), that it is impossible to approximate the optimal value of a \UG($q$) instance to within any constant factor $\alpha$ in FPC (where $q = \txt{poly}(\frac{1}{\alpha})$).

While none of the constructions in this chapter yield stronger lower bounds on the approximability of \ug than what are known for polynomial time computation, they are still valuable for two main reasons. First, the lower bounds do not rely on the assumption that $\PP \neq \NP$, so the results are truly novel. Second, the constructions themselves are qualitatively quite different from existing \ug constructions in that they exploit a particular weakness of FPC-definable algorithms: the inability to solve systems of linear equations. As such, they provide a new set of tools with which to attack Conjecture \ref{cnjFPCUGC}.


\section{The label-lifted instance}\label{secLiftedGraph}

Recall from Sections \ref{subPebblingGameDefinition} and \ref{secDefinableInapproximabilityReview} that, to establish FPC inapproximability results for a given problem, it suffices to produce, for any integer $k$, two instances $\mathbb{A}_k$ and $\mathbb{B}_k$ (for \ug, these are $\tauuug$-structures, using the unweighted encoding) with very different optimal values such that Duplicator wins the $k$-pebble bijective game played on $\mathbb{A}_k$ and $\mathbb{B}_k$.

For $k = 1$, this is fairly trivial. Just let $\mathbb{A}_1$ and $\mathbb{B}_1$ be \ug instances on the same number of variables, whose underlying graphs are simple, such that $\mathbb{A}_1$ is completely satisfiable and $\mathbb{B}_1$ is highly unsatisfiable (such instances $\mathbb{B}_1$ are easy to construct; an explicit construction is given in Appendix \ref{appUnsatisfiableUGConstruction}). Since the two structures have the same size, Duplicator is always able to give a bijection between their universes. No matter what bijection Duplicator chooses, Spoiler can never win, since all relations in $\tauuug$ have arity 2 and there is only one pebble (and $\mathbb{B}_1$ has no self-loops).

For $k = 2$, we must be more clever, since we now have to ensure that, when there is already one pebble pair on the board, Duplicator's bijection preserves all of the edge labels incident to the pebbled vertices. To this end, we define an operator $\mathcal{G}$ on \gug instances (defined at the end of Section \ref{subUGBackground}), similar to the $G$ operator used by Atserias and Dawar \cite[Sec.\txt{} 3.2]{DefinableInapproximabilityJournal}, and also implicitly used by Atserias, Bulatov and Dawar \cite[Sec.\txt{} 3]{SolvingEquationsNotInFPC}.

Given a \gug instance $U$ with group $A$ and variable set
$$\{x_v \suchthat v \in V\},$$
$\mathcal{G}(U)$ is a \gug instance with group $A$ and variable set
$$\{x_v^g \suchthat v \in V,\ g \in A\}.$$
For every equation
$$x_{v_1} - x_{v_2} = z$$
in the constraint set of $U$ and every $g_1, g_2 \in A$, we have the equation
$$(x_{v_1}^{g_1} - g_1) - (x_{v_2}^{g_2} - g_2) = z$$
in the constraint set of $\mathcal{G}(U)$. We call $\mathcal{G}(U)$ the \emph{label-lifted instance\footnote{This construction is similar to the \emph{label-extended graph} of a \ug instance (see, for example, \cite{LabelExtendedGraphExample, UGAT}), but it is not the same thing. The label extended-graph is obtained by taking all of the edges with identity constraints in the label-lifted instance. } of $U$}.

The hope is that it is easier for Duplicator to win the $k$-pebble bijective game on $\mathcal{G}(U_1)$ and $\mathcal{G}(U_2)$ than on the original pair $U_1$ and $U_2$, while at the same time, applying $\mathcal{G}$ does not change how satisfiable an instance is. Formally, for any $s \in [0, 1]$, we say that an (unweighted) \ug instance $U$ is \emph{$s$-satisfiable} if there is some assignment of variables satisfying at least an $s$-fraction of the constraints of $U$. We say that the \emph{satisfiability} of $U$ is the maximum $s$ such that $U$ is $s$-satisfiable.\thmspace
\begin{lemma}\label{lemGSameSatisfiability}
	\lemGSameSatisfiability
\end{lemma}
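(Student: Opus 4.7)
The proof will proceed by establishing the two inequalities $\txt{sat}(\mathcal{G}(U)) \geq \txt{sat}(U)$ and $\txt{sat}(\mathcal{G}(U)) \leq \txt{sat}(U)$ separately. Both directions exploit the structure of the lifted constraint $(x_{v_1}^{g_1} - g_1) - (x_{v_2}^{g_2} - g_2) = z$, which, rewriting the variables, says $\tilde{\Phi}(x_{v_1}^{g_1}) - \tilde{\Phi}(x_{v_2}^{g_2}) = z$ where $\tilde{\Phi}(x_v^g) := \Phi(x_v^g) - g$ for an assignment $\Phi$ of $\mathcal{G}(U)$.

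For the first direction, I would give an explicit lift: from any assignment $\phi: V \to A$ of $U$, define $\Phi(x_v^g) := \phi(v) + g$. Substituting into the lifted constraint shows that its left-hand side reduces to $\phi(v_1) - \phi(v_2)$, independently of $g_1, g_2$. Therefore all $|A|^2$ lifted copies of a given $U$-constraint are satisfied by $\Phi$ if and only if the original $U$-constraint is satisfied by $\phi$. Since $\mathcal{G}(U)$ has exactly $|A|^2$ times as many constraints as $U$, the fraction satisfied is preserved exactly, giving $\txt{sat}(\mathcal{G}(U)) \geq \txt{sat}(U)$.

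For the reverse direction (the main obstacle), a naive attempt to define $\phi(v) := \Phi(x_v^{g_0}) - g_0$ for some fixed $g_0$ fails, because such a $\phi$ only accounts for the ``diagonal'' lifted constraints with $g_1 = g_2 = g_0$, and an example with $|A| = 2$ easily shows that off-diagonal constraints can dominate. Instead, I would use a probabilistic argument. For each $v \in V$, let $\mu_v$ be the distribution on $A$ given by $\mu_v(a) := \frac{1}{|A|}|\{g \in A : \Phi(x_v^g) - g = a\}|$. Then for any $U$-constraint $x_{v_1} - x_{v_2} = z$, the fraction of its $|A|^2$ lifted copies that $\Phi$ satisfies is exactly $\Pr_{a_1 \sim \mu_{v_1}, a_2 \sim \mu_{v_2}}[a_1 - a_2 = z]$, because the inner products factor across the independent choices of $g_1$ and $g_2$.

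Finally, I would draw a random assignment $\phi$ by sampling $\phi(v) \sim \mu_v$ independently for each variable $v \in V$. Since different variables are sampled independently, the probability that $\phi$ satisfies any particular $U$-constraint is exactly the quantity computed above, namely the fraction of the corresponding lifted constraints satisfied by $\Phi$. Summing over all $U$-constraints and dividing by the number $m$ of them, the expected fraction of $U$-constraints satisfied by $\phi$ equals the fraction of $\mathcal{G}(U)$-constraints satisfied by $\Phi$. By the probabilistic method, there exists a deterministic $\phi$ achieving at least this satisfaction rate, proving $\txt{sat}(U) \geq \txt{sat}(\mathcal{G}(U))$ and completing the proof.
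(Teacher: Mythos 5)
Your proof is correct and is in substance the same argument as the paper's. The paper's backward direction is a deterministic counting/averaging argument: it sums the satisfied constraints of $\mathcal{G}(U)$ over all index pairs, factors out a $q^{n-2}$ to rewrite the total as an average over tuples $(g_1,\dots,g_n)\in A^n$, and concludes there is a single tuple for which $x_{v_i}:=x_{v_i}^{g_i}-g_i$ satisfies at least a $\frac{1}{q^2}$-fraction; your probabilistic version, sampling $\phi(v)\sim\mu_v$ independently where $\mu_v$ is the pushforward of a uniform $g$ under $g\mapsto\Phi(x_v^g)-g$, induces exactly the same distribution over assignments, so the probabilistic-method step is the same averaging in different notation.
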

\begin{proof}[Proof sketch.]
	If $x_v$ is an assignment\footnote{We sometimes use a symbol like $x_v$ or $x_v^g$ to denote a specific variable, and sometimes to denote the value assigned to that variable. When $v$ is unspecified, as it is here, we mean a function assigning a value to each variable. Throughout this chapter, it should be clear from context which of the three meanings we intend. } satisfying at least an $s$ fraction of the constraints of $U$, then it is not too hard to see that the assignment $x_v^g := x_v + g$ satisfies at least an $s$ fraction of the constraints of $\mathcal{G}(U)$. For the other direction, given an assignment $x_v^g$ satisfying an $s$-fraction of the constraints of $\mathcal{G}(U)$, we argue that there exists some mapping $f: V \to A$ such that an $s$-fraction of the constraints between the variables $\{x_v^{f(v)} \suchthat v \in V\}$ are satisfied. It is then shown that the assignment $x_v := x_v^{f(v)} - f(v)$ satisfies at least an $s$-fraction of the constraints of $U$. See Appendix \ref{appGSameSatisfiability} for the details.
\end{proof}


\section{The case of $k = 2$}\label{secKEquals2}
We are now able to prove the FPC-UGC (Conjecture \ref{cnjFPCUGC}) in the special case where $\mu(\phi) = 2$. In fact, the result is slightly stronger because it holds even for $\varepsilon = 0$.\thmspace
\begin{theorem}\label{thmKEqualsTwoInapproximability}
	For all $\delta > 0$, there exists an integer $q$ such that there is no sentence $\phi$ of FPC such that $\mu(\phi) = 2$ and, for all $\tauuug$ structures $\mathbb{A}$,
	\begin{enumerate}[label={(\arabic*)}]
		\item\label{itmKEquals2Completeness} if $\mathbb{A}$ is completely satisfiable, then $\mathbb{A} \models \phi$, and
		\item\label{itmKEquals2Soundness} if $\mathbb{A}$ is not $\delta$-satisfiable, then $\mathbb{A} \not\models \phi$.
	\end{enumerate}
\end{theorem}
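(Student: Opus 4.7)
The plan is to exhibit, for each $\delta > 0$, two \gug instances $U_1, U_2$ over a common underlying graph $G$ and common abelian group $A$ of size $q$, with $U_1$ completely satisfiable and $U_2$ not $\delta$-satisfiable, such that Duplicator wins the 2-pebble bijective game on the lifted instances $\mathcal{G}(U_1)$ and $\mathcal{G}(U_2)$. By Lemma \ref{lemGSameSatisfiability} the satisfiability properties transfer to the lifts, and by Hella's theorem a Duplicator win gives $\mathcal{G}(U_1) \equiv_{C^2} \mathcal{G}(U_2)$. Since any FPC sentence $\phi$ with $\mu(\phi) = 2$ is expressible in $C^2$, $\phi$ would then be satisfied by either both or neither of $\mathcal{G}(U_1)$ and $\mathcal{G}(U_2)$, contradicting the hypothesized dichotomy.

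For the construction I would take $U_1$ to be the instance on $G$ with every edge labeled $0 \in A$ (trivially satisfied by the zero assignment), and take $U_2$ to be a \gug instance on the same graph $G$ whose edge labels are chosen so that $U_2$ is not $\delta$-satisfiable. Such $U_2$ exist once $q$ is large enough in terms of $1/\delta$, either by an explicit combinatorial construction (as in Appendix \ref{appUnsatisfiableUGConstruction}) or by a probabilistic argument on a suitable expander equipped with random shifts.

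The heart of the proof is Duplicator's winning strategy. She maintains the invariant that any pebble pair currently on the board matches $x_v^g \in \mathcal{G}(U_1)$ with $x_v^{g+\sigma(v)} \in \mathcal{G}(U_2)$ for some shift $\sigma(v) \in A$. When Spoiler picks up one of the two pebble pairs, leaving a pair $(x_v^g, x_v^{g+\sigma(v)})$ on the board (or no pair at all, in which case any bijection works and we set $\sigma(v)$ to be whatever value is convenient), Duplicator responds with the bijection $\psi(x_u^h) := x_u^{h + \sigma(u)}$, where for each neighbor $u$ of $v$ in $G$ she sets $\sigma(u) := \sigma(v) + z_2 - z_1$ (with $z_i$ the $U_i$-label on the edge $uv$), and $\sigma(u)$ is arbitrary for non-neighbors.

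A short algebraic check, using that the constraint between $x_v^g$ and $x_u^h$ in $\mathcal{G}(U_i)$ has the form $x_v^g - x_u^h = z_i + g - h$, confirms that the equations incident to $x_v^g$ in $\mathcal{G}(U_1)$ correspond exactly to those incident to $x_v^{g+\sigma(v)}$ in $\mathcal{G}(U_2)$ under $\psi$: for neighbors $u$ of $v$ the identity $z_1 + g - h = z_2 + (g+\sigma(v)) - (h+\sigma(u))$ forces precisely the rule $\sigma(u) = \sigma(v) + z_2 - z_1$; for non-neighbors (including $x_v^h$ with $h \neq g$, since $G$ is simple) no constraint exists in either structure, so the partial-isomorphism condition is vacuous. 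Hence, whatever Spoiler places as the second pebble pair, the two-pebble correspondence remains a partial isomorphism and Duplicator wins. The only real obstacle I foresee is certifying the existence of $U_2$ with the required unsatisfiability at the chosen $q$; the pebble-game portion is essentially a one-line algebraic identity exploiting the group-shift form of \gug constraints.
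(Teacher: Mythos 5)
Your proposal is correct and follows essentially the same route as the paper: take $U_1$ with all-identity (zero) constraints and $U_2$ the highly-unsatisfiable instance on the same simple graph, pass to the label-lifted instances, and have Duplicator respond with shift bijections $x_u^h \mapsto x_u^{h+\sigma(u)}$ where $\sigma$ is propagated from the pebbled vertex to its neighbours via exactly the same additive rule. The only cosmetic difference is that you keep the $U_1$-label $z_1$ generic in the shift rule $\sigma(u) := \sigma(v) + z_2 - z_1$, whereas the paper specializes to $z_1 = 0$ from the start; after instantiation these coincide, and your algebraic verification (including the remark that simplicity of $G$ rules out constraints between two variables over the same base vertex) matches the paper's displayed equivalence chain.
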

\begin{proof}
	Let $\delta$ be given, and suppose toward a contradiction that there did exist some sentence $\phi$ satisfying \ref{itmKEquals2Completeness} and \ref{itmKEquals2Soundness}. Let $U_2$ be any \gug defined over an underlying graph $G$ that is simple such that $U_2$ is not $\delta$-satisfiable (see Appendix \ref{appUnsatisfiableUGConstruction}). We then define $U_1$ to be the \gug instance obtained by turning all of the constraints in $U_2$ into identity constraints (so $U_1$ is completely satisfiable). We claim that $\mathcal{G}(U_1) \equiv_{C^2} \mathcal{G}(U_2)$.
	
	To prove this, let the variable sets of $U_1$ and $U_2$ be
	$$\{x_v \suchthat v \in V\},$$
	so that the variable sets of $\mathcal{G}(U_1)$ and $\mathcal{G}(U_2)$ are
	$$X = \{x_v^g \suchthat v \in V,\ g \in A\},$$
	as in the definition of the label-lifted instance. Duplicator's strategy in the 2-pebble bijective game is to always give a bijection $f: X \to X$ (from the universe of $\mathcal{G}(U_1)$ to the universe of $\mathcal{G}(U_2)$) with the following property:
	\begin{equation}\label{equKEquals2PreserveProp}
		\txt{For all $v \in V$, there exists $g^*(v) \in A$ such that $f(x_v^g) \equiv x_v^{g + g^*(v)}$.}
	\end{equation}
	So, in any given round, Duplicator's bijection is completely determined by a map $g^*: V \to A$.
	
	If there are no pebbles on the board when duplicator is giving a bijection, then Duplicator can choose any $g^*$. Otherwise, suppose that one pebble pair is on $x_{v_0}^{g_1}$ in the universe of $U_1$ and the corresponding pebble is on $x_{v_0}^{g_2}$ in the universe of $U_2$ (it must be the same $v_0$ for both elements, since we may assume inductively all of Duplicator's previous bijections satisfied (\ref{equKEquals2PreserveProp})). Then, for any $v \in V$, we define 
	$$g^*(v) := \threecases{\txt{if } v = v_0}{g_2 - g_1}{\txt{if } x_v - x_{v_0} = g_3 \txt{ is an equation in } U_2}{g_2 - g_1 - g_3}{\txt{otherwise}}{\txt{anything}}$$
	Note that the middle case is uniquely defined and disjoint from the first case since $G$ has no multiple edges or self-loops. The bijection $f: X \to X$ determined by $g^*$ is valid because it respects the pebble pair which is already placed:
	$$f(x_{v_0}^{g_1}) = x_{v_0}^{g_1 + g^*(v_0)} = x_{v_0}^{g_1 + (g_2 - g_1)} = x_{v_0}^{g_2}.$$
	
	Suppose that Spoiler places the second pair of pebbles on some arbitrary $(x_v^g, f(x_v^g))$. The only way Spoiler could win at this step in the game is if there was some equation between $x_v^g$ and $x_{v_0}^{g_1}$ in $\mathcal{G}(U_1)$ with no matching equation between $f(x_v^g)$ and $x_{v_0}^{g_2}$, or vice versa. Such an equation can only exist in either graph if $v$ and $v_0$ are neighbours, in which case $g^*(v)$ is defined according to the middle case, implying that, for any $g_3 \in A$,
	\begin{equation*}
		x_v - x_{v_0} = g_3 \txt{ is an equation in } U_2 \iff g^*(v) = g_2 - g_1 - g_3.
	\end{equation*}
	Therefore, for any arbitrary group element $g_4 \in A$,
	\begin{align*} 
		&&x_v^g - x_{v_0}^{g_1} = g_4 &\txt{ is an equation in } \mathcal{G}(U_1)\\
		\iff&& (x_v^g - g) - (x_{v_0}^{g_1} - g_1) = g_4 - g + g_1 &\txt{ is an equation in } \mathcal{G}(U_1)\\
		\iff&& x_v - x_{v_0} = g_4 - g + g_1 &\txt{ is an equation in } U_1\\
		\iff&& g_4 - g + g_1 = 0\\
		\iff&& g^*(v) = g_2 - g_1 - (g_4 - g - g^*(v) + g_2)\\
		\iff&& x_v - x_{v_0} = (g_4 - g - g^*(v) + g_2) &\txt{ is an equation in } U_2\\
		\iff&& (x_v^{g + g^*(v)} - g - g^*(v)) - (x_{v_0}^{g_2} - g_2) \\&&= g_4 - g - g^*(v) + g_2 &\txt{ is an equation in } \mathcal{G}(U_2)\\
		\iff&&x_v^{g + g^*(v)} - x_{v_0}^{g_2} = g_4 &\txt{ is an equation in } \mathcal{G}(U_2)\\
		\iff&&f(x_v^g) - x_{v_0}^{g_2} = g_4 &\txt{ is an equation in } \mathcal{G}(U_2),
	\end{align*}
	so Spoiler is unable to reveal a difference between the two structures. Since Spoiler can never win, Duplicator wins, and hence $\mathcal{G}(U_1) \equiv_{C^2} \mathcal{G}(U_2)$. However, by Lemma \ref{lemGSameSatisfiability}, $\mathcal{G}(U_1)$ must satisfy $\phi$ since it is completely satisfiable, while $\mathcal{G}(U_2)$ must \emph{not} satisfy $\phi$ since it is not $\delta$-satisfiable. As $\mu(\phi) = 2$, this contradicts the fact that $\mathcal{G}(U_1) \equiv_{C^2} \mathcal{G}(U_2)$; hence, no such sentence $\phi$ exists.
\end{proof}

This kind of argument is used several more times throughout this chapter, with slight variation. The meta-theorem is that, \emph{for every constraint in $U$ which is satisfied by $g^*$, the corresponding function $f$ is a partial isomorphism over all of the corresponding constraints in $\mathcal{G}(U)$.} We do not state this as a formal theorem because there are some technicalities involved when we ascribe a different meaning to the word ``satisfied" in later sections.


\section{FPC-inexpressibility of solving \ug exactly}\label{secUGExact}

Starting at $k = 3$, we run into trouble in proving FPC-inapproximability bounds, due to the following fact:\thmspace
\begin{proposition}\label{proUGCompletelySatisfiableInFPC}
	For any positive integer $q$, there is a sentence $\phi$ of LFP, where $\mu(\phi) = 3$, expressing the property that a $\textup{\UG}(q)$ instance (encoded as a $\tauuug$-structure) is completely satisfiable.
\end{proposition}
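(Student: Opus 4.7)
The plan is to directly formalize in LFP the naive propagation algorithm for deciding \ug satisfiability (pick a vertex, guess a label, propagate through the constraint graph, check consistency). For each ordered pair of labels $(a,b) \in [q]^2$, I would define a binary relation $P_{a,b}(u,v)$ meaning that, if variable $u$ is assigned label $a$, then propagation along the constraints forces variable $v$ to take label $b$. The one-step version is the quantifier-free formula
$$E_{a,b}(u,v) \ :=\ \bigvee_{\pi:\pi(a)=b} P_\pi(u,v)\ \vee\ \bigvee_{\pi:\pi(b)=a} P_\pi(v,u),$$
and $P_{a,b}$ is the simultaneous least fixed point of the $q^2$-indexed system
$$P_{a,b}(u,v) \ \equiv\ (u=v \wedge [a=b])\ \vee\ \bigvee_{c \in [q]} \exists w\, \big(E_{a,c}(u,w) \wedge P_{c,b}(w,v)\big),$$
where $[a=b]$ is $\top$ or $\bot$ according as $a = b$. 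The labels $a, b, c$ are indices into a finite family of relations and never serve as FO-variables.

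The instance is completely satisfiable iff, for every vertex $u$, some label $a$ admits consistent propagation, i.e.\ no other vertex is forced to take two distinct labels from the starting point $(u,a)$. This can be captured by the sentence
$$\phi \ :=\ \forall u \bigvee_{a \in [q]} \forall v \bigwedge_{b_1 \neq b_2 \in [q]} \neg\big(P_{a,b_1}(u,v) \wedge P_{a,b_2}(u,v)\big).$$
Correctness follows from the standard observation about \ug already recorded in the preliminaries: on any connected component, the choice of label for a single vertex uniquely determines a candidate labeling of the entire component by constraint propagation, and this candidate either satisfies every constraint of the component or contradicts itself somewhere along the way; $\phi$ asserts precisely that some consistent choice exists from every vertex, hence on every component.

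For the variable count, the outer sentence uses only the FO-variables $u, v$, and the body of the fixed point defining $P_{a,b}$ uses additionally only $w$. By reusing $w$ (and $u$) across nested unfoldings of the recursion, exactly as in the classical argument showing that transitive closure is $3$-variable-expressible in LFP, the whole sentence lies in the $3$-variable fragment of LFP and therefore translates into $C^3$, giving $\mu(\phi) \leq 3$.

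The main point to verify carefully is that the simultaneous fixed point over the $q^2$ relations $\{P_{a,b}\}$ really can be carried out within three FO-variables, even though each $P_{a,b}$ is defined in terms of the others. This should follow routinely from the method of finite expansions, since $q$ is a fixed constant of the problem and the label arguments are handled as relation indices rather than as objects quantified over the universe of the input structure.
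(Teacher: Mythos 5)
Your plan is the same propagation-via-LFP idea as the paper's proof, and the final sentence $\phi$ correctly characterizes satisfiability. The one substantive difference is the shape of the fixed-point relations: you use $q^2$ genuinely binary relations $P_{a,b}(u,v)$, while the paper uses $q^2$ unary relations $U_{i,j}(y)$ parameterized by a free ``source'' variable $x$, which lets them keep the LFP body to two FO-variables ($x,y$) instead of your three ($u,v,w$). The paper's sentence also only tests whether the source vertex itself is forced to a conflicting label, which (as you implicitly observe) is equivalent to your $\forall v$ version, since any conflict downstream can be pushed back along the path to produce a conflict at the source.

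The step you should scrutinize more carefully is the variable count after translation into infinitary counting logic. You assert that the 3-variable LFP ``therefore translates into $C^3$,'' but that inference is exactly what the paper treats delicately: their \emph{2}-variable LFP translates only into $C^3$, not $C^2$, because the cited translation (Lemma 1.28 of the reference) costs one extra variable for locally renaming the argument of each unary relation on unfolding. If the same $+1$ overhead applied uniformly, your 3-variable LFP would land you in $C^4$ and the proposition as stated would not be established. What plausibly saves you is that your $P_{a,b}$ have no free parameter playing a double role, so the unfolding can alternate the bound existential between $w$ and $u$ and stay inside $\{u,v,w\}$, exactly as in the standard argument that transitive closure lives in the 3-variable fragment; but this needs to be checked against the specific lemmas being invoked rather than asserted as routine. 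Separately, the tool you want for collapsing the $q^2$-ary simultaneous fixed point into nested individual fixed points without adding variables is the Bekic principle (which the paper explicitly cites), not the ``method of finite expansions'' --- the latter concerns defining enlarged universes inside FPC-interpretations and is not the relevant device here.
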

\begin{proof}
	For each fixed label $i \in [q]$, we define $q$ unary relations $U_{i, 1}, U_{i, 2}, \dots, U_{i, q}$, parameterized by a free variable $x$, by simultaneous induction:
	\begin{align*} 
	U_{i, i}(y) &\impliedby (x = y)\\
	U_{i, j}(y) &\impliedby \exists x \bigvee_{\pi: [q] \to [q]} \left(P_\pi(y, x) \wedge U_{i, \pi(j)}(x)\right)
	\end{align*}
	Note that the ``$\exists x$" term creates a new variable $x$, different from the $x$ in the first line (this is done purely in an effort to reduce the total number of variables). The meaning of $U_{i, j}(y)$ is that, \emph{given $x$ has label $i$, it is implied by the constraints that $y$ has label $j$}. Thus, $U_{i, i}(x)$ is defined to be true, and whenever a constraint $P_\pi$ holds on a pair of elements $(y, x)$ and we know what the label of $x$ must be, we inductively derive what the label of $y$ must be. We claim that the following sentence expresses the property that a $\textup{\UG}(q)$ instance is completely satisfiable:
	$$\phi \equiv \forall x \bigvee_{i \in [q]} \bigwedge_{\substack{j \in [q]\\j \neq i}} \neg U_{i, j}(x)$$
	If $f$ is a satisfying assignment, then picking $i = f(x)$ must satisfy the formula. Conversely, if the formula is satisfied, one can obtain a satisfying assignment by picking one $x$ from each connected component of the underlying graph and one satisfying witness $i$, then assigning labels to every $y$ in that component by taking the unique $j$ such that $U_{i, j}(y)$ holds. (It is not too hard to see that the component being connected implies $j$ exists, and the formula being satisfied implies that $j$ is unique).
	
	Using the \emph{Bekic principle} \citetext{\citealp[Lemma 1.4.2]{RudimentsOfMuCalculus}; \citealp[Lemma 10.9]{FMT}}, the simultaneous inductions can be nested within each other in a way that reuses variable names, resulting in LFP formulas for each of the $q$ relations, still using only 2 variables ($x$ and $y$). Thus, $\phi$ can indeed be written as an LFP sentence of only 2 variables. This can then be translated into a $C^3$ sentence using \cite[Lemmas 1.28 and 1.29]{InfinitaryTranslation}, so $\mu(\phi) = 3$. (The resulting sentence has one more variable due to Lemma 1.28. Each unary relation in $\phi$ requires one variable to be locally renamed.)
\end{proof}

Since LFP is a restriction of FPC, it is therefore impossible to prove a result along the lines of Theorem \ref{thmKEqualsTwoInapproximability} if $\mu(\phi) \geq 3$. However, even though it is possible to determine in FPC whether an instance is completely satisfiable, we now prove that, in the case where the input is \emph{not} completely satisfiable, it is impossible to determine in FPC what the optimal value actually is. This result is, in a way, superseded by that of Section \ref{secUGLowGap}, but the construction is interesting in its own right, and serves as a motivation for the more complicated construction in the latter proof.

Fix a positive integer $k$. We exhibit a pair of $\tauuug$-structures, $\mathbb{A}_k$ and $\mathbb{B}_k$, such that $\mathbb{A}_k$ has a strictly greater optimal value than $\mathbb{B}_k$, but $\mathbb{A}_k \equiv_{C^k} \mathbb{B}_k$. Let $H = H_k$ be a simple graph satisfying the following four properties:
\begin{enumerate}[label={(\arabic*)}]
	\item\label{itmUGExactConnected} $H$ is connected.
	\item\label{itmUGExactReg} $H$ is 3-regular.
	\item\label{itmUGExactBipartite} $H$ is bipartite.
	\item\label{itmUGExactRobberWin} The robber player wins the $(k - 1)$-cop edge-robber game\footnote{There is a vast literature on so called \emph{Cops and Robbers} games, in which a team of cops is trying to catch a robber on a graph. See \cite{CopsAndRobbersSurvey} for a survey. The author is unaware if this particular variant has already been studied.}, as defined below.
\end{enumerate}

\begin{addmargin}[1em]{2em}
	\textbf{The $k$-cop edge-robber game played on a connected graph $H$:} There are $k$ \emph{cops}, controlled by the \emph{cop player}, which may be placed on vertices of $H$; and 1 \emph{robber}, controlled by the \emph{robber player}, which may be placed on edges of $H$. The robber is initially placed on an arbitrary edge in $H$, and the cops are initially not placed anywhere. The cop player and robber player take turns, starting with the robber player. On the robber player's turn, the robber player may move the robber along any path of edges in $H$ that is unobstructed by a cop. On the cop player's turn, the cop player may move one cop to any vertex of the graph without restriction. The cop player wins if the cops can \emph{capture} the robber on an edge $\{u, v\}$, meaning surround it with cops on both vertices $u$ and $v$. The robber player wins by infinitely evading capture.
\end{addmargin}

It is not too hard to show that such graphs exist; see Appendix \ref{appCopsAndRobbers} for details. Similarly as done by Atserias, Bulatov and Dawar \cite[Lemma 3]{SolvingEquationsNotInFPC}, we use the robber player's winning strategy on the $(k - 1)$-cop edge-robber game played on $H$ as part of Duplicator's strategy in the $k$-pebble bijective game on $\mathbb{A}_k$ and $\mathbb{B}_k$.

The \ug instances we construct are \gug instances using the Klein four-group as $A$. For convenience, we denote the underlying set of $A$ by $\{e, a, b, c\}$ where $e$ is the identity element. Then group addition is completely defined by the rules that
$$a + a = b + b = c + c = e,$$
and adding any two distinct elements from $\{a, b, c\}$ yields the third element.

It is well known \cite{Konig} that any $d$-regular bipartite graph always has a perfect matching, and thus by induction, that the edge set decomposes into a disjoint union of $d$ perfect matchings. Choose such a decomposition for $H$ and assign each of the 3 matchings to a distinct element of the set $\{a, b, c\}$. Let $m: E(H) \to A$ be the map the sends each edge to the element of $A$ assigned to the matching containing that edge. In other words, $m$ labels each edge with either $a$, $b$ or $c$ such that every vertex is incident to three edges that each have a different label.

Define a \gug instance $U_1$ with group $A$ and variable set
$$\{x_v \suchthat v \in V(H)\}.$$
For every edge $\{v_1, v_2\} \in E(H)$ we have two constraints: $x_{v_1} + x_{v_2} = e$ and $x_{v_1} + x_{v_2} = m(\{v_1, v_2\})$. From $U_1$, define another instance $U_2$ by choosing an edge $\{v_1^*, v_2^*\}$ such that $m(\{v_1^*, v_2^*\}) = a$, and redefining the constraints involving that edge to be $x_{v_1} + x_{v_2} = b$ and $x_{v_1} + x_{v_2} = c$.

As in the proof of Theorem \ref{thmKEqualsTwoInapproximability}, our pair of indistinguishable structures are $\mathbb{A}_k := \mathcal{G}(U_1)$ and $\mathbb{B}_k := \mathcal{G}(U_2)$.\thmspace

\begin{lemma}\label{lemUGExactCompleteness}
	The satisfiability of $U_1$ (and thus of $\mathcal{G}(U_1)$) is $\frac12$.
\end{lemma}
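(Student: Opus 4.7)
The plan is to establish matching upper and lower bounds on the satisfiability of $U_1$, then invoke Lemma \ref{lemGSameSatisfiability} to transfer the result to $\mathcal{G}(U_1)$.

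For the upper bound, I would observe that for each edge $\{v_1, v_2\} \in E(H)$, the two associated constraints assert $x_{v_1} + x_{v_2} = e$ and $x_{v_1} + x_{v_2} = m(\{v_1, v_2\})$. Since $m(\{v_1, v_2\}) \in \{a, b, c\}$, this is distinct from $e$, so these two constraints are mutually contradictory: no assignment satisfies both simultaneously. Partitioning the constraint set into such pairs (one pair per edge) shows that at most half of all constraints can ever be satisfied, so the satisfiability of $U_1$ is at most $\frac12$.

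For the matching lower bound, I would exhibit a concrete assignment achieving this bound. The simplest is the constant assignment $x_v := e$ for every $v \in V(H)$. Under this assignment, every constraint of the form $x_{v_1} + x_{v_2} = e$ is satisfied, while every constraint of the form $x_{v_1} + x_{v_2} = m(\{v_1, v_2\})$ fails. Since exactly one of each pair is satisfied, this assignment satisfies precisely half of the constraints, giving satisfiability at least $\frac12$.

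Combining the two bounds yields satisfiability exactly $\frac12$ for $U_1$. Finally, applying Lemma \ref{lemGSameSatisfiability} gives that the satisfiability of $\mathcal{G}(U_1)$ is also $\frac12$. I do not expect any real obstacle here; the lemma is essentially an immediate pigeonhole on edge-pairs together with an explicit witness, and the main content of the construction will only appear in the soundness analysis for $U_2$ and the indistinguishability argument that follow.
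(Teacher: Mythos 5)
Your proof is correct and takes essentially the same approach as the paper: upper bound from the observation that constraints come in contradictory pairs per edge, lower bound via the constant assignment $x_v := e$, and transfer to $\mathcal{G}(U_1)$ by Lemma \ref{lemGSameSatisfiability}. You add the helpful explicit note that $m(\{v_1, v_2\}) \in \{a, b, c\} \neq e$ justifies the inconsistency of each pair, which the paper leaves implicit.
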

\begin{proof}
	Clearly, no more than $\frac12$ of the constraints of $U_1$ can be satisfied, since the constraints come in inconsistent pairs: if one is satisfied, the other must not be satisfied. The assignment $x_v := e$ attains this bound of $\frac12$ by satisfying the constraint $x_{v_1} + x_{v_2} = e$ in each pair. The claim about $\mathcal{G}(U_1)$ follows from Lemma \ref{lemGSameSatisfiability}.\thmspace
\end{proof}

\begin{lemma}\label{lemUGExactSoundness}
	\lemUGExactSoundness
\end{lemma}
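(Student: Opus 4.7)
The plan is to derive a contradiction from assuming that some assignment $f: V(H) \to A$ satisfies at least half of the constraints of $U_2$. As in Lemma~\ref{lemUGExactCompleteness}, the two constraints in each edge-pair are mutually inconsistent, so $\frac12$-satisfiability forces $f$ to satisfy exactly one constraint from every pair. Writing $h(e') := f(v_1) + f(v_2)$ for each edge $e' = \{v_1, v_2\}$, this means $h(e') \in \{e, m(e')\}$ on every non-special edge and $h(e'_*) \in \{b, c\}$ on the special edge.

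I would then identify the Klein four-group $A$ with $\mathbb{F}_2^2$ by $e = (0,0)$, $a = (1,0)$, $b = (0,1)$, $c = (1,1)$, and decompose $f = (f_1, f_2)$ coordinate-wise. The constraints become purely local parity conditions in $\mathbb{F}_2$: $f_2$ agrees on the endpoints of every non-special $a$-edge but disagrees on $e'_*$; $f_1$ agrees on the endpoints of every $b$-edge; and $f_1(v_1) + f_1(v_2) = f_2(v_1) + f_2(v_2)$ on every $c$-edge. The heart of the proof is then a global mod-$2$ counting argument exploiting the fact that each of the three color classes is a \emph{perfect matching} of $H$. Pairing vertices via the $b$-matching yields $\sum_{v \in V(H)} f_1(v) = 0$; pairing via the $a$-matching yields $\sum_{v \in V(H)} f_2(v) = 1$, with the lone $1$ coming from $e'_*$; and pairing via the $c$-matching and summing the $c$-edge constraints yields $\sum_v f_1(v) = \sum_v f_2(v)$. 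These three equations force $0 = 1$, giving the desired contradiction. The statement for $\mathcal{G}(U_2)$ then follows from Lemma~\ref{lemGSameSatisfiability}.

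The main obstacle was finding this global parity argument. A naive attempt that examines a single cycle of $H$ passing through $e'_*$ does not produce a contradiction: the free choices of $h(e') \in \{e, m(e')\}$ along such a cycle give many ways to realize any prescribed value in $\{b, c\}$ as the complementary cycle sum, so no single cycle suffices. The key insight is that the three perfect matchings have to be used \emph{simultaneously}, so that the two independent vertex sums $\sum f_1$ and $\sum f_2$ (which are separately pinned down by the $b$- and $a$-matchings) can be linked via the $c$-matching aggregation into an inconsistent linear system over $\mathbb{F}_2$.
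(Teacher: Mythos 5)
Your proof is correct, and it takes a genuinely different route from the paper's. The paper's argument is a local parity-invariance argument: it shows by a (somewhat lengthy) case analysis that changing any single variable $x_v$ by any group element alters the number of satisfied constraints by $0$ or $\pm 2$, hence preserves its parity; since the all-$e$ assignment satisfies $\abs{E(H)} - 1$ constraints and $\abs{E(H)}$ is the maximum achievable (one per pair), the wrong parity rules out ever reaching $\abs{E(H)}$. Your argument instead works globally and linear-algebraically: identifying $A$ with $\ff_2^2$, each color class's $\ff_2$-constraint is summed over the corresponding perfect matching, and because a perfect matching covers every vertex exactly once, each such sum collapses to a total vertex sum $\sum_v f_1(v)$ or $\sum_v f_2(v)$. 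The $b$- and $a$-matchings pin these two totals to $0$ and $1$ respectively (the lone flipped edge contributes the $1$), while the $c$-matching forces them to be equal — an immediate $0 = 1$ contradiction. Both proofs ultimately rest on the same two structural facts (3-regularity and the decomposition of $E(H)$ into three perfect matchings assigned $a$, $b$, $c$), but yours replaces the per-vertex case analysis with a single aggregation, making the role of the matching decomposition transparent and the argument shorter. One small remark: the reduction to "exactly one satisfied per bundle" is correct but worth stating explicitly as you do, since the lemma's conclusion is a strict inequality; your contradiction rules out achieving exactly $\frac12$, and anything above $\frac12$ is impossible outright because the pairs are contradictory.
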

\begin{proof}[Proof sketch]
	This follows from a parity argument, considering how the number of satisfied constraints changes with respect to individual changes in an assignment of variables. See Appendix \ref{appParityArgument} for the details.\thmspace
\end{proof}

\begin{lemma}\label{lemUGExactIndistinguishable}
	$\mathcal{G}(U_1) \equiv_{C^k} \mathcal{G}(U_2)$.
\end{lemma}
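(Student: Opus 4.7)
The plan is to give Duplicator a winning strategy in the $k$-pebble bijective game on $\mathcal{G}(U_1)$ and $\mathcal{G}(U_2)$, parallelling the approach of Theorem \ref{thmKEqualsTwoInapproximability}. Throughout the game, Duplicator maintains an offset $g^*_i \colon V(H) \to A$ together with a designated \emph{bad edge} $r_i \in E(H)$, playing the bijection $f_i(x_v^g) := x_v^{g + g^*_i(v)}$. The target invariant is Lemma \ref{lemUGExactClaim}: the constraints of $\mathcal{G}(U_1)$ and $\mathcal{G}(U_2)$ are matched by $f_i$ on every edge of $H$ other than $r_i$, and on $r_i$ the two constraint sets are exactly \emph{complementary} subsets of $A$. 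The initial choice $g^*_0 \equiv e$ and $r_0 := \{v_1^*, v_2^*\}$ verifies this invariant directly, because $U_1$ and $U_2$ agree on all non-special edges, while on the special edge the sets $\{e, a\}$ and $\{b, c\}$ partition $A$.

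The key computational observation is that shifting $g^*$ at a single vertex $v$ by some $h \in A$ flips the good/bad status of an incident edge of color $m$ precisely when $h \notin \{e, m\}$; since the three edges at $v$ have the three distinct colors, shifting by any $h \in \{a, b, c\}$ flips exactly the two incident edges whose color is not $h$. Consequently, given a simple path $u_0, u_1, \ldots, u_n$ in $H$, Duplicator can slide the bad edge from $\{u_0, u_1\}$ to $\{u_{n-1}, u_n\}$ by shifting $g^*(u_i)$, for each $1 \le i \le n-1$, by the unique color of the edge at $u_i$ that is not on the path. Each such shift swaps the good/bad status of the two path-edges at $u_i$ while leaving the third (off-path) edge at $u_i$ and every edge not incident to $u_i$ unchanged; hence the number of bad edges is preserved at exactly one, and the sole bad edge ends up at the desired location.

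On each round of the pebble game, Spoiler's pickup leaves at most $k-1$ vertices pebbled. Treating these pebbled vertices as cops and $r_i$ as the robber in the $(k-1)$-cop edge-robber game, Duplicator invokes the robber's winning strategy (guaranteed by property \ref{itmUGExactRobberWin} of $H$) to move the robber along a path through unpebbled vertices to a new edge $r_{i+1}$ whose endpoints are both unpebbled and which remains safe from the cops' next move. Duplicator then performs the corresponding sequence of single-vertex $g^*$-updates described above; since these updates occur only at unpebbled path-interior vertices, $g^*_{i+1}$ agrees with $g^*_i$ on every pebbled vertex and the bijection $f_{i+1}$ respects all remaining pebbled pairs. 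Spoiler's subsequent pebble placement can land on at most one endpoint of $r_{i+1}$, so no pair of pebbled vertices coincides with the endpoints of the bad edge, and the partial isomorphism condition of Lemma \ref{lemUGExactClaim} prevents Spoiler from winning. The main obstacle, which Lemma \ref{lemUGExactClaim} would have to formally dispatch, is the routine but fiddly Klein-four-group arithmetic verifying the ``flip rule'' uniformly across all edges (including the special one) and confirming that the complementary-constraints behavior is correctly transported by each path slide.
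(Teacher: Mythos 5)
Your proposal is correct and follows the paper's own proof essentially step for step: Duplicator's offset $g^*$, the designated bad edge $r_i$ tracked by the robber in the $(k-1)$-cop edge-robber game, the shift at each path-interior vertex by the colour of the off-path edge, and the invariant (Lemma \ref{lemUGExactClaim}) that $f_i$ is a partial isomorphism everywhere except at $r_i$, where the constraint sets are complementary. Your ``flip rule'' (shifting $g^*(v)$ by $h$ flips the status of an incident edge of colour $m$ exactly when $h \notin \{e, m\}$) is a clean local restatement of what that lemma's case analysis establishes, but the overall argument is the same.
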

\begin{proof}
	While playing the $k$-pebble bijective game on $\mathcal{G}(U_1)$ and $\mathcal{G}(U_2)$, Duplicator simultaneously plays the $(k - 1)$-cop edge-robber game on graph $H$, making use of the robber player's winning strategy to hide the edge where $\mathcal{G}(U_1)$ and $\mathcal{G}(U_2)$ differ. Initially, this edge is $r_0 := \{v^*_1, v^*_2\}$.
	
	Every time Duplicator needs to give a bijection between the two universes, Duplicator first updates the positions of the cops on $H$ to be
	$$\{v \in V(H) \suchthat \txt{there is a pebble pair on $(x_v^{g_1}, x_v^{g_2})$ for some $g_1, g_2 \in A$}\}.$$
	Note that, since the previous round, Spoiler has placed one pebble pair down and picked up another, so there are at most $k - 1$ cops, at most one of which has moved from the previous turn. This constitutes a valid move of the cop player, so there must exist a winning response from the robber player. So suppose that, on the $i\tth$ round, this winning strategy dictates that the robber be moved from edge $r_{i - 1}$ to edge $r_{i}$, through a path of vertices $p_0, \seq{p}{\ell}, p_{\ell + 1}$, where $r_{i - 1} = \{p_0, p_1\}$ and $r_i = \{p_\ell, p_{\ell + 1}\}$ (so if the robber does not move, then $\ell = 0$). For each $j \in [\ell]$, let $e_j$ be the unique edge incident to $p_j$ which is not part of the path, as shown in Figure \ref{figUGExactDiagram2}.\thmspace
	
	\ipns{.4}{UGExactDiagram2}{The path of the robber.}{\label{figUGExactDiagram2}The path in $H$ from the robber's old location at edge $r_{i - 1}$ to its new location at edge $r_i$.}
	
	As in the proof of Theorem \ref{thmKEqualsTwoInapproximability}, on round $i$ of the $k$-pebble bijective game, Duplicator gives a bijection $f_i$ of the form
	$$f_i(x_v^g) := x_v^{g + g^*(i, v)}.$$
	Define $g^*(0, v) := e$ for all $v$. On every round $i$ (the $i\tth$ time Duplicator is giving a bijection), Duplicator's strategy is to set 
	$$g^*(i, v) := \twocases{\txt{if } v = p_j \in \{\seq{p}{\ell}\}}{g^*(i - 1, v) + m(e_j)}{\txt{otherwise}}{g^*(i - 1, v)},$$
	where $p_j$ and $e_j$ are determined by the robber's move on round $i$ as described in the previous paragraph. Note that the only vertices for which the bijection is different from the bijection in the previous round are those involving the vertices in $\{\seq{p}{\ell}\}$. Since the robber's move is valid, none of these vertices are occupied by a cop, and hence none of these variables are pebbled. Thus, the bijection $f_i$ is valid (assuming inductively that the bijection from the previous round $f_{i - 1}$ was valid).
	
	All that remains to prove is that Spoiler cannot expose a difference between $\mathcal{G}(U_1)$ and $\mathcal{G}(U_2)$. This follows from the following stronger claim, which we prove by induction on $i$:\thmspace
	\begin{lemma}\label{lemUGExactClaim}
		\lemUGExactClaim
	\end{lemma}
	To see how this implies that Spoiler never wins, suppose to the contrary that there was a constraint between two pebbled elements $x_{v_1}^{g_1}$ and $x_{v_2}^{g_2}$ in one structure with no matching constraint between the corresponding elements in the other structure. Lemma \ref{lemUGExactClaim} says that $f_i$ is a partial isomorphism everywhere except around the edge $r_i$, so the only way this can happen is if $\{v_1, v_2\} = r_i$. But this means that the cops have trapped the robber, which is a contradiction.
\end{proof}

The full proof of Lemma \ref{lemUGExactClaim} is rather tedious, and hence is relegated to Appendix \ref{appUGExactClaim}. Instead, let us work through a simple example where $H := K_4$, the complete graph on four vertices (this graph is not bipartite, but its edge set does decompose into a disjoint union of 3 perfect matchings, which is all we actually need). The robber player has a winning strategy in the 2-cop edge-robber game on $H$ by always moving to an edge between 2 vertices which do not have cops on them. Let us see how this strategy can be used by Duplicator to survive the first 3 rounds of the 3-pebble bijective game on $\mathcal{G}(U_1)$ and $\mathcal{G}(U_2)$, where $U_1$ and $U_2$ are the \gug instances depicted in Figure \ref{figK4Example}.

\ipnc{1.3}{K4Example2}{\label{figK4Example}One possible construction for the \gug instances $U_1$ and $U_2$ using the graph $H := K_4$, where $\{v^*_1, v^*_2\} := \{v_3, v_4\}$.}

Initially, the robber is on the edge $r_0 = \{v_3, v_4\} \in E(H)$; this is the only edge where $U_1$ differs from $U_2$, and hence the only place where Spoiler could potentially expose a difference between $\mathcal{G}(U_1)$ and $\mathcal{G}(U_2)$. As Spoiler is about to place its first pebble, there is no reason for the robber to move, so Duplicator gives the identity bijection, corresponding to $g^*(1, v) \equiv e$. Suppose Spoiler places the first pebble pair on $x_{v_1}^{g_1}$ in both instances for some $g_1 \in A$. Accordingly, Duplicator places a cop on $v_1$. As there is still no movement required from the robber, Duplicator again gives the identity bijection. Now suppose Spoiler places the second pebble pair on $x_{v_4}^{g_4}$ in both instances for some $g_4 \in A$. After Duplicator places a corresponding cop on $v_4$, the winning strategy for the robber player requires the robber to move through $v_3$ to $\{v_2, v_3\}$, as illustrated in Figure \ref{figK4CopsAndRobbers}.

\ipnc{1.2}{K4CopsAndRobbers}{\label{figK4CopsAndRobbers}The first necessary move of the robber.}

Therefore, according to Duplicator's strategy, after Spoiler picks up the third pebble pair, Duplicator updates
$$g^*(3, v_3) := m(v_1, v_3) = b.$$
Why does Lemma \ref{lemUGExactClaim} still hold? Let us verify that it holds for a particular pair of variables, $x_{v_1}^{e}$ and $x_{v_3}^{e}$. The main idea is that permuting the variables involving $v_3$ by adding $b$ does not change the set of equations between $v_1$ variables and $v_3$ variables, so if Lemma \ref{lemUGExactClaim} held in round 2, it will still hold in round 3. We initially have the equations
$$x_{v_1}^{e} + x_{v_3}^{e} = e \txt{ and } x_{v_1}^{e} + x_{v_3}^{e} = b$$
in $\mathcal{G}(U_1)$, and correspondingly,
$$\equalto{x_{v_1}^{e}}{f_2(x_{v_1}^{e})} + \equalto{x_{v_3}^{e}}{f_2(x_{v_3}^{e})} = e \txt{ and } \equalto{x_{v_1}^{e}}{f_2(x_{v_1}^{e})} + \equalto{x_{v_3}^{e}}{f_2(x_{v_3}^{e})} = b$$
in $\mathcal{G}(U_2)$. After updating $g^*$, $f_3(x_{v_3}^e) = x_{v_3}^b$, in which case the pair of equations between the $\mathcal{G}(U_2)$ variables becomes
$$x_{v_1}^{e} + x_{v_3}^{b} = b \txt{ and } x_{v_1}^{e} + x_{v_3}^{b} = b + b,$$
which is the same set of equations as between $x_{v_1}^e$ and $x_{v_3}^e$ in $\mathcal{G}(U_1)$ (since $b + b = e$), so Lemma \ref{lemUGExactClaim} is still satisfied for $x_{v_1}^{e}$ and $x_{v_3}^{e}$. The other cases follow by similar reasoning. Crucially, Lemma \ref{lemUGExactClaim} tells us that the bijection $f_3$ now preserves the constraints involving the bottom edge $\{v_3, v_4\}$ (one can check that $f_2$ does not have this property), so even if the third pebble pair is placed on variables involving $v_3$, Spoiler still does not win.

Putting together Lemmas \ref{lemUGExactCompleteness}, \ref{lemUGExactSoundness} and \ref{lemUGExactIndistinguishable}, we have the following result.\thmspace
\begin{theorem}\label{thmUGExactMain}
	For any $q \geq 4$, there is no sentence of FPC expressing the property that at least half of the constraints of a $\tauuug$ structure are satisfiable (and hence no FPC-definable algorithm returning the exact satisfiability). This holds even when restricting to \gug instances using the Klein four-group defined over an underlying graph that is 6-regular and bipartite.
\end{theorem}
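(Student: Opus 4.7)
The plan is a short proof by contradiction that packages the three preceding lemmas via the pebble-game characterization of $C^k$-equivalence. Suppose, for contradiction, that $\phi$ is an FPC sentence such that a $\tauuug$-structure $\mathbb{A}$ satisfies $\phi$ if and only if $\mathbb{A}$ is $\frac12$-satisfiable. Set $k := \mu(\phi)$, so that $\phi$ translates into a sentence of $C^k$. Instantiate the construction of $U_1$ and $U_2$ using a graph $H = H_k$ that is simple, connected, $3$-regular, bipartite, and on which the robber wins the $(k-1)$-cop edge-robber game; the existence of such $H$ is deferred to Appendix \ref{appCopsAndRobbers}.

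Now simply glue the lemmas: Lemma \ref{lemUGExactCompleteness} says that $\mathcal{G}(U_1)$ is $\frac12$-satisfiable, so $\mathcal{G}(U_1) \models \phi$; Lemma \ref{lemUGExactSoundness} says that $\mathcal{G}(U_2)$ has satisfiability strictly less than $\frac12$, so $\mathcal{G}(U_2) \not\models \phi$. But Lemma \ref{lemUGExactIndistinguishable} gives $\mathcal{G}(U_1) \equiv_{C^k} \mathcal{G}(U_2)$, so no sentence of $C^k$ distinguishes them, contradicting the previous two lines. Hence no such $\phi$ exists, proving the first assertion of the theorem.

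For the strengthening to arbitrary $q \geq 4$: the constructed instances naturally live on the label set $A \cong [4]$, and we extend to $[q]$ by viewing every edge permutation as a permutation on $[q]$ that acts as the given shift on the first four labels and as the identity on the extra $q - 4$ labels. This leaves the satisfiability unchanged and does not affect any of the pebble-game arguments. For the structural claim about the underlying graph: since $H$ is $3$-regular bipartite and $U_1$ (likewise $U_2$) has exactly two parallel constraints per edge of $H$, the constraint multigraph of $U_1$ (and of $U_2$) is $6$-regular and bipartite, and bipartiteness is preserved under the lifting $\mathcal{G}$; the lifted instances remain \gug instances over the Klein four-group, as required.

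Since the three lemmas do all of the real work, there is essentially no remaining obstacle at this level of the proof; the delicate step is hidden inside Lemma \ref{lemUGExactIndistinguishable} (and in particular Lemma \ref{lemUGExactClaim}), where Duplicator's strategy in the bijective game is synchronized with the robber's winning strategy on $H$ so that the pebble configuration never forces a local discrepancy to be revealed on the ``bad'' edge $r_i$. Once that strategy is established, the top-level proof is merely the clean contradiction above.
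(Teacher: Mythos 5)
Your top-level gluing of Lemmas \ref{lemUGExactCompleteness}, \ref{lemUGExactSoundness} and \ref{lemUGExactIndistinguishable} into a contradiction via $C^k$-equivalence is exactly what the paper intends by ``Putting together Lemmas\ldots,'' and it is correct as far as the base case $q=4$ is concerned.

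However, your treatment of the ``for any $q \geq 4$'' clause has a genuine gap. You propose to pass from the $\tau_{\UG(4)}$-structures to $\tau_{\UG(q)}$-structures by extending each edge permutation to $[q]$ via the identity on the extra labels $\{5,\dots,q\}$, and you assert that this ``leaves the satisfiability unchanged.'' That assertion is false. After this extension, \emph{every} permutation used in the construction fixes the label $5$, so the assignment sending every variable of $\mathcal{G}(U_1)$ and every variable of $\mathcal{G}(U_2)$ to label $5$ satisfies \emph{every} constraint in both structures. The extended instances are therefore both completely satisfiable, the satisfiability gap collapses entirely, and there is no contradiction left to derive from $C^k$-equivalence. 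The source of the failure is that the two constraints in a bundle become simultaneously satisfiable on the new labels: mutual exclusivity of the constraints, which was the whole reason the satisfiability of $U_1$ is capped at $\tfrac12$ and the parity argument applies to $U_2$, is destroyed once both permutations act identically on $\{5,\dots,q\}$. A correct argument for larger $q$ would need a different embedding, for instance (when $q$ is a power of $2$) viewing the Klein four-group as a subgroup of $\ff_2^m$ and extending the shifts to shifts of $\ff_2^m$, so that the constraint RHS still lies in the copy of $A$ and connectivity of $H$ forces any $\tfrac12$-satisfying assignment back into a single coset of $A$, after which the original parity argument kicks in. (Note that for some values of $q$, notably $q=5$, \emph{no} extension of a permutation of $[4]$ to $[5]$ avoids fixing label $5$, so the embedding route cannot work at all there and a fundamentally different construction would be needed. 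The paper does not spell this out.)

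A secondary, more minor point: the ``$6$-regular and bipartite'' claim in the theorem statement fits the multigraph underlying $U_1$ and $U_2$ (three $H$-edges times two parallel constraints), but the structures actually fed into the bijective game and to the FPC sentence are $\mathcal{G}(U_1)$ and $\mathcal{G}(U_2)$, whose underlying constraint multigraph has degree $3\cdot 2\cdot |A| = 24$. Your write-up carefully ascribes $6$-regularity only to $U_1$, but then it is not obvious which object the theorem's structural claim is meant to describe; you should either verify that $\mathcal{G}$ preserves the stated parameters (it does not preserve the degree) or make explicit that the parameter is being read off the base instance rather than the lifted one.
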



\section{FPC-inapproximability of \ug}\label{secUGLowGap}

We now generalize the construction from Section \ref{secUGExact} to show that \ug is inapproximable to within any constant factor in FPC. Specifically, we show a $(\frac{1}{2^{\ell}}, \frac{\alpha}{2^{\ell}})$ inapproximability gap for $\UG(2^m)$ for arbitrarily small $\alpha$, where $m$ and $\ell$ are both $O(\log(\frac{1}{\alpha}))$.

Let us begin with a high-level overview of how the construction presented in this section compares with that of Section \ref{secUGExact}. In that construction, the instances $U_1$ and $U_2$ use the Klein four-group, which is the additive part of $\ff_2^2$, the 2-dimensional vector space over the finite field with 2 elements. Each pair of constraints (henceforth \emph{bundle} of constraints) between each pair of vertices gives Duplicator a choice when defining the bijection $f$ between the universes of $\mathcal{G}(U_1)$ and $\mathcal{G}(U_2)$: given the value of $g^*(v_1)$ there are always 2 choices of $g^*(v_2)$ which satisfy at least one of the constraints of the bundle between $v_1$ and $v_2$, which is useful, since satisfying a bundle makes $f$ \emph{locally consistent} with the corresponding relations of the two structures (meaning that it preserves constraints involving $v_1$ and $v_2$). Furthermore, these choices can be concatenated so that there are 4 ways to satisfy at least one constraint from each bundle along a path of length 2. Since $\ff_2^2$ only has 4 elements, this means that any path of length 2 can be made locally consistent given the values of its two endpoints, i.e., Duplicator can always find a value for $g^*(u)$ satisfying the bundles between $u$ and $v_1$, and between $u$ and $v_2$, given arbitrary values for $g^*(v_1)$ and $g^*(v_2)$. The construction presented in this section replaces $\ff_2^2$ with $\fm$, with a bundle of $2^\ell$ constraints between every pair of vertices, for suitably large integers $0 < \ell < m$. A key difference is that it is no longer possible to make an arbitrary path of length 2 locally consistent. However, we are able to show that there exists some $r$ such that it is possible to make any path of length $r$ locally consistent (see Lemma \ref{lemNUGLGRadiusProperty}). This allows Duplicator to win as long as the base graph $H$ has suitably high girth.

Formally, we begin by fixing constants $\varepsilon, \gamma \in (0, \frac12)$, e.g., they might as well just be defined as $\frac14$. Let $\alpha \in (0, 1]$ be given; the goal is to show that there is no FPC-definable $\alpha$-approximation algorithm for \ug. First, choose a positive integer $d$ (the degree of every vertex in $H$) large enough so that
\begin{align} 
	d &\geq \frac{16}{\alpha^2} \left(\ln(d) + 2 + \ln(2) - \ln(\varepsilon)\right)\label{equNUGLGParameterDGeqComplicated},\\
	d &> \frac{4}{(1 - 2\gamma)\alpha}. \label{equNUGLGParameterDStrictlyGreater}\\
\intertext{(Note that (\ref{equNUGLGParameterDStrictlyGreater}) automatically follows from (\ref{equNUGLGParameterDGeqComplicated}) for $\gamma := \frac14$.) Next, define the integers $\ell$ and $m$ to be}
	\ell &:= \left\lceil \log_2(d) + 2\log_2(e) \right\rceil \label{equNUGLGParameterLDef},\\
\intertext{(where $e = 2.718\dots$ is the base of the natural logarithm)}
	m &:= \left\lceil \ell - \log_2\left(\left(\frac12 - \gamma\right)\alpha - \frac{2}{d}\right) \right\rceil \label{equNUGLGParameterMDef}.\\
\intertext{Note that the logarithm in (\ref{equNUGLGParameterMDef}) is well-defined because of (\ref{equNUGLGParameterDStrictlyGreater}) and the fact that $\gamma < \frac12$. Finally, define the integer $r$ to be}
	r &:= \left\lceil m \ln(2) - \ln(\gamma)\right\rceil \label{equNUGLGParameterRGeq}.
\end{align}

For any $k$, let $\widetilde{H} = \widetilde{H}_k$ be any\footnote{Regular graphs of arbitrarily high girth and degree are known to exist; see Lazebnik, Ustimenko and Woldar \cite{DenseGraphsHighGirth}, for example. } $d$-regular simple graph of girth at least $(k + 1)^2 r$. For every edge $\{v_1, v_2\} \in E(\widetilde{H})$, independently choose a uniformly random vector $b(v_1, v_2) = b(v_2, v_1) \in \fm$ and a uniformly random $\ell$-dimensional subspace\footnote{What this means is, randomly choose a set of $\ell$ linearly independent vectors and take the span. Choose the first vector uniformly at random from $\fm \setminus \{0\}$, then choose each subsequent vector uniformly at random from the subset of $\fm$ which is not in the span of the previously chosen vectors. } $Z(v_1, v_2) = Z(v_2, v_1) \subseteq \fm$. Say that an edge $e \in E(\widetilde{H})$ is \emph{good} if, for all paths $v_0, \seq{v}{r}$ of length $r$ passing through $e$, the set
$$\bigcup_{i \in [r]} Z(v_{i - 1}, v_i)$$
spans $\fm$. Edges of $\widetilde{H}$ which are not good edges are called \emph{bad} edges.\thmspace

\begin{lemma}\label{lemNUGLGMostEdgesGood}
	\lemNUGLGMostEdgesGood
\end{lemma}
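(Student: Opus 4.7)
My plan is to apply the first moment method. If I can show $\Pr[e \text{ is bad}] \le \gamma/2$ for every $e \in E(\widetilde{H})$, then by linearity of expectation the expected number of bad edges is at most $(\gamma/2)|E(\widetilde{H})|$, and Markov's inequality applied to the nonnegative random variable counting bad edges yields $\Pr[\text{fraction bad} > \gamma] \le 1/2$, which is what is wanted.

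The heart of the argument is therefore the bound on $\Pr[e \text{ is bad}]$ for a fixed edge $e$. By definition $e$ is bad iff there is \emph{some} path $v_0, \dots, v_r$ of length $r$ through $e$ for which the $r$ subspaces $Z(v_{i-1}, v_i)$ fail to jointly span $\fm$. Because $\widetilde{H}$ is $d$-regular, counting by the position and orientation of $e$ inside the path gives at most $2r\,d^{r-1}$ such paths, so a union bound reduces the task to bounding, for a fixed path, the probability that $r$ independent uniformly random $\ell$-dimensional subspaces of $\fm$ fail to span $\fm$. Independence is justified because the $Z(\cdot,\cdot)$ are sampled independently across edges.

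For the latter, I use the standard characterization that a collection of subspaces fails to span $\fm$ iff all of them are contained in $w^{\perp}$ for some nonzero $w \in \fm$; there are $2^m-1$ such hyperplanes. For any single hyperplane, a uniformly random $\ell$-dimensional subspace lies inside it with probability $\binom{m-1}{\ell}_2 / \binom{m}{\ell}_2 = (2^{m-\ell}-1)/(2^m-1) \le 2^{1-\ell}$, so by independence all $r$ lie inside it with probability at most $2^{r(1-\ell)}$. Union bounding over hyperplanes and then paths yields
\[
\Pr[e \text{ bad}] \;\le\; 2r\,d^{r-1} \cdot 2^{m + r(1-\ell)}.
\]

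The remaining step is a routine (but somewhat intricate) arithmetic verification that, under the parameter definitions (\ref{equNUGLGParameterLDef})--(\ref{equNUGLGParameterRGeq}) and the lower bound on $d$ from (\ref{equNUGLGParameterDGeqComplicated}), this right-hand side is at most $\gamma/2$. The two key inequalities are $2^{\ell} \ge de^{2}$ (from the definition of $\ell$) and $e^{2r} \ge 2^{2m}/\gamma^{2}$ (from the definition of $r$, using $r \ge m\ln 2 - \ln\gamma$). Combining them collapses the bound to something of order $r\gamma^{2} \cdot 2^{r-m}/d$; since $r \approx m\ln 2 < m$, the factor $2^{r-m}$ decays exponentially in $m$, and the fact that $d$ (and hence $2^m$) is large enough in terms of $1/\alpha$ and $1/\gamma$ closes the calculation. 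The main obstacle, such as it is, is simply the bookkeeping of exponents; no further probabilistic ideas beyond the hyperplane union bound are needed.
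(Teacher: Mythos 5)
Your proposal follows the same basic structure as the paper's argument: count the paths of length $r$ through a fixed edge, union bound over hyperplanes to control the probability that the subspaces along a fixed path fail to span $\fm$, and then combine the per-edge bound with an averaging argument over the graph. Two minor points of comparison are worth flagging.

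First, your handling of the final averaging step is actually more careful than the paper's. The paper derives that the \emph{expected} fraction of bad edges is at most $\gamma$ and then states that with probability at least $\tfrac12$ the fraction is at most ``this expectation,'' which is not a valid inference in general (a random variable's median can exceed its mean). You instead aim for a per-edge failure probability of $\gamma/2$ and then correctly invoke Markov's inequality on the number of bad edges to conclude that the fraction exceeds $\gamma$ with probability at most $\tfrac12$. The paper's arithmetic in fact establishes a bound far smaller than $\gamma$ (the chain of estimates has a great deal of slack from the ceiling functions and from $\ln r \le r$), so the conclusion still holds, but your phrasing is the rigorous one.

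Second, your hyperplane bound $\frac{2^{m-\ell}-1}{2^m-1} \le 2^{1-\ell}$ is a factor of $2$ looser than what is available: since $2^{m-\ell} \le 2^m$, one has directly $\frac{2^{m-\ell}-1}{2^m-1} \le 2^{-\ell}$. Over $r$ independent subspaces this costs you an extra factor of $2^r$, which the paper avoids by instead bounding the probability that the $r\ell$ generating vectors (as independent uniform vectors; the conditioning on linear independence only helps) all fall in a common hyperplane, giving $(2^m-1)\cdot 2^{-r\ell} < 2^{m-r\ell}$. Your extra $2^r$ turns out to be absorbable --- substituting $2^{\ell}\ge de^2$ and $e^{2r}\ge 2^{2m}/\gamma^2$ as you indicate collapses the bound to $2r\gamma^2 2^{r-m}/d$, and since $r - m \le -(1-\ln 2)m + (1 - \ln\gamma)$ this decays geometrically in $m$, so the required $d$ is a small constant below the lower bound of $16$ forced by (\ref{equNUGLGParameterDGeqComplicated}). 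But it eats into the margin for no reason, and tightening to $2^{-\ell}$ gives a cleaner calculation that tracks the paper's bound of $2^{m-r\ell}$ directly. Your path count $2rd^{r-1}$ (double-counting orientations, and using $d$ rather than $d-1$) is a safe over-estimate; the paper gets the exact count $r(d-1)^{r-1}$ via the high-girth assumption and then bounds it even more crudely by $rd^{r}$, so there is nothing to lose there.
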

\begin{proof}
	See Appendix \ref{appNUGLGMostEdgesGood}.
\end{proof}

Let $H = H_k$ be the graph with vertex set $V(H) := V(\widetilde{H})$ and edge set
$$E(H) := \{e \in E(\widetilde{H}) \suchthat \txt{$e$ is a good edge}\}.$$

We define Group Unique Games instances $U_1$, $\widetilde{U}_1$, $U_2$ and $\widetilde{U}_2$ using the additive group structure on $\fm$. The variable sets of all four instances are
$$\{x_v \suchthat v \in V(H)\}.$$
For every edge $\{v_1, v_2\} \in E(\widetilde{H})$, $\widetilde{U}_1$ and $\widetilde{U}_2$ have $2^\ell$ constraints between the corresponding variables. In $\widetilde{U}_1$, the constraints are
$$\{x_{v_1} - x_{v_2} = z \suchthat z \in Z(v_1, v_2)\},$$
whereas in $\widetilde{U}_2$, the constraints are
$$\{x_{v_1} - x_{v_2} = z + b(v_1, v_2) \suchthat z \in Z(v_1, v_2)\}.$$
Finally, $U_1$ and $U_2$ are obtained from $\widetilde{U}_1$ and $\widetilde{U}_2$ by removing all constraints on pairs of variables corresponding to bad edges, i.e., with constraints defined in the exact same way as $\widetilde{U}_1$ and $\widetilde{U}_2$, but only for edges $\{v_1, v_2\} \in E(H)$.\thmspace

\begin{lemma}\label{lemNUGLGCompleteness}
	The satisfiability of $U_1$ (and thus of $\mathcal{G}(U_1)$) is $\frac{1}{2^\ell}$.
\end{lemma}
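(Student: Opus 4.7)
The plan is to establish matching upper and lower bounds on the satisfiability of $U_1$, and then invoke Lemma \ref{lemGSameSatisfiability} to transfer the result to $\mathcal{G}(U_1)$.

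For the upper bound, the key observation is that the $2^\ell$ constraints associated with any single edge $\{v_1, v_2\} \in E(H)$ form a \emph{bundle} of mutually exclusive constraints: they are of the form $x_{v_1} - x_{v_2} = z$ as $z$ ranges over the $\ell$-dimensional subspace $Z(v_1, v_2) \subseteq \fm$. Given any assignment to the variables, the quantity $x_{v_1} - x_{v_2}$ takes a single value in $\fm$, so at most one of the $2^\ell$ constraints of the bundle is satisfied. Summing over all bundles, at most a $\frac{1}{2^\ell}$ fraction of the constraints of $U_1$ can be satisfied by any assignment.

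For the lower bound, I would exhibit an explicit assignment attaining $\frac{1}{2^\ell}$: namely $x_v := 0$ for all $v \in V(H)$. Since each $Z(v_1, v_2)$ is a \emph{subspace} of $\fm$, it contains $0$, so the constraint $x_{v_1} - x_{v_2} = 0$ appears in every bundle. The zero assignment satisfies exactly this one constraint from each bundle, hitting a $\frac{1}{2^\ell}$ fraction overall.

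Combining the two bounds yields that the satisfiability of $U_1$ is exactly $\frac{1}{2^\ell}$. Finally, Lemma \ref{lemGSameSatisfiability} immediately gives the same value for $\mathcal{G}(U_1)$. There is no real obstacle here; the lemma is essentially a structural consequence of the fact that each bundle is a coset of a subspace through the origin, and the only subtlety is making sure to point out that $0 \in Z(v_1, v_2)$ whenever invoking the zero assignment.
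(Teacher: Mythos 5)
Your proof is correct and matches the paper's argument in substance: the upper bound comes from the fact that each bundle consists of mutually exclusive constraints so at most one per bundle can hold, and the lower bound comes from the zero assignment hitting the $z = 0$ constraint in every bundle (using that $Z(v_1,v_2)$ is a subspace and hence contains $0$). The paper phrases the witnessing assignment as $x_v^g := g$ on $\mathcal{G}(U_1)$ rather than $x_v := 0$ on $U_1$ followed by Lemma \ref{lemGSameSatisfiability}, but these are interchangeable and the rest of the reasoning is identical.
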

\begin{proof}
	As in the proof of Lemma \ref{lemUGExactCompleteness}, at most one constraint can be satisfied from each bundle, so the total satisfiability is at most $\frac{1}{2^\ell}$. The assignment $x_v^g := g$ attains this bound by satisfying the $z = 0$ constraint in each bundle (every subspace $Z(v_1, v_2)$ must contain $z = 0$).\thmspace
\end{proof}

\begin{lemma}\label{lemNUGLGSoundnessU2}
	\lemNUGLGSoundnessUTwo
\end{lemma}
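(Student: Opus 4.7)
The plan is to apply Hoeffding's inequality together with a union bound over all assignments, exploiting that within each bundle of $2^\ell$ constraints at most one can be satisfied by any given assignment (they share the same left-hand side $x_{v_1} - x_{v_2}$ but have $2^\ell$ distinct right-hand sides). Hence it suffices to bound the number of \emph{edges} whose bundle is satisfied.

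First I would fix an arbitrary assignment $x : V(\widetilde{H}) \to \fm$ and, for each edge $e = \{v_1, v_2\} \in E(\widetilde{H})$, define the indicator $X_e(x)$ for the event that some constraint in the bundle on $e$ in $\widetilde{U}_2$ is satisfied, i.e.\ $x_{v_1} - x_{v_2} - b(v_1, v_2) \in Z(v_1, v_2)$. For any fixed choice of $Z$, since $b(v_1, v_2)$ is uniform on $\fm$, we have $\Pr_b[X_e(x) = 1] = |Z(v_1, v_2)|/|\fm| = 2^{\ell - m}$, and the $X_e$'s are mutually independent because the $b(e)$'s are. Setting $t := \alpha(1 - \gamma)$, the choice of $m$ in (\ref{equNUGLGParameterMDef}) guarantees $2^{\ell - m} \leq (\tfrac12 - \gamma)\alpha - \tfrac{2}{d}$, so the gap $t - 2^{\ell - m} \geq \tfrac{\alpha}{2}$. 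Hoeffding's inequality then gives
\[
\Pr_b\!\left[\sum_{e \in E(\widetilde{H})} X_e(x) \geq \alpha(1 - \gamma)|E(\widetilde{H})|\right] \leq \exp\!\bigl(-\tfrac{1}{2}|E(\widetilde{H})|\alpha^2\bigr) = \exp(-dn\alpha^2/4),
\]
where $n := |V(\widetilde{H})|$ and $|E(\widetilde{H})| = dn/2$.

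Taking a union bound over all $2^{mn}$ assignments $x$, the probability that \emph{some} $x$ exceeds the threshold is at most $2^{mn}\exp(-dn\alpha^2/4)$. Combining the parameter inequalities (\ref{equNUGLGParameterDGeqComplicated}), (\ref{equNUGLGParameterLDef}) and (\ref{equNUGLGParameterMDef})---which together force $d\alpha^2/4$ to exceed $m \ln 2$ with enough margin to also absorb the $\ln(1/\varepsilon)$ term---this union bound can be shown to be at most $\varepsilon$, for $n$ sufficiently large (as ensured by the girth assumption on $\widetilde{H}$). Thus, for \emph{any} fixed $Z$, $\Pr_b[\forall x: \sum_e X_e(x) < \alpha(1-\gamma)|E(\widetilde{H})|] \geq 1 - \varepsilon$.

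Finally I would combine this with Lemma \ref{lemNUGLGMostEdgesGood}. On the joint event that (i) at most a $\gamma$ fraction of edges are bad, so $|E(H)| \geq (1 - \gamma)|E(\widetilde{H})|$ (probability $\geq \tfrac12$ over $Z$), and (ii) no assignment satisfies more than $\alpha(1-\gamma)|E(\widetilde{H})|$ bundles in $\widetilde{U}_2$ (probability $\geq 1-\varepsilon$ over $b$), every assignment satisfies at most $\sum_{e \in E(\widetilde{H})} X_e(x)$ constraints in $U_2$, giving a satisfied fraction of at most
\[
\frac{\alpha(1-\gamma)|E(\widetilde{H})|}{|E(H)| \cdot 2^\ell} \leq \frac{\alpha(1-\gamma)|E(\widetilde{H})|}{(1-\gamma)|E(\widetilde{H})| \cdot 2^\ell} = \frac{\alpha}{2^\ell}.
\]
By a Bonferroni-style argument these two events hold jointly with probability at least $\tfrac12 - \varepsilon$, yielding the lemma. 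The main obstacle I expect is the parameter bookkeeping: several interlocking logarithmic inequalities must be verified to confirm that the Hoeffding exponent $dn\alpha^2/4$ dominates the union-bound contribution $mn \ln 2 + \ln(1/\varepsilon)$, and this is exactly what (\ref{equNUGLGParameterDGeqComplicated})--(\ref{equNUGLGParameterMDef}) are designed to ensure.
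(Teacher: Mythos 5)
Your proof is correct and takes a genuinely different route from the paper at the one nontrivial step: the union bound. The paper's worry is that a naive union bound over all $(2^m)^n$ assignments of $\widetilde{U}_2$ would be ``far too many,'' so it introduces a nondeterministic algorithm that enumerates only assignments consistent with some spanning tree and some choice of $z^*$-vector per tree edge, giving a count of at most $d^n 2^{\ell n}$. You skip this and union-bound directly over all $2^{mn}$ assignments. Somewhat surprisingly, this is not a loss here: with the paper's parameters ($\gamma = \varepsilon = \tfrac14$), one has $\log_2 d + \ell > m$ because (\ref{equNUGLGParameterDGeqComplicated}) forces $d \geq 16/\alpha^2$, hence $\tfrac{\alpha}{4} - \tfrac{2}{d} \geq \tfrac{\alpha}{8}$, so $m - \ell \leq 4 + \log_2\tfrac1\alpha$ while $\log_2 d \geq 4 + 2\log_2\tfrac1\alpha$; consequently your union-bound count $2^{mn}$ is actually \emph{smaller} than the paper's $d^n 2^{\ell n}$. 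Your Hoeffding exponent is also marginally tighter (over all $\tfrac{dn}{2}$ edges rather than $\tfrac{dn}{2} - (n-1)$). The bookkeeping you flagged as the main risk does close: from (\ref{equNUGLGParameterDGeqComplicated}) one gets $\tfrac{d\alpha^2}{4} \geq 4\ln d + 8 + 12\ln 2$, and from (\ref{equNUGLGParameterLDef})--(\ref{equNUGLGParameterMDef}) and $\tfrac{\alpha}{4} - \tfrac{2}{d} \geq \tfrac{\alpha}{8}$ one gets $m\ln 2 \leq \ln d + 2 + 5\ln 2 + \ln\tfrac1\alpha$, so $\tfrac{d\alpha^2}{4} - m\ln 2 \geq 3\ln d + 6 + 7\ln 2 + \ln\alpha \geq 6 + 19\ln 2 > 2\ln 2 = -\ln\varepsilon$, independently of $n$ (so the ``$n$ sufficiently large'' qualifier is unnecessary). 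The combination with Lemma \ref{lemNUGLGMostEdgesGood} and the final ratio computation match the paper. In short: same Hoeffding-plus-union-bound skeleton, but a more elementary union bound that eliminates the spanning-tree reduction; the paper's trick would matter in a parameter regime where $m$ dominates, but for the regime dictated by (\ref{equNUGLGParameterDGeqComplicated})--(\ref{equNUGLGParameterMDef}) your direct count is already enough.
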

\begin{proof}[Proof sketch]
	The main idea is that, since very few edges are bad edges by Lemma \ref{lemNUGLGMostEdgesGood}, the satisfiability of $\widetilde{U}_2$ closely approximates the satisfiability of $U_2$. Since $\widetilde{U}_2$ is sufficiently dense and the constraints are random, it is highly unlikely that there is any assignment satisfying significantly more constraints than a random assignment. See Appendix \ref{appNUGLGSoundnessU2} for the details.\thmspace
\end{proof}

\begin{lemma}\label{lemNUGLGRadiusProperty}
	\lemNUGLGRadiusProperty
\end{lemma}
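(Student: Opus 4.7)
The plan is to reduce the partial isomorphism requirement to a single algebraic condition per edge of the path, and then invoke the good-edge property to show these conditions are simultaneously solvable given the two prescribed endpoint values.

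First I would unpack, for a single edge $\{v_{i-1}, v_i\}$, what it means for $f$ to preserve constraints between the fibres $\{x_{v_{i-1}}^g\}$ and $\{x_{v_i}^g\}$. By the definition of $\mathcal{G}(U_1)$ and $\mathcal{G}(U_2)$, the constraints between $x_{v_{i-1}}^{g_1}$ and $x_{v_i}^{g_2}$ in $\mathcal{G}(U_1)$ have right-hand sides forming the coset $Z(v_{i-1}, v_i) + g_1 - g_2$, while the image constraints in $\mathcal{G}(U_2)$ have right-hand sides forming $Z(v_{i-1}, v_i) + b(v_{i-1}, v_i) + g_1 - g_2 + g^*(v_{i-1}) - g^*(v_i)$. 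Requiring these cosets to coincide for all $g_1, g_2$ is equivalent to the single condition $g^*(v_i) - g^*(v_{i-1}) \in b(v_{i-1}, v_i) + Z(v_{i-1}, v_i)$. The high-girth assumption on $\widetilde{H}$ rules out chords of $p$ in $H$ at the path lengths used in the application, so these per-edge conditions are the only obligations for $f$ to be a partial isomorphism on the whole variable set.

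Setting $\delta_i := g^*(v_i) - g^*(v_{i-1})$ and writing $\delta_i = b(v_{i-1}, v_i) + z_i$ with $z_i \in Z(v_{i-1}, v_i)$, the task reduces to finding $z_i$'s whose sum equals the prescribed target $g^*(v_n) - g^*(v_0) - \sum_{i=1}^n b(v_{i-1}, v_i)$. Such a choice exists for every target in $\fm$ precisely when the sum of subspaces $\sum_{i=1}^n Z(v_{i-1}, v_i)$, equivalently the span of $\bigcup_{i=1}^n Z(v_{i-1}, v_i)$, equals all of $\fm$. This is where $n \geq r$ enters: I would pick any contiguous sub-path of $p$ of length exactly $r$, observe that each of its edges lies in $H$ and is therefore good by construction of $H$, and then apply the good-edge property to any one such edge, treating the sub-path itself as a length-$r$ path through it. This yields that the union of $Z$'s along the sub-path already spans $\fm$, and the union along all of $p$ contains it and hence also spans.

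With the spanning condition in hand, the required $z_i$'s exist by linear algebra, and I would then define $g^*(v_i) := g^*(v_{i-1}) + b(v_{i-1}, v_i) + z_i$ recursively; the choice of target for $\sum z_i$ guarantees the definition telescopes consistently to the prescribed $g^*(v_n)$. The main obstacle I anticipate is merely the bookkeeping around potential chords of $p$ in $H$: the lemma as stated does not explicitly bound $n$ in terms of the girth, but since only path lengths much smaller than $(k+1)^2 r$ arise in the intended application, no chords exist and the per-edge analysis suffices.
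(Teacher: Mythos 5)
Your proposal is correct and follows essentially the same route as the paper's proof: reduce to the per-edge coset condition $g^*(v_i) - g^*(v_{i-1}) \in b(v_{i-1},v_i) + Z(v_{i-1},v_i)$, use the good-edge property (applied to a length-$r$ subpath of $p$) to conclude that the $Z$-subspaces along $p$ span $\fm$, and then define $g^*$ inductively so the differences telescope to the prescribed endpoint values. Your closing remark about chords is a thoughtful clarification of a point the paper leaves implicit, since the paper's proof likewise only verifies preservation for consecutive edges of $p$ and thus also tacitly relies on the absence of chords in the paths arising in the application.
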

\begin{proof}[Proof sketch.]
	If $f$ is a partial isomorphism, we can add any vector $z \in Z(v_{i - 1}, v_i)$ to $g^*(v_{i - 1})$ or $g^*(v_{i})$ and $f$ will still preserve the constraints involving $v_{i - 1}$ and $v_i$. Since all edges are good, the set of all such vectors $z$ in each subspace along $p$ spans $\fm$, so we can add vectors at the appropriate places in the path to obtain any desired difference between $g^*(v_0)$ and $g^*(v_n)$. See Appendix \ref{appNUGLGRadiusProperty} for the details.\thmspace
\end{proof}

\begin{lemma}\label{lemNUGLGIndistinguishable}
	$\mathcal{G}(U_1) \equiv_{C^k} \mathcal{G}(U_2)$.
\end{lemma}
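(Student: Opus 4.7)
The plan is to lift the 2-pebble strategy of Theorem~\ref{thmKEqualsTwoInapproximability} and the $k$-pebble strategy of Lemma~\ref{lemUGExactIndistinguishable} to this more general setting. Duplicator will give, on each round $i$, a bijection of the form $f_i(x_v^g) := x_v^{g + g^*_i(v)}$ determined by a function $g^*_i: V(H) \to \fm$, maintaining as invariant that $f_i$ restricts to a partial isomorphism on the pebbled elements. Unfolding the definitions of $\mathcal{G}(U_1)$ and $\mathcal{G}(U_2)$, one first checks that $f_i$ preserves the bundle of constraints on an edge $\{v_a, v_b\} \in E(H)$ precisely when
$$g^*_i(v_b) - g^*_i(v_a) \in b(v_a, v_b) + Z(v_a, v_b).$$
So before Spoiler places the new pebble, Duplicator must commit to a $g^*_i$ for which this condition holds along every edge joining a vertex of $T_{i-1}$ (the surviving pebbled vertices) to any candidate placement vertex $v^\ast \in V(H)$.

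To produce such a $g^*_i$, I would associate to each pebbled vertex $u \in T_{i-1}$ a fixation region $F_i(u) \subseteq V(H)$: a tree-shaped neighbourhood of $u$ in $H$ deep enough to contain every neighbour of $u$ and, for any other pebbled vertex $u'$ reachable inside $F_i(u)$, a path of length at least $r$ joining $u$ to $u'$. The girth bound $(k+1)^2 r$ of $\widetilde{H}$, which passes to $H$, together with $|T_{i-1}| \leq k - 1$, lets these regions be built inductively round by round so that they are pairwise disjoint and satisfy Lemma~\ref{lemGirth}: no path inside $F_i(u)$ has both endpoints in $T_{i-1}$. Duplicator then sets $g^*_i(u) := g^*_{i-1}(u)$ for $u \in T_{i-1}$ and applies Lemma~\ref{lemNUGLGRadiusProperty} within each $F_i(u)$ to extend $g^*_i$ so that $f_i$ is a partial isomorphism over all of $F_i(u)$; outside $\bigcup_u F_i(u)$, $g^*_i$ is defined arbitrarily.

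Spoiler's new pebble then lies either inside some $F_i(u)$, where by construction $f_i$ is already a partial isomorphism on the entire region (and hence on the pair $\{u, v^\ast\}$), or outside every fixation region, in which case $v^\ast$ has no neighbour in $T_{i-1}$ and the new pebble imposes no fresh edge constraint. In either case the invariant is maintained, Spoiler never exposes a difference, and therefore $\mathcal{G}(U_1) \equiv_{C^k} \mathcal{G}(U_2)$.

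The main obstacle is the inductive construction of the fixation regions across rounds: the regions must be grown so that pairwise disjointness and Lemma~\ref{lemGirth} are preserved through any $k$-round play, while also inheriting every value of $g^*$ that has been committed at previously pebbled vertices. The girth bound $(k+1)^2 r$ is sized precisely so that, across $k$ rounds, the regions can expand by total depth $O(kr)$ without creating any short cycle or path trap. The geometric verification of these properties and the repeated application of Lemma~\ref{lemNUGLGRadiusProperty} would be relegated to the appendix, in the spirit of the detailed case analysis supporting Lemma~\ref{lemUGExactClaim}.
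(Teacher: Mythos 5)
Your proposal departs from the paper's proof in a structural way that leaves a genuine gap. The paper does \emph{not} commit to a single global $g^*_i$: Duplicator maintains a whole \emph{fan} of candidate assignments $g^*(i, u, \cdot)$, one for every possible placement vertex $u \in V(H)$, each defined only on a minimal Steiner tree $T_i(u)$ connecting $u$ to all currently pebbled vertices, and then presents Spoiler with the ``diagonal'' bijection $f_i(x_v^g) := x_v^{g + g^*(i, v, v)}$. The global map $f_i$ is not claimed to be a partial isomorphism anywhere except at pebbled elements; the partial-iso property is verified instead for the local map $f_{i, u^*_i}$ restricted to the one Steiner tree $T_i(u^*_i)$ that Spoiler actually pins down, and then transferred to $f_i$ because the two maps agree at $u^*_i$ and at all pebbled vertices. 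Moreover, the paper's $F_i(u)$ is not a ball around a pebbled vertex but the collection of \emph{short} segments (length $< r$) of $T_i(u) \setminus T_{i-1}$ between branch points, and these short segments are handled by a greedy edge-by-edge propagation of $g^*$ values, not by Lemma~\ref{lemNUGLGRadiusProperty}; that lemma is applied only to the complementary \emph{long} segments (length $\geq r$).

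The concrete step that would fail in your version is the use of Lemma~\ref{lemNUGLGRadiusProperty} inside the fixation regions. That lemma needs paths of length at least $r$, but Spoiler can readily arrange two pebbled vertices $u, u'$ at distance strictly between $1$ and $r$. Then any tree-shaped region containing both must contain the unique short path between them, and for $g^*_i$ to make $f_i$ a partial isomorphism along that path you need $g^*_i(u) - g^*_i(u')$ to lie in a specific coset of $\sum_j Z(v_{j-1}, v_j)$, a subspace of dimension at most $\ell \cdot (\text{path length}) < \ell r$, which need not be all of $\fm$. Since $g^*_i(u)$ and $g^*_i(u')$ are already inherited from earlier rounds, there is no guarantee this membership holds, and Lemma~\ref{lemNUGLGRadiusProperty} gives no help. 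The paper avoids the problem because the greedy propagation on the short segments \emph{creates} compatibility at the moment a short path is first forced into the Steiner tree, and because only the tree actually pinned down by Spoiler's move is ever required to carry a partial isomorphism. Your single-global-commitment architecture cannot recover that: once you define $g^*_i$ ``arbitrarily'' outside the fixation regions and Spoiler pebbles one of those vertices, the arbitrary value becomes a binding constraint you may be unable to honour two moves later. To repair the proof you would essentially need to reintroduce the Steiner-tree fan and the diagonal bijection, at which point you recover the paper's argument.
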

\begin{proof}
	It is without loss of generality to assume $H$ is connected, for otherwise Duplicator can apply the strategy presented here on each connected component separately. On every round $i$ of the $k$-pebble bijective game played on $\mathcal{G}(U_1)$ and $\mathcal{G}(U_2)$, for any $u \in V(H)$, let $T_i(u)$ be a minimal tree containing $u$ and all \emph{pebbled vertices} of $H$ (that is, vertices $v \in V(H)$ such that some variable $x_v^g$ is pebbled in one of the two structures) just after Spoiler has picked up a pebble. Let $P_i(u)$ denote the set of all of the vertices in $T_i(u)$ which have degree at least 3 or contain a pebbled vertex, also including $u$. Define $T_i := T_i(u^*_i)$ and $P_i := P_i(u^*_i)$, where $u^*_i$ is the new vertex pebbled in round $i$. Finally, define the forest $F_i(u)$ to be the subgraph of $T_i(u) \setminus T_{i - 1}$ (what this notation means is, remove all edges in $T_{i - 1}$ from $T_i(u)$, then remove isolated vertices) consisting of all segments in $T_i(u)$ between vertices in $P_i(u) \cup V(T_{i - 1})$ which have length less than $r$. See Figure \ref{figBigTree} for an example.
	
	\ipns{.75}{BigTree}{The tree $T_i(u)$.}{\label{figBigTree}The tree consisting of all vertices and edges in the figure is $T_i(u)$. This is a minimal tree that includes all pebbled vertices, which are filled in red, and vertex $u$, which is near the top left corner. The green dashed line outlines the boundary of $T_{i - 1}$ (not all vertices and edges of this tree are shown, just those that intersect $T_i(u)$). Assuming that $r = 3$ (which is not nearly large enough for this many pebbles; this is just for the purpose of illustration), the forest $F_i(u)$ is as depicted in blue, consisting of the lettered vertices $A$ through $H$ and all of the edges between those vertices.}
	
	We need the following lemma, which is proved in Appendix \ref{appGirth}. The proof uses the fact that $H$ has girth at least $(k + 1)^2 r$.\thmspace
	\begin{lemma}\label{lemGirth}
		\lemGirth
	\end{lemma}

	Let $X_i(u)$ denote the variable set of $\mathcal{G}(U_1)$ and $\mathcal{G}(U_2)$ restricted to $T_i(u)$,
	$$X_i(u) := \{x_v^g \suchthat v \in T_i(u),\ g \in \fm\}.$$
	On each round $i$, Duplicator's strategy is to define functions
	$$g^*(i, u, \cdot): V(T_i(u)) \to \fm$$
	for each $u \in V(H)$, satisfying the following two properties:
	\begin{enumerate}[label={(\arabic*)}]
		\item\label{itmNUGLGProp1} For any pebbled vertex $v \in V(H)$, $g^*(i, u, v) = g^*(i - 1, u^*_{i - 1}, v)$.
		\item\label{itmNUGLGProp2} The map $f_{i, u}: X_i(u) \to X_i(u)$ defined by
		$$f_{i, u}(x_v^g) := x_v^{g + g^*(i, u, v)}$$
		gives a partial isomorphism between $\mathcal{G}(U_1)$ and $\mathcal{G}(U_2)$.
	\end{enumerate}
	Duplicator then presents Spoiler with the bijection
	$$f_i(x_v^g) := x_v^{g + g^*(i, v, v)},$$
	which is valid by property \ref{itmNUGLGProp1}. No matter which vertex $u^*_i$ Spoiler chooses, the map $f_{i, u^*_i}$ agrees with $f_i$ over $u^*_i$, so we know that $f_{i, u^*_i}$ respects all pebble pairs since $f_i$ does. Therefore, since the edge between any pair of adjacent pebbled vertices of $H$ must be in $T_i$ (by applying Lemma \ref{lemGirth} to round $i + 1$), Spoiler cannot win, as the map sending each pebbled element in the universe of $\mathcal{G}(U_1)$ to the correspondingly-pebbled element of $\mathcal{G}(U_2)$ is a restriction of $f_{i, u^*_i}$, which is a partial isomorphism by property \ref{itmNUGLGProp2}.
	
	All that remains is to show how Duplicator can satisfy properties \ref{itmNUGLGProp1} and \ref{itmNUGLGProp2} on each round $i$, assuming inductively that they are satisfied on round $i - 1$. Fix a vertex $u \in V(G)$. Duplicator defines $g^*(i, u, \cdot)$ in three steps: first over $V(T_i(u)) \cap V(T_{i - 1})$, then over $V(F_i(u)) \setminus V(T_{i - 1})$, then finally, over the remaining vertices $(V(T_i(u)) \setminus V(T_{i - 1})) \setminus V(F_i(u))$.
	
	Over $V(T_i(u)) \cap V(T_{i - 1})$, Duplicator simply sets
	$$g^*(i, u, v) := g^*(i - 1, u^*_{i - 1}, v),$$
	which is well-defined over $V(T_{i - 1})$ and clearly satisfies both properties \ref{itmNUGLGProp1} and \ref{itmNUGLGProp2}, inductively assuming that $g^*(i - 1, u^*_{i - 1}, \cdot)$ did. Since $V(T_i(u)) \cap V(T_{i - 1})$ contains all pebbled vertices, we no longer have to worry about property \ref{itmNUGLGProp1}; we just have to define $g^*(i, u, \cdot)$ on the remainder of $V(T_i(u))$ so that property \ref{itmNUGLGProp2} is satisfied.
	
	Duplicator then uses the following algorithm to define $g^*(i, u, \cdot)$ over $V(F_i(u)) \setminus V(T_{i - 1})$:
	\begin{algorithm}[H]
		\While{\textbf{true}}
		{
			\uIf{there exists $\{v_1, v_2\} \in E(F_i(u))$ such that $g^*(i, u, v_1)$ is defined but $g^*(i, u, v_2)$ is not defined}
			{
				$g^*(i, u, v_2) \gets g^*(i, u, v_1) + b(v_1, v_2)$\;
			}
			\uElseIf{there exists $v \in V(F_i(u))$ such that $g^*(i, u, v)$ is not defined}
			{
				$g^*(i, u, v) \gets$ anything\;
			}
			\Else
			{
				\KwRet{}\;
			}
		}
	\end{algorithm}
	
	Observe that the constraints involving each edge in $F_i(u)$ considered in the first case are preserved by $f_{i, u}$: for all $g_1, g_2, z \in \fm$,
	\begin{align*} 
		&& x_{v_1}^{g_1} - x_{v_2}^{g_2} = z &\txt{ is an equation in } \mathcal{G}(U_1)\\
		\iff&& (x_{v_1} + g_1) - (x_{v_2} + g_2) = z &\txt{ is an equation in } U_1\\
		\iff&& (x_{v_1} + g_1) - (x_{v_2} + g_2) = z + b(v_1, v_2) &\txt{ is an equation in } U_2\\
		\iff&& (x_{v_1} + g_1 + g^*(i, u, v_1)) \hspace{3cm}\\&& - (x_{v_2} + g_2 + g^*(i, u, v_1) + b(v_1, v_2)) = z &\txt{ is an equation in } U_2\\
		\iff&& x_{v_1}^{g_1 + g^*(i, u, v_1)} - x_{v_2}^{g_2 + g^*(i, u, v_1) + b(v_1, v_2)} = z &\txt{ is an equation in } \mathcal{G}(U_2)\\
		\iff&& x_{v_1}^{g_1 + g^*(i, u, v_1)} - x_{v_2}^{g_2 + g^*(i, u, v_2)} = z &\txt{ is an equation in } \mathcal{G}(U_2)\\
		\iff&& f_{i, u}(x_{v_1}^{g_1}) - f_{i, u}(x_{v_2}^{g_2}) = z &\txt{ is an equation in } \mathcal{G}(U_2).
	\end{align*}
	
	For example, if $F_i(u)$ is as in Figure \ref{figBigTree}, then the first iteration of the algorithm would define $g^*(i, u, B)$ so that the constraints involving $A$ and $B$ are consistent under $f_{i, u}$. The next iteration would then define $g^*(i, u, C)$ so that the constraints involving $B$ and $C$ are consistent. Similarly, the next two iterations would set $g^*(i, u, D)$ and $g^*(i, u, E)$ (these could happen in either order). On the fifth iteration, we would hit the second case of the algorithm and set one of $g^*(i, u, F)$, $g^*(i, u, G)$ or $g^*(i, u, H)$ arbitrarily. The final two iterations would set the other two values according to the first case.
	
	Since the edges encountered in the first case are always made consistent, the only way that $f_{i, u}$ could fail to be a partial isomorphism over $F_i(u)$ is if, at some iteration, there were two different edges satisfying the condition in the first case. Since $F_i(u)$ is a forest, the only way that this could happen is if some connected component of $F_i(u)$ had two distinct vertices $v_1$ and $v_2$ on which $g^*(i, u, \cdot)$ was already defined before the algorithm started, which can only happen if $v_1, v_2 \in V(T_{i - 1})$. But this means that there is a path in $F_i(u)$ from $v_1$ to $v_2$ that violates Lemma \ref{lemGirth}. Thus, property \ref{itmNUGLGProp2} is still satisfied.
	
	At this point, the only remaining edges of $T_i(u)$ which Duplicator needs to worry about are those which are in $T_i(u) \setminus T_{i - 1}$ but are not in $F_i(u)$. By the definition of $F_i(u)$, this consists of paths of length at least $r$, each with a disjoint set of intermediate vertices. Since $g^*(i, u, \cdot)$ has not yet been defined on any of the intermediate vertices, Duplicator can apply Lemma \ref{lemNUGLGRadiusProperty} to each one separately. Thus, \ref{itmNUGLGProp2} is satisfied over the entirety of $X_i(u)$.
\end{proof}

Putting these lemmas together, we can now prove the main result of this chapter.\thmspace

\begin{theorem}\label{thmUGLowGapMain}
	For any constant $\alpha > 0$, there exists a positive integer $q = O(\frac{1}{\alpha^2} \log(\frac{1}{\alpha}))$ such that there is no FPC-definable $\alpha$-approximation algorithm for \textup{\UG($q$)}. This holds even when restricting to \gug instances.
\end{theorem}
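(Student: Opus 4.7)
Everything needed is in place: Lemmas \ref{lemNUGLGCompleteness}, \ref{lemNUGLGMostEdgesGood}, \ref{lemNUGLGSoundnessU2}, and \ref{lemNUGLGIndistinguishable} together constitute a completeness--soundness--indistinguishability package with gap $(\tfrac{1}{2^\ell},\, \tfrac{\alpha}{2^\ell})$, and the remaining task is just to assemble them into a contradiction against a hypothetical FPC-definable $\alpha$-approximation algorithm.

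Given $\alpha$, the plan is first to fix the integers $d$, $\ell$, $m$, $r$ as in (\ref{equNUGLGParameterDGeqComplicated})--(\ref{equNUGLGParameterRGeq}) and set $q := 2^m$; crucially these depend only on $\alpha$, not on any putative adversary. Suppose toward a contradiction that $\Theta$ is an FPC-definable $\alpha$-approximation algorithm for \UG($q$) restricted to \gug instances, and let $k := \mu(\Theta)$. I then invoke the probabilistic construction of this section at this value of $k$, drawing $\widetilde{H}_k$ of girth at least $(k+1)^2 r$ together with the random data $b$ and $Z$. By Lemma \ref{lemNUGLGSoundnessU2}, with probability at least $\tfrac{1}{2} - \varepsilon > 0$ the resulting instance $U_2$ has satisfiability less than $\tfrac{\alpha}{2^\ell}$; fix any realization for which this holds, and let $U_1$ be the companion instance from the same construction.

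Now I combine the ingredients. By Lemma \ref{lemNUGLGCompleteness} together with Lemma \ref{lemGSameSatisfiability}, the satisfiability of $\mathcal{G}(U_1)$ is exactly $\tfrac{1}{2^\ell}$, so the $\alpha$-approximation guarantee forces $\Theta(\mathcal{G}(U_1)) \geq \tfrac{\alpha}{2^\ell}$. Meanwhile Lemma \ref{lemNUGLGSoundnessU2} bounds the satisfiability of $\mathcal{G}(U_2)$ strictly below $\tfrac{\alpha}{2^\ell}$, whence $\Theta(\mathcal{G}(U_2)) < \tfrac{\alpha}{2^\ell}$. Hence $\Theta(\mathcal{G}(U_1)) \neq \Theta(\mathcal{G}(U_2))$. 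However, Lemma \ref{lemNUGLGIndistinguishable} asserts $\mathcal{G}(U_1) \equiv_{C^k} \mathcal{G}(U_2)$, and by the translation of FPC sentences into $C^k$ recalled in Section \ref{subPebblingGameDefinition}, every FPC-interpretation of $\mu$-value at most $k$ must agree on $C^k$-equivalent structures; this yields the desired contradiction.

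Finally, the asymptotic bound on $q$ is obtained by tracing the parameter cascade: (\ref{equNUGLGParameterDGeqComplicated}) admits $d = O\!\left(\tfrac{1}{\alpha^2}\log\tfrac{1}{\alpha}\right)$, then (\ref{equNUGLGParameterLDef}) gives $2^\ell = O(d)$, and (\ref{equNUGLGParameterMDef}) gives $2^m = O(2^\ell/\alpha)$, so $q = 2^m$ is polynomial in $\tfrac{1}{\alpha}$, matching the claimed bound up to absorbing constant and logarithmic factors. I do not anticipate any substantive obstacle in formalizing this; the only delicate point is the order of quantifiers---$q$ must be pinned down before $k$---which the construction supports because $k$ enters only through the girth requirement on $\widetilde{H}_k$, while $q$ depends on $\alpha$ alone.
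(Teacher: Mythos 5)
Your proof is correct and follows essentially the same route as the paper: instantiate the construction of Section~\ref{secUGLowGap} at $k := \mu(\Theta)$, use Lemmas~\ref{lemNUGLGCompleteness}, \ref{lemNUGLGSoundnessU2}, and~\ref{lemNUGLGIndistinguishable} to obtain, with positive probability, a pair $\mathcal{G}(U_1) \equiv_{C^k} \mathcal{G}(U_2)$ with optimal values separated by more than a factor of $\alpha$, and derive a contradiction. You handle the quantifier ordering slightly more carefully than the paper's own proof, which nominally assumes the approximation algorithm for \UG($q$) before $q$ has been fixed; your observation that $q$ depends only on $\alpha$ (via $d, \ell, m$) while $k$ enters only through the girth requirement on $\widetilde H_k$ is exactly the reason this is sound. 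One cosmetic slip: you compare $\Theta(\mathcal{G}(U_i))$ against $\tfrac{\alpha}{2^\ell}$ rather than $\tfrac{\alpha n}{2^\ell}$ (where $n$ is the number of constraints); the approximation guarantee is stated in terms of the optimal value, not the satisfiability fraction, but of course the contradiction goes through identically either way.
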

\begin{proof}
	Suppose toward a contradiction that there was an FPC-definable $\alpha$-approximation algorithm for \UG($q$), i.e., an FPC-interpretation $\Theta$ of $\tau_\qq$ in $\tauuug$. Let $k := \mu(\Theta)$. Then fix $\gamma := \varepsilon := \frac14$ and use the construction defined in this section to pick a sufficiently high integer $m$. Let $q := 2^m$ (see Appendix \ref{appBigO} for a derivation of the bound on $q$). It follows from Lemmas \ref{lemNUGLGCompleteness}, \ref{lemNUGLGSoundnessU2} and \ref{lemNUGLGIndistinguishable} that, with probability at least $\frac12 - \varepsilon = \frac14$, this construction succeeds in producing a pair of $C^k$-equivalent $\tauuug$-structures, $\mathcal{G}(U_1)$ and $\mathcal{G}(U_2)$ (which are, in fact, \gug instances), whose optimal values differ by a factor of $\alpha$. Specifically, when the construction succeeds, the optimal value of $\mathcal{G}(U_1)$ is $\frac{n}{2^\ell}$ and the optimal value of $\mathcal{G}(U_2)$ is strictly less than $\frac{\alpha n}{2^\ell}$, where $n$ is the total number of constraints. Since the probability of success is nonzero, there is \emph{some} pair of structures produced by this construction satisfying those properties. As $\mathcal{G}(U_1) \equiv_{C^k} \mathcal{G}(U_2)$, $\Theta$ must yield the same value $x \in \qq$ on both instances. Since $\Theta$ gives an $\alpha$-approximation on $\mathcal{G}(U_1)$, we have $\alpha \cdot \frac{n}{2^\ell} \leq x$ (recall the definition from Section \ref{secDefinableInapproximabilityReview}). However, since $\Theta$ gives an $\alpha$-approximation on $\mathcal{G}(U_2)$, we have $x < \frac{\alpha n}{2^\ell}$. We have a contradiction, so no such interpretation $\Theta$ exists.
\end{proof}


\chapter{Conclusion}\label{chaConclusion}

On the surface, the main takeaway from this thesis is, ``Nothing is really different with regard to approximating CSPs when we restrict to FPC-definable algorithms." The best known approximation algorithms turn out to be FPC-definable, and so are the reductions proving that these algorithms are optimal. The status of \ug remains a key missing piece of the puzzle, yet while the existence of a $(1 - \varepsilon, \delta)$ inapproximability gap is unknown, weaker bounds still hold.

However, there are some key respects in which the FPC-definability requirement makes reasoning about approximating CSPs quite different. First, the need for a ``rounding" algorithm to run in polynomial time completely disappears. All that matters is the analysis of such an algorithm---specifically, that it provides an elementary way of computing the optimal value, without breaking the symmetry of the SDP solution matrix. Second, the method for proving lower bounds is completely different, as we no longer rely on the assumption that $\PP \neq \NP$. It is so different that the fundamental problem which is shown to be inexpressible in FPC, distinguishing $\mathcal{G}(U_1)$ from $\mathcal{G}(U_2)$ as defined in Section \ref{secUGLowGap}, is not even $\NP$-hard\footnote{To see this, observe that a given bundle of constraints in either structure is satisfiable if and only if a certain system of $m - \ell$ linear equations over $\fm$ is solvable. In $\mathcal{G}(U_1)$, the union of all of these systems is completely satisfiable, while in $\mathcal{G}(U_2)$, they are not, so distinguishing $\mathcal{G}(U_1)$ from $\mathcal{G}(U_2)$ can be accomplished by Gaussian elimination. }. As such, there is hope that the technique used to prove Theorem \ref{thmUGLowGapMain} can be extended to eventually resolve the FPC-UGC (Conjecture \ref{cnjFPCUGC}) before the ordinary UGC is resolved.

Besides the FPC-UGC, there are several interesting open FPC-approximability questions which are not addressed by this work. Not all problems can be phrased as CSPs in Raghavendra's framework (for example, \vc, \tsp), so the optimal FPC-approximabilities of these problems are yet unknown.



\singlespacing
\bibliographystyle{unsrt} 
\bibliography{bibliography}

\newpage
\appendix
\chapter*{Appendix}
\addcontentsline{toc}{chapter}{Appendix}
\renewcommand{\thesubsection}{\Alph{subsection}}
\numberwithin{theorem}{subsection}

The purpose of this appendix is to give complete proofs of some of the more technical results of Chapter \ref{chaUG} that were omitted due to space constraints.


\subsection{Highly unsatisfiable \gug construction}\label{appUnsatisfiableUGConstruction}

Here we show an explicit (e.g., not randomized) way to construct, for any $\delta > 0$, a \gug instance $U_2$ such that the underlying graph is simple and $U_2$ is not $\delta$-satisfiable, as required in the proof of Theorem \ref{thmKEqualsTwoInapproximability}. Let $n$ be the least integer greater than $\max\{1, \frac{2}{\delta}\}$, and let
$$m := {n \choose 2} = \frac{n(n - 1)}{2}$$
We define $U_2$ over the complete graph on $n$ vertices using the additive group structure on $\fm$, the $m$-dimensional vector space over the finite field with 2 elements. Let $B$ be a basis of $\fm$. For every pair of distinct elements $u, v \in [n]$, assign a distinct basis element $g(u, v) = g(v, u) \in B$. For every pair of distinct variables $x_u$ and $x_v$, $U_2$ has the equation $x_u - x_v = g(u, v)$. Consider a system of equations along any cycle in the underlying graph:
\begin{align*}
	x_{v_1} - x_{v_2} &= g(v_1, v_2)\\
	x_{v_2} - x_{v_3} &= g(v_1, v_2)\\
	\dots\\
	x_{v_{\ell - 1}} - x_{v_\ell} &= g(v_{\ell - 1}, v_\ell)\\
	x_{v_\ell} - x_{v_1} &= g(v_\ell, v_1)
\end{align*}
If all of these equations could be simultaneously satisfied by some assignment, then, adding these equations together, the left-hand sides cancel, so we have
$$0 = g(v_1, v_2) + g(v_1, v_2) + \dots + g(v_{\ell - 1}, v_\ell) + g(v_\ell, v_1).$$
This is impossible, since the fact that cycles have length at least 3 with no repeated vertices implies that each term on the right-hand side is a distinct basis element, so their sum cannot possibly be zero. Thus, no assignment can satisfy any cycle of constraints, so at most a spanning tree of $n - 1$ constraints can be satisfied. Thus, the maximal satisfiability of $U_2$ is
$$(n - 1)/{n \choose 2} = (n - 1)/\left(\frac{n(n - 1)}{2}\right) = \frac{2}{n} < \delta,$$
as desired.


\subsection{Proof of Lemma \ref{lemGSameSatisfiability} (the label-lifted instance has the same satisfiability)}\label{appGSameSatisfiability}

Here we prove Lemma \ref{lemGSameSatisfiability}.\thmspace

\newtheorem*{R0}{Lemma~\ref{lemGSameSatisfiability}}
\begin{R0}
	\lemGSameSatisfiability
\end{R0}
\begin{proof}
	We begin by introducing some notation which is not used outside of this proof. Suppose there are $n$ variables in $U$, denoted $x_{v_1}, x_{v_2}, \dots, x_{v_n}$. For every $i, j \in [n]$, write $c(i, j)$ for the number of constraints between variables $x_{v_i}$ and $x_{v_j}$, and enumerate them as
	$$\{x_{v_i} - x_{v_j} = z_{i, j, k} \suchthat k \in [c(i, j)]\}.$$
	Let $\opt(\cdot)$ denote the optimal value of a \ug instance. In the context of some fixed assignment of variables, for any constraint equation $\beta$ let $\mathbb{I}(\beta)$ be the function that evaluates to 1 if $\beta$ is satisfied under the assignment and 0 if $\beta$ is not satisfied.
	
	Suppose there are $q$ group elements. Since $\mathcal{G}(U)$ contains a factor of $q^2$ more constraints than $U$, to prove that $U$ and $\mathcal{G}(U)$ have the same satisfiability, we must show that
	$$\opt(U) = \frac{1}{q^2}\opt(\mathcal{G}(U)).$$
	
	For one direction, let $x_v$ be an assignment of variables attaining the optimum satisfiability of $U$, i.e.,
	$$\opt(U) = \sum_{i, j \in [n]}\ \sum_{k \in [c(i, j)]} \mathbb{I}(x_{v_i} - x_{v_j} = z_{i, j, k}).$$
	From this, define an assignment of variables of $\mathcal{G}(U)$ by
	$$x_{v}^{g} := x_{v} + g.$$
	Then the optimal value of $\mathcal{G}(U)$ is at least the number of constraints satisfied by this assignment, i.e.,
	\begin{align*} 
		\opt(\mathcal{G}(U)) &\geq \sum_{i, j \in [n]}\ \sum_{g_i, g_j \in A}\ \sum_{k \in [c(i, j)]} \mathbb{I}((x_{v_i}^{g_i} - g_i) - (x_{v_j}^{g_j} - g_j) = z_{i, j, k})\\
		&= \sum_{i, j \in [n]}\ \sum_{g_i, g_j \in A}\ \sum_{k \in [c(i, j)]} \mathbb{I}(((x_{v_i} + g_i) - g_i) \push\hspace{3cm} - ((x_{v_j} + g_j) - g_j) = z_{i, j, k})\\
		&= \sum_{i, j \in [n]}\ \sum_{g_i, g_j \in A}\ \sum_{k \in [c(i, j)]} \mathbb{I}(x_{v_i} - x_{v_j} = z_{i, j, k})\\
		&= \sum_{i, j \in [n]} (q^2) \sum_{k \in [c(i, j)]} \mathbb{I}(x_{v_i} - x_{v_j} = z_{i, j, k})\\
		&= q^2 \opt(U).
	\end{align*}
	Rearranging, we have
	$$\opt(U) \leq \frac{1}{q^2}\opt(\mathcal{G}(U)).$$
	
	For the other direction, let $x_v^g$ be an assignment of variables attaining the optimum satisfiability of $\mathcal{G}(U)$. Then
	\begin{align*} 
		\opt(\mathcal{G}(U)) &= \sum_{i, j \in [n]}\ \sum_{g_i, g_j \in A}\ \sum_{k \in [c(i, j)]} \mathbb{I}((x_{v_i}^{g_i} - g_i) - (x_{v_j}^{g_j} - g_j) = z_{i, j, k})\\
		&= \frac{1}{q^{n - 2}} \sum_{i, j \in [n]}\ \sum_{\seq{g}{n} \in A}\ \sum_{k \in [c(i, j)]} \mathbb{I}((x_{v_i}^{g_i} - g_i) - (x_{v_j}^{g_j} - g_j) = z_{i, j, k}),
	\end{align*}
	since, for every fixed $i, j \in [n]$, the term
	$$\sum_{k \in [c(i, j)]} \mathbb{I}((x_{v_i}^{g_i} - g_i) - (x_{v_j}^{g_j} - g_j) = z_{i, j, k})$$
	is counted exactly $q^{n - 2}$ times. Rearranging the order of summation, we have
	\begin{equation}\label{equGSameSatisfiability1}
		\opt(\mathcal{G}(U)) = \frac{1}{q^{n - 2}} \sum_{\seq{g}{n} \in A}\ \sum_{i, j \in [n]}\ \sum_{k \in [c(i, j)]} \mathbb{I}((x_{v_i}^{g_i} - g_i) - (x_{v_j}^{g_j} - g_j) = z_{i, j, k})
	\end{equation}
	Therefore, there must be some fixed $\seq{g}{n} \in A$ such that
	\begin{equation}\label{equGSameSatisfiability2}
		\sum_{i, j \in [n]}\ \sum_{k \in [c(i, j)]} \mathbb{I}((x_{v_i}^{g_i} - g_i) - (x_{v_j}^{g_j} - g_j) = z_{i, j, k}) \geq \frac{1}{q^{2}} \opt(\mathcal{G}(U)),
	\end{equation}
	for otherwise, if all of the $q^n$ choices of $\seq{g}{n} \in A$ failed to satisfy (\ref{equGSameSatisfiability2}), we could strictly upper-bound the right-hand side of (\ref{equGSameSatisfiability1}) by
	$$\frac{1}{q^{n - 2}}(q^n)\left(\frac{1}{q^{2}} \opt(\mathcal{G}(U))\right) = \opt(\mathcal{G}(U)),$$
	contradicting (\ref{equGSameSatisfiability1}). Using these fixed $g_i$ values, we define an assignment of variables of $U$ by
	$$x_{v_i} := x_{v_i}^{g_i} - g_i.$$
	It then follows that the optimal value of $U$ is at least the number of constraints satisfied by this assignment, i.e.,
	\begin{align*}
		\opt(U) &\geq \sum_{i, j \in [n]}\ \sum_{k \in [c(i, j)]} \mathbb{I}(x_{v_i} - x_{v_j} = z_{i, j, k})\\
		&= \sum_{i, j \in [n]}\ \sum_{k \in [c(i, j)]} \mathbb{I}((x_{v_i}^{g_i} - g_i) - (x_{v_j}^{g_j} - g_j) = z_{i, j, k})\\
		&\geq \frac{1}{q^{2}} \opt(\mathcal{G}(U)) \stext{by (\ref{equGSameSatisfiability2})},
	\end{align*}
	as desired.
\end{proof}

We remark that, with very slight modification, this argument also shows that the $G$ operator of Atserias and Dawar \cite{DefinableInapproximabilityJournal} preserves the exact satisfiability of a \threexor instance. In other words, part (2) of Lemma 3 of \cite{DefinableInapproximabilityJournal} can be strengthened, and as a consequence, the third paragraph in the proof of Lemma 4 of \cite{DefinableInapproximabilityJournal} is unnecessary.


\subsection{Cops and robbers construction}\label{appCopsAndRobbers}

Here we show how to construct a graph $H = H_k$ satisfying the following four properties stated in Section \ref{secUGExact}:
\begin{enumerate}[label={(\arabic*)}]
	\item $H$ is connected.
	\item $H$ is 3-regular.
	\item $H$ is bipartite.
	\item The robber player wins the $(k - 1)$-cop edge-robber game.
\end{enumerate}

Start with the complete graph on $k$ vertices. Replace every vertex with a cycle of $2(k - 1)$ vertices, and replace every edge with two ``bridge" edges joining distinct pairs of adjacent vertices in each cycle, as in Figure \ref{figCopsAndRobbersConstruction}.\thmspace

\counterwithin{figure}{subsection}
\ipns{.9}{CopsAndRobbersConstruction}{An example of the construction of $H$ where $k = 4$.}{\label{figCopsAndRobbersConstruction}An example of the construction of $H$ where $k = 4$. The bridge edges are drawn in red.}

It is easily verified that $H$ is connected, 3-regular, and bipartite. The robber player's strategy is to always have the robber occupy one of the $k$ cycles without one of the $k - 1$ cops in it. After a cop moves into the cycle occupied by the robber, the robber moves around its cycle to one of the bridges to an unoccupied cycle and crosses over. Since there is only one cop in the robber's cycle, it cannot block both bridges. Thus, the robber player can infinitely avoid capture.


\subsection{Proof of Lemma \ref{lemUGExactSoundness} (soundness of $U_2$)}\label{appParityArgument}

Here we prove Lemma \ref{lemUGExactSoundness}.\thmspace

\newtheorem*{R1}{Lemma~\ref{lemUGExactSoundness}}
\begin{R1}
	\lemUGExactSoundness
\end{R1}
\begin{proof}
	Suppose we have some assignment of variables and we change the value of one of these variables, $x_v$, by adding some group element $g \in A$ to it. Let $u_1, u_2, u_3 \in V(H)$ be the three neighbours of $v$ in $H$. If $v \notin \{v_1^*, v_2^*\}$, then the list of constraints involving $x_v$ (up to a relabeling of $u_1$, $u_2$ and $u_3$) is
	\begin{align*}
		x_{v} + x_{u_1} &= e\\
		x_{v} + x_{u_1} &= a\\
		x_{v} + x_{u_2} &= e\\
		x_{v} + x_{u_2} &= b\\
		x_{v} + x_{u_3} &= e\\
		x_{v} + x_{u_3} &= c,
	\end{align*}
	If $g = e$ then nothing changes. Otherwise, without loss of generality, assume $g = a$. After adding $a$ to $x_v$, if the value of $x_{v} + x_{u_2}$ was in the set $\{e, a\}$, it will still be in $\{e, a\}$, and if the value was \emph{not} in $\{e, a\}$, then it will still \emph{not} be. Thus, after adding $a$ to $x_v$, the satisfiability of the first pair of equations will remain the same. For the second pair of equations, if the value of $x_{v} + x_{u_2}$ was in the set $\{e, b\}$, then after adding $a$ the value will be in the set $\{a, c\}$, and vice versa. Thus, either one of these two equations will become satisfied or one of these two equations will become unsatisfied; in other words, the satisfiability will change by one. Similarly, for the second pair of equations, if the value of $x_{v} + x_{u_3}$ was in the set $\{e, c\}$, then after adding $a$ the value will be in the set $\{a, b\}$, and vice versa, so again, the satisfiability will change by one. Thus, the total satisfiability of the 6 equations involving the variable $x_v$ will either remain the same or change by 2.
	
	On the other hand, in the special case where $v \in \{v_1^*, v_2^*\}$, the list of constraints involving $x_v$ (up to a relabeling of $u_1$, $u_2$ and $u_3$) is
	\begin{align*}
		x_{v} + x_{u_1} &= b\\
		x_{v} + x_{u_1} &= c\\
		x_{v} + x_{u_2} &= e\\
		x_{v} + x_{u_2} &= b\\
		x_{v} + x_{u_3} &= e\\
		x_{v} + x_{u_3} &= c.
	\end{align*}
	If $g = e$ then nothing changes. If $g = a$ then the satisfiability of the first pair of equations will remain the same (since adding $a$ takes $b$ to $c$ and $c$ to $b$), and in the second and third pairs of equations the satisfiabilities will each change by one, for a total change of 0 or 2, as in the previous case. If $g = b$ or $g = c$, one can analogously check that the satisfiability of the 6 equations again changes by 0 or 2.
	
	Thus, in all cases, changing $x_v$ by adding any group element $g$ preserves the parity of the number of constraints of $U_2$ that are satisfied. There are a total of $2\abs{E(H)}$ constraints between variables of $U_2$, coming in inconsistent pairs. Since the assignment $x_v := e$ satisfies exactly $\abs{E(H)} - 1$ of these constraints (one from every pair, except none from the pair between $x_{v_1^*}$ and $x_{v_2^*}$), and every time a variable assignment is changed, the parity stays the same, no assignment can satisfy exactly $\abs{E(H)}$ constraints. Thus, no assignment can satisfy \emph{at least} $\abs{E(H)}$ constraints, since that is the maximum number that can possibly be satisfied.
\end{proof}


\subsection{Proof of Lemma \ref{lemUGExactClaim} (Duplicator's invariant)}\label{appUGExactClaim}

Here we prove Lemma \ref{lemUGExactClaim}. We begin by recording some easy but important observations about these two instances.\thmspace

\begin{lemma}\label{lemUGExactSpecialProperty}
	The variable set of both $\mathcal{G}(U_1)$ and $\mathcal{G}(U_2)$ is
	$$\{x_v^g \suchthat v \in V(H),\ g \in A\}.$$
	In both instances, for every $v_1, v_2 \in V(H)$, there exist distinct $y_1, y_2 \in A$ such that, for all $g_1, g_2 \in A$, the constraints between the pair of variables $x_{v_1}^{g_1}$ and $x_{v_2}^{g_2}$ are of the form
	\begin{align*}
		x_{v_1}^{g_1} + x_{v_2}^{g_2} = g_1 + g_2 + y_1 && \txt{and} && x_{v_1}^{g_1} + x_{v_2}^{g_2} = g_1 + g_2 + y_2,
	\end{align*}
	where $y_1 + y_2 = m(\{v_1, v_2\})$.
\end{lemma}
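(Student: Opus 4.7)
The plan is to verify the lemma by directly unwinding the definition of the label-lifted instance $\mathcal{G}$. First I would observe that the variable-set assertion is immediate from the construction: since $V(U_1) = V(U_2) = V(H)$ and the group in both cases is the Klein four-group $A$, the variables of $\mathcal{G}(U_i)$ are indexed by $V(H) \times A$ as claimed. For non-adjacent pairs $v_1, v_2$ the statement on constraints is vacuous (no constraints were introduced), so the substantive content is over edges $\{v_1, v_2\} \in E(H)$.

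Next I would handle $\mathcal{G}(U_1)$ and $\mathcal{G}(U_2)$ separately. In $U_1$, for every edge $\{v_1, v_2\}$ the two constraints between $x_{v_1}$ and $x_{v_2}$ are $x_{v_1} + x_{v_2} = e$ and $x_{v_1} + x_{v_2} = m(\{v_1, v_2\})$. Using that in the Klein four-group $-g = g$, the recipe in the definition of $\mathcal{G}$ produces, for every $g_1, g_2 \in A$, exactly the two lifted constraints $x_{v_1}^{g_1} + x_{v_2}^{g_2} = g_1 + g_2 + e$ and $x_{v_1}^{g_1} + x_{v_2}^{g_2} = g_1 + g_2 + m(\{v_1, v_2\})$. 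Setting $y_1 := e$ and $y_2 := m(\{v_1, v_2\})$, distinctness is immediate because $m$ takes values in $\{a, b, c\}$, and $y_1 + y_2 = m(\{v_1, v_2\})$ trivially.

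For $\mathcal{G}(U_2)$ the argument is identical on every edge except $\{v_1^*, v_2^*\}$, where the base constraints were redefined to $x_{v_1^*} + x_{v_2^*} = b$ and $x_{v_1^*} + x_{v_2^*} = c$. Lifting yields the pair with $y_1 = b$ and $y_2 = c$, which are distinct, and $y_1 + y_2 = b + c = a$; since $\{v_1^*, v_2^*\}$ was chosen precisely as a matching-$a$ edge, this coincides with $m(\{v_1^*, v_2^*\})$, as required.

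The whole argument is essentially bookkeeping, so I do not expect any real obstacle; the only nontrivial point is remembering that in the Klein four-group negation is the identity (so the $-g_1, -g_2$ in the definition of $\mathcal{G}$ collapse into additive shifts by $g_1, g_2$) and that $b + c = a$ matches the specific choice of $\{v_1^*, v_2^*\}$.
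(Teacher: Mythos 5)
Your proposal is correct and takes the same approach as the paper: the paper's proof is the one-line remark that the claim ``follows from inspection of the definitions,'' and your argument is exactly that inspection carried out explicitly (variable set from the definition of $\mathcal{G}$, lift each base constraint using $-g = g$ in the Klein four-group to get $y_1 = e$, $y_2 = m(\{v_1,v_2\})$ for generic edges, and $y_1 = b$, $y_2 = c$ with $b + c = a = m(\{v_1^*,v_2^*\})$ on the special edge).
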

\begin{proof}
	This follows from inspection of the definitions of $U_1$ and $U_2$ from Section \ref{secUGExact}.
\end{proof}\thmspace

Now we can prove Lemma \ref{lemUGExactClaim}.\thmspace

\newtheorem*{R2}{Lemma~\ref{lemUGExactClaim}}
\begin{R2}
	\lemUGExactClaim
\end{R2}
\begin{proof}
	We proceed by induction on $i$. For the base case ($i = 0$), recall that $f_0$ is the identity map and $r_0 = \{v^*_1, v^*_2\}$. Since $\mathcal{G}(U_1)$ and $\mathcal{G}(U_2)$ agree everywhere except on the relations between variables involving the vertices of $r_0$, condition \ref{itmUGExactGood2} holds. Between a pair of vertices $x_{v^*_1}^{g_1}$ and $x_{v^*_2}^{g_2}$ for $g_1, g_2 \in A$, the constraints in $\mathcal{G}(U_1)$ are
	$$x_{v^*_1}^{g_1} + x_{v^*_2}^{g_2} = g_1 + g_2 \txt{ and } x_{v^*_1}^{g_1} + x_{v^*_2}^{g_2} = g_1 + g_2 + a,$$
	while the corresponding constraints in $\mathcal{G}(U_2)$ are
	$$x_{v^*_1}^{g_1} + x_{v^*_2}^{g_2} = g_1 + g_2 + b \txt{ and } x_{v^*_1}^{g_1} + x_{v^*_2}^{g_2} = g_1 + g_2 + c.$$
	Since $g_1 + g_2$, $g_1 + g_2 + a$, $g_1 + g_2 + b$ and $g_1 + g_2 + c$ are all distinct, condition \ref{itmUGExactGood3} holds as well.
	
	Now fix some $i \geq 1$ and suppose that \ref{itmUGExactGood2} and \ref{itmUGExactGood3} hold for $i - 1$. If the robber does not move, then $f_i = f_{i - 1}$, so there is nothing to prove. So suppose that the robber does move, i.e., $r_i \neq r_{i - 1}$. To prove that $f_i$ satisfies \ref{itmUGExactGood2} for an arbitrary pair of vertices $\{v_1, v_2\} \neq r_i$, there are three cases to consider. It may be helpful for the reader to refer back to Figure \ref{figUGExactDiagram2} from Section \ref{secUGExact}.
	
	Case 1: $\{v_1, v_2\} = \{p_0, p_1\} = r_{i - 1}$. Say that $v_1 = p_0$ and $v_2 = p_1$. Let $y_1, y_2, y_2, y_4 \in A$ be as in Lemma \ref{lemUGExactSpecialProperty}, so that for each $g_1, g_2 \in A$, the two constraints in $\mathcal{G}(U_1)$ between $x_{v_1}^{g_1}$ and $x_{v_2}^{g_2}$ are
	\begin{align*}
		x_{v_1}^{g_1} + x_{v_2}^{g_2} = g_1 + g_2 + y_1 && \txt{and} && x_{v_1}^{g_1} + x_{v_2}^{g_2} = g_1 + g_2 + y_2,
	\end{align*}
	and the constraints in $\mathcal{G}(U_2)$ between $f_{i - 1}(x_{v_1}^{g_1})$ and $f_{i - 1}(x_{v_2}^{g_2})$ are
	\begin{align*}
		&&f_{i - 1}(x_{v_1}^{g_1}) + f_{i - 1}(x_{v_2}^{g_2}) &= g_1 + g_2 + y_3\\ \txt{and} &&f_{i - 1}(x_{v_1}^{g_1}) + f_{i - 1}(x_{v_2}^{g_2}) &= g_1 + g_2 + y_4.
	\end{align*}
	Since \ref{itmUGExactGood3} held for $f_{i - 1}$, it follows that the right-hand sides of all four equations are all distinct, so $A = \{y_1, y_2, y_3, y_4\}$. From the group addition law in the Klein four-group, it follows that $y_1 + y_2 + y_3 = y_4$. Since $f_{i}(x_{v_1}^{g_1}) = f_{i - 1}(x_{v_1}^{g_1})$ and $f_{i}(x_{v_2}^{g_2}) = f_{i - 1}(x_{v_2}^{g_2 + m(e_1)})$, the two constraints in $\mathcal{G}(U_2)$ between $f_{i}(x_{v_1}^{g_1})$ and $f_{i}(x_{v_2}^{g_2})$ are
	\begin{align*}
		&&f_{i}(x_{v_1}^{g_1}) + f_{i}(x_{v_2}^{g_2}) &= g_1 + g_2 + m(e_1) + y_3 \\ \txt{and} && f_{i}(x_{v_1}^{g_1}) + f_{i}(x_{v_2}^{g_2}) &= g_1 + g_2 + m(e_1) + y_4.
	\end{align*}
	To show that these constraints are the same as those in $\mathcal{G}(U_1)$, we must argue that $m(e_1) + y_3$ and $m(e_1) + y_4$ are both in the set $\{y_1, y_2\}$. We give the proof for $m(e_1) + y_3$; the proof for $m(e_1) + y_4$ is completely analogous. Suppose first that $m(e_1) + y_3 = y_3$. This is a contradiction because $m$ never takes on the value of the identity $e \in A$. Suppose instead that $m(e_1) + y_3 = y_4$. This implies that
	$$m(e_1) = y_1 + y_2 = m(\{v_1, v_2\}),$$
	where the second equality follows from Lemma \ref{lemUGExactSpecialProperty}. This is a contradiction since $e_1$ and $\{v_1, v_2\}$ are different edges incident to the same vertex $p_1$, so they must have different values under $m$. Thus, the only remaining possibilities are that $m(e_1) + y_3 = y_1$ or $m(e_1) + y_3 = y_2$, as desired.
	
	Case 2: $\{v_1, v_2\} = \{p_i, p_{i + 1}\}$ for $1 \leq i < \ell$. Say that $v_1 = p_i$ and $v_2 = p_{i + 1}$. Let $y_1, y_2, y_2, y_4 \in A$ be as in Lemma \ref{lemUGExactSpecialProperty}, so that for each $g_1, g_2 \in A$, the two constraints in $\mathcal{G}(U_1)$ between $x_{v_1}^{g_1}$ and $x_{v_2}^{g_2}$ are
	\begin{align*}
		x_{v_1}^{g_1} + x_{v_2}^{g_2} = g_1 + g_2 + y_1 && \txt{and} && x_{v_1}^{g_1} + x_{v_2}^{g_2} = g_1 + g_2 + y_2,
	\end{align*}
	and the constraints in $\mathcal{G}(U_2)$ between $f_{i - 1}(x_{v_1}^{g_1})$ and $f_{i - 1}(x_{v_2}^{g_2})$ are
	\begin{align*}
		&&f_{i - 1}(x_{v_1}^{g_1}) + f_{i - 1}(x_{v_2}^{g_2}) &= g_1 + g_2 + y_3 \\ \txt{and} && f_{i - 1}(x_{v_1}^{g_1}) + f_{i - 1}(x_{v_2}^{g_2}) &= g_1 + g_2 + y_4.
	\end{align*}
	Since \ref{itmUGExactGood2} held for $f_{i - 1}$, it follows that $\{y_1, y_2\} = \{y_3, y_4\}$. As $f_{i}(x_{v_1}^{g_1}) = f_{i - 1}(x_{v_1}^{g_1 + m(e_i)})$ and $f_{i}(x_{v_2}^{g_2}) = f_{i - 1}(x_{v_2}^{g_2 + m(e_{i + 1})})$, the two constraints in $\mathcal{G}(U_2)$ between $f_{i}(x_{v_1}^{g_1})$ and $f_{i}(x_{v_2}^{g_2})$ are
	\begin{align*}
		&&f_{i}(x_{v_1}^{g_1}) + f_{i}(x_{v_2}^{g_2}) &= g_1 + g_2 + m(e_i) + m(e_{i + 1}) + y_3 \\ \txt{and} &&f_{i}(x_{v_1}^{g_1}) + f_{i}(x_{v_2}^{g_2}) &= g_1 + g_2 + m(e_i) + m(e_{i + 1}) + y_4.
	\end{align*}
	As in the previous case, we must show that $m(e_i) + m(e_{i + 1}) + y_3$ and $m(e_i) + m(e_{i + 1}) + y_4$ are both in the set $\{y_1, y_2\}$. If $m(e_i) = m(e_{i + 1})$, then they cancel, and the result then follows from the fact that $\{y_3, y_4\} = \{y_1, y_2\}$. Otherwise, they are distinct nontrivial elements of $A$, and since they both share common vertices with the edge $\{v_1, v_2\}$, they are also distinct from the nontrivial element $m(\{v_1, v_2\})$. This means that
	$$m(e_i) + m(e_{i + 1}) = m(\{v_1, v_2\}) = y_3 + y_4,$$
	where the second equality follows from Lemma \ref{lemUGExactSpecialProperty}. Therefore,
	$$m(e_i) + m(e_{i + 1}) + y_3 = y_3 + y_3 + y_4 = y_4 \in \{y_3, y_4\} = \{y_1, y_2\},$$
	and analogously,
	$$m(e_i) + m(e_{i + 1}) + y_4 = y_3 + y_4 + y_4 = y_3 \in \{y_3, y_4\} = \{y_1, y_2\},$$
	as desired.
	
	Case 3: $\{v_1, v_2\}$ is not on the path from $r_{i - 1}$ to $r_i$ (this is the case discussed in the example from Section \ref{secUGExact}). The only other edges of $H$ we have to worry about are those which are incident to a vertex in $H$ over which $f_{i - 1}$ and $f_{i}$ differ. These are precisely the edges $e_i$, for $i \in [r]$, so assume that $\{v_1, v_2\} = e_i$ where $v_1 = p_i$. Again, let $y_1, y_2, y_2, y_4 \in A$ be as in Lemma \ref{lemUGExactSpecialProperty}, so that for each $g_1, g_2 \in A$, the two constraints in $\mathcal{G}(U_1)$ between $x_{v_1}^{g_1}$ and $x_{v_2}^{g_2}$ are
	\begin{align*}
		x_{v_1}^{g_1} + x_{v_2}^{g_2} = g_1 + g_2 + y_1 && \txt{and} && x_{v_1}^{g_1} + x_{v_2}^{g_2} = g_1 + g_2 + y_2,
	\end{align*}
	and the constraints in $\mathcal{G}(U_2)$ between $f_{i - 1}(x_{v_1}^{g_1})$ and $f_{i - 1}(x_{v_2}^{g_2})$ are
	\begin{align*}
		&&f_{i - 1}(x_{v_1}^{g_1}) + f_{i - 1}(x_{v_2}^{g_2}) &= g_1 + g_2 + y_3 \\ \txt{and} && f_{i - 1}(x_{v_1}^{g_1}) + f_{i - 1}(x_{v_2}^{g_2}) &= g_1 + g_2 + y_4.
	\end{align*}
	Since \ref{itmUGExactGood2} held for $f_{i - 1}$, it follows that $\{y_1, y_2\} = \{y_3, y_4\}$. Since $f_{i}(x_{v_1}^{g_1}) = f_{i - 1}(x_{v_1}^{g_1 + m(e_i)})$ and $f_{i}(x_{v_2}^{g_2}) = f_{i - 1}(x_{v_2}^{g_2})$, the two constraints in $\mathcal{G}(U_2)$ between $f_{i}(x_{v_1}^{g_1})$ and $f_{i}(x_{v_2}^{g_2})$ are
	\begin{align*}
		&&f_{i}(x_{v_1}^{g_1}) + f_{i}(x_{v_2}^{g_2}) &= g_1 + g_2 + m(e_i) + y_3 \\ \txt{and} &&f_{i}(x_{v_1}^{g_1}) + f_{i}(x_{v_2}^{g_2}) &= g_1 + g_2 + m(e_i) + y_4.
	\end{align*}
	As in the previous cases, we must show that $m(e_i) + y_3$ and $m(e_i) + y_4$ are both in the set $\{y_1, y_2\}$. This follows from the fact that $m(e_i) = y_3 + y_4$ by Lemma \ref{lemUGExactSpecialProperty}, so
	$$\{m(e_i) + y_3, m(e_i) + y_4\} = \{y_3 + y_4 + y_3, y_3 + y_4 + y_4\} = \{y_4, y_3\} = \{y_1, y_2\}$$
	as desired.
	
	That concludes the proof of \ref{itmUGExactGood2}. To prove \ref{itmUGExactGood3}, let $\{v_1, v_2\} = r_i$, where $v_1 = p_\ell$ and $v_2 = p_{\ell + 1}$, and again let $y_1, y_2, y_2, y_4 \in A$ be as in Lemma \ref{lemUGExactSpecialProperty}, so that for each $g_1, g_2 \in G$, the two constraints in $\mathcal{G}(U_1)$ between $x_{v_1}^{g_1}$ and $x_{v_2}^{g_2}$ are
	\begin{align*}
		x_{v_1}^{g_1} + x_{v_2}^{g_2} = g_1 + g_2 + y_1 && \txt{and} && x_{v_1}^{g_1} + x_{v_2}^{g_2} = g_1 + g_2 + y_2,
	\end{align*}
	and the constraints in $\mathcal{G}(U_2)$ between $f_{i - 1}(x_{v_1}^{g_1})$ and $f_{i - 1}(x_{v_2}^{g_2})$ are
	\begin{align*}
		&&f_{i - 1}(x_{v_1}^{g_1}) + f_{i - 1}(x_{v_2}^{g_2}) &= g_1 + g_2 + y_3 \\ \txt{and} && f_{i - 1}(x_{v_1}^{g_1}) + f_{i - 1}(x_{v_2}^{g_2}) &= g_1 + g_2 + y_4.
	\end{align*}
	Since \ref{itmUGExactGood2} held for $f_{i - 1}$, it follows that $\{y_1, y_2\} = \{y_3, y_4\}$. Since $f_{i}(x_{v_1}^{g_1}) = f_{i - 1}(x_{v_1}^{g_1 + m(e_\ell)})$ and $f_{i}(x_{v_2}^{g_2}) = f_{i - 1}(x_{v_2}^{g_2})$, the two constraints in $\mathcal{G}(U_2)$ between $f_{i}(x_{v_1}^{g_1})$ and $f_{i}(x_{v_2}^{g_2})$ are
	\begin{align*}
		&&f_{i}(x_{v_1}^{g_1}) + f_{i}(x_{v_2}^{g_2}) &= g_1 + g_2 + m(e_\ell) + y_3 \\ \txt{and} &&f_{i}(x_{v_1}^{g_1}) + f_{i}(x_{v_2}^{g_2}) &= g_1 + g_2 + m(e_\ell) + y_4.
	\end{align*}
	Now we must show that $m(e_\ell) + y_3$ and $m(e_\ell) + y_4$ are both \emph{not} in the set $\{y_1, y_2\}$. The proof is analogous to Case 1 from above, and we only show the first part, that $m(e_\ell) + y_3$ is not in $\{y_1, y_2\} = \{y_3, y_4\}$. Suppose first that $m(e_\ell) + y_3 = y_3$. This is a contradiction because $m$ never takes on the value of the identity $e \in A$. Suppose instead that $m(e_\ell) + y_3 = y_4$. This implies that
	$$m(e_\ell) = y_3 + y_4 = m(\{v_1, v_2\}),$$
	where the second equality follows from Lemma \ref{lemUGExactSpecialProperty}. This is a contradiction since $e_\ell$ and $\{v_1, v_2\}$ are different edges incident to the same vertex $p_\ell$, so they must have different values under $m$. Thus, $m(e_\ell) + y_3 \notin \{y_3, y_4\} = \{y_1, y_2\}$; the proof that $m(e_\ell) + y_4 \notin \{y_1, y_2\}$ is similar.
\end{proof}


\subsection{Proof of Lemma \ref{lemNUGLGMostEdgesGood} (most edges are good edges)}\label{appNUGLGMostEdgesGood}

Here we prove Lemma \ref{lemNUGLGMostEdgesGood}. First, we need the following two sub-lemmas. The proof of the first one is inspired by \cite{Stackexchange}.\thmspace

\begin{lemma}\label{lemProbRandomVectorsSpan}
	For any two positive integers $m$ and $n$, the probability that $n$ vectors in $\fm$, chosen independently and uniformly at random, fail to span $\fm$ is at most $2^{m - n}$.
\end{lemma}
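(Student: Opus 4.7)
The plan is to apply a union bound over all hyperplanes (maximal proper subspaces) of $\fm$. The key observation is that the $n$ vectors fail to span $\fm$ if and only if they are all contained in some proper subspace, which in turn happens if and only if they are all contained in some hyperplane, since every proper subspace is contained in one.

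First I would dispose of the trivial case $n < m$: here $2^{m-n} \geq 2 > 1$, so the claimed bound holds vacuously. For the main case $n \geq m$, I would begin by counting the hyperplanes of $\fm$. Every hyperplane is the kernel of some nonzero linear functional, and over $\ff_2$ two nonzero functionals have the same kernel only if one is a scalar multiple of the other; since the only nonzero scalar in $\ff_2$ is $1$, there are exactly $2^m - 1$ hyperplanes.

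Next, I would fix a hyperplane $H$ and compute the probability that a single uniformly random vector of $\fm$ lies in $H$. Since $\abs{H} = 2^{m-1}$, this probability is exactly $\tfrac{1}{2}$. By independence of the $n$ vectors, the probability that they all lie in $H$ is $2^{-n}$. A union bound over the $2^m - 1$ hyperplanes then gives
$$\mathbb{P}[\textup{the } n \textup{ vectors fail to span } \fm] \leq (2^m - 1) \cdot 2^{-n} < 2^m \cdot 2^{-n} = 2^{m-n},$$
which is the desired conclusion.

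I do not expect any substantive obstacle; the estimate falls out immediately from the union bound once one recognises that failure to span corresponds to containment in a hyperplane. The only minor points to verify are the count of $2^m - 1$ hyperplanes and the vacuity of the bound when $n < m$; both are routine.
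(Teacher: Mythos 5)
Your proof is correct and follows essentially the same route as the paper's: a union bound over the $2^m-1$ hyperplanes of $\fm$, each of which traps all $n$ independent uniform vectors with probability $2^{-n}$. The only cosmetic differences are that you count hyperplanes via kernels of nonzero linear functionals (the paper invokes a duality with $1$-dimensional subspaces), and you explicitly note the vacuous case $n<m$, neither of which changes the substance.
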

\begin{proof}
	It is well known \cite[Sec.\txt{} III.4]{LinearAlgebraBook} that every $d$-dimensional subspace of a vector space of dimension $m$ has a unique complement subspace of dimension $m - d$. Therefore, since there are exactly $2^m - 1$ one-dimensional subspaces of $\fm$, there are exactly $2^m - 1$ subspaces of dimension $m - 1$.
	
	The probability that a randomly chosen vector lies within a given $(m - 1)$-dimensional subspace is $\frac{1}{2}$. As the vectors are chosen independently, the probability that all $n$ vectors lie within any given $(m - 1)$-dimensional subspace is $2^{-n}$. Since there are at most $2^m$ different $(m - 1)$-dimensional subspaces of $\fm$, by the union bound, the probability that all $n$ vectors lie within \emph{some} $(m - 1)$-dimensional subspace is at most $2^m \cdot 2^{-n} = 2^{m - n}$. Therefore, the probability that the $n$ vectors fail to span $\fm$ is bounded by $2^{m - n}$, since the only way this can happen is if they lie within some $(m - 1)$-dimensional subspace of $\fm$.\thmspace
\end{proof}

\begin{lemma}\label{lemCountPaths}
	For any positive integers $d$ and $r$, in a $d$-regular graph of girth greater than $r$ there are exactly $r(d - 1)^{r - 1}$ distinct paths of length $r$ passing through any given edge.
\end{lemma}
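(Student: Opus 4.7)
The plan is to count \emph{ordered} paths through $e$ first, then divide by two. An ordered path $v_0, v_1, \dots, v_r$ of length $r$ contains the edge $e = \{u, v\}$ if and only if $\{v_{i-1}, v_i\} = e$ for a (unique) index $i \in \{1, \dots, r\}$. I will partition such ordered paths according to the pair (position $i$, orientation $v_{i-1} \in \{u, v\}$), yielding $2r$ categories in total.

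For a fixed category, the path is determined by the sequence of $i-1$ vertices extending backward from $v_{i-1}$ and the sequence of $r-i$ vertices extending forward from $v_i$. At each extension step the next vertex must be a neighbor of the current endpoint but not the vertex just used, leaving at most $d-1$ admissible choices in a $d$-regular graph. So each category contributes at most $(d-1)^{i-1} \cdot (d-1)^{r-i} = (d-1)^{r-1}$ ordered paths, for a grand total of at most $2r(d-1)^{r-1}$.

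The main step, and the one where I expect the real work to happen, is verifying that this upper bound is actually attained, i.e.\ that the girth hypothesis forbids any accidental vertex repetition during the extension process (otherwise some of the $d-1$ choices at a given step would be disallowed because they repeat an earlier vertex). The argument I have in mind is uniform: if any two of $v_0, \dots, v_r$ coincided, then the sub-walk between them in the sequence would be a closed walk of length at most $r$, which must contain a cycle of length at most $r$ in the ambient graph, contradicting girth $> r$. Hence every sequence of choices genuinely yields a simple path, and the ordered count is exactly $2r(d-1)^{r-1}$. Finally, each unordered path of positive length has exactly two ordered representations (the two traversal directions), so dividing by two yields $r(d-1)^{r-1}$, as claimed.
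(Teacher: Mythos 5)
Your proof is correct and takes essentially the same approach as the paper's: fix the position of $e$ in the path, extend outward with $d-1$ choices per step, and invoke the girth bound to guarantee that every such non-backtracking walk is in fact a simple path. The only cosmetic difference is that the paper avoids the factor of $2$ by orienting the path at the outset (requiring $u_2$ to come after $u_1$), whereas you count ordered paths and divide by two; you also spell out the closed-walk-to-cycle step that the paper treats as obvious.
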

\begin{proof}
	Fix an edge $e_0 = \{u_1, u_2\}$. To enumerate all of the ways in which we can choose a path $p = v_0, \seq{v}{r}$ passing through $e_0$, we first orient $p$ so that $u_2$ occurs at a greater index in $p$ than $u_1$, ensuring that we do not double-count a path and its reverse. The first choice we make is the position in the path where $e_0$ lies, i.e., the index of $u_1$ in $p$. There are $r$ such choices of index, since it is impossible to have $u_1 = v_r$. We choose each of the remaining $r - 1$ edges by growing the path out from $e_0$. Since the girth is greater than $r$, there are no constraints about repeating vertices to worry about, so at each step, there are exactly $d - 1$ neighbors to choose from. Thus, the total number of paths is $r(d - 1)^{r - 1}$.
\end{proof}

We can now prove Lemma \ref{lemNUGLGMostEdgesGood}.\thmspace
\newtheorem*{R3}{Lemma~\ref{lemNUGLGMostEdgesGood}}
\begin{R3}
	\lemNUGLGMostEdgesGood
\end{R3}
\begin{proof}
	By Lemma \ref{lemProbRandomVectorsSpan}, the probability that the vectors in the $Z$-subspaces along a given path of length $r$ fail to span $\fm$ is at most $2^{m - r\ell}$. Since, by Lemma \ref{lemCountPaths}, there are $r(d - 1)^{r - 1} \leq rd^r$ paths of length $r$ through any given edge $e_0$, it follows from the union bound that the probability that $e_0$ is a bad edge is at most
	\begin{align*}
		rd^r \cdot 2^{m - r\ell} &= \exp(\ln(rd^r \cdot 2^{m - r\ell}))\stext{where $\exp(x) \equiv e^x$}\\
		&= \exp(\ln(r) + r \ln(d) + (m - r\ell)\ln(2))\\
		&\leq \exp(r + r \ln(d) + (m - r\ell)\ln(2))\\
		&= \exp(r((\ln(d) + 1) - \ell\ln(2)) + m\ln(2))\\
		&\leq \exp\left(r((\ln(d) + 1) - \left(\log_2(d) + 2\log_2(e)\right)\ln(2)) + m\ln(2)\right) \stextn{from (\ref{equNUGLGParameterLDef})}\\
		&= \exp\left(r((\ln(d) + 1) - \left(\frac{\ln(d) + 2}{\ln(2)}\right)\ln(2)) + m\ln(2)\right)\\
		&= \exp(m\ln(2) - r)\\
		&\leq \exp(m\ln(2) - (m\ln(2) - \ln(\gamma))) \stext{from (\ref{equNUGLGParameterRGeq})}\\
		&= \exp(\ln(\gamma))\\
		&= \gamma.
	\end{align*}
	Therefore, the expected fraction of bad edges of $\widetilde{H}$ is at most $\gamma$. With probability at least $\frac12$, the fraction of bad edges in $\widetilde{H}$ is less than or equal to this expectation.
\end{proof}


\subsection{Proof of Lemma \ref{lemNUGLGSoundnessU2} (soundness of $\widetilde{U}_2$ and thus $U_2$)}\label{appNUGLGSoundnessU2}

Here we prove Lemma \ref{lemNUGLGSoundnessU2}. We begin by showing that, with high probability, $\widetilde{U}_2$ is highly unsatisfiable. The central proof technique used here, applying Hoeffding's inequality and the union bound, is used by Atserias and Dawar \cite[Lemma 4]{DefinableInapproximabilityJournal} to argue that a random \threexor instance is probably only slightly more than $\frac12$-satisfiable. The main challenge in adapting this technique is that our domain has size $2^m$ instead of $2$, so there are far too many assignments to consider. To circumvent this obstacle, we only consider those assignments which satisfy a spanning tree of constraints.\thmspace

\begin{lemma}\label{lemNUGLGSoundnessUTilde}
	With probability at least $1 - \varepsilon$, the satisfiability of $\widetilde{U}_2$ is less than $\left(1 - \gamma\right) \left(\frac{\alpha}{2^{\ell}}\right)$.
\end{lemma}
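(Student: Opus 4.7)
The plan is to apply Hoeffding's inequality to bound, for each fixed assignment, the probability that it satisfies too many bundles of constraints, then take a union bound over assignments. This is a direct adaptation of the argument that Atserias and Dawar use for random \textsf{3XOR}; the wrinkle is that our domain $\fm$ is exponentially larger than $\{0,1\}$, so the naive count of $2^{mn}$ assignments is prohibitive and must be cut down using the structure of a spanning tree of $\widetilde{H}$.

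First I would fix an assignment $x : V(\widetilde{H}) \to \fm$ and, for each edge $e = \{v_1, v_2\}$, let $X_e(x)$ be the indicator of the event that $x$ satisfies some constraint in the bundle on $e$; this event is exactly $x_{v_1} - x_{v_2} \in Z(e) + b(e)$. Since $b(e)$ is uniform in $\fm$ and independent of $Z(e)$, the shifted difference $x_{v_1} - x_{v_2} - b(e)$ is uniform in $\fm$ independent of $Z(e)$, so $\mathbb{P}[X_e(x) = 1] = |Z(e)|/|\fm| = 2^{\ell - m}$, and these indicators are mutually independent across $e$. The elements of the coset $Z(e) + b(e)$ are distinct, so at most one constraint per bundle is satisfied, and thus the number of satisfied constraints equals $S(x) := \sum_e X_e(x)$, with mean $N \cdot 2^{\ell - m}$ for $N = |E(\widetilde{H})| = nd/2$. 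Hoeffding's inequality gives
\[
\mathbb{P}\bigl[S(x) \geq (1-\gamma)\alpha N\bigr] \leq \exp(-2 s^2 N), \quad s := (1-\gamma)\alpha - 2^{\ell - m}.
\]
The definition of $m$ in (\ref{equNUGLGParameterMDef}) forces $2^{\ell - m} \leq (\half - \gamma)\alpha - 2/d$, so $s \geq \alpha/2 + 2/d > \alpha/2$, yielding a per-assignment bound of at most $\exp(-\alpha^2 N / 2) = \exp(-\alpha^2 nd/4)$.

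For the union bound, I would fix a spanning tree $T$ of $\widetilde{H}$ (handling each connected component separately if needed) and restrict attention to \emph{tree-conforming} assignments, namely those that satisfy at least one constraint in the bundle on every edge of $T$. Such an assignment is uniquely determined by a value in $\fm$ at a chosen root together with a choice of one of the $2^\ell$ satisfying differences $z + b(e)$ for each of the $n - 1$ tree edges, giving only $2^{m + \ell(n-1)}$ candidates rather than the full $2^{mn}$. I would then argue that this restriction is sufficient to witness the bad event: essentially, any assignment with $S(x) \geq (1-\gamma)\alpha N$ can be charged to a tree-conforming assignment agreeing with it on $T$, using the fact that the non-tree randomness is independent of the tree randomness used in the parameterization. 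Multiplying the per-assignment Hoeffding bound by $2^{m + \ell(n-1)}$ gives an overall bound of roughly $\exp\bigl(n(\ell \ln 2 - \alpha^2 d / 4) + O(1)\bigr)$, and since $\ell \leq \log_2 d + 2\log_2 e + 1$ by (\ref{equNUGLGParameterLDef}), the condition (\ref{equNUGLGParameterDGeqComplicated}) --- which lower-bounds $\alpha^2 d/4$ by $4\ln d + 8 + 4\ln 2 + 4\ln(1/\varepsilon)$ --- leaves the comfortable margin $\alpha^2 d/4 - \ell \ln 2 \geq 3\ln d + \Theta(\ln(1/\varepsilon))$, which drives the overall bound below $\varepsilon$ for any $n \geq 1$.

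The main obstacle I expect is justifying that restricting to tree-conforming assignments in the union bound loses nothing essential: one has to make precise the sense in which a high-satisfying witness can be replaced by a tree-conforming one (perhaps by conditioning on the tree randomness and applying Hoeffding to the remaining independent non-tree indicators), without re-introducing an untractable number of cases. Barring that, the remainder is routine: plug in the parameter bounds, simplify, and verify the inequality matches the tailored form of (\ref{equNUGLGParameterDGeqComplicated}).
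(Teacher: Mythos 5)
Your core strategy (Hoeffding per assignment, then union bound; cut down the $2^{mn}$ assignments by normalizing against a spanning tree) is the right one and is essentially the paper's. However, there is a genuine gap at exactly the point you flag as "the main obstacle," and it is not a technicality that can be smoothed over.

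The gap is that fixing a \emph{single} spanning tree $T$ of $\widetilde{H}$ in advance does not suffice. What the translation argument actually gives is that for any optimal assignment $x$ (normalized so $x_{v_0} = 0$), the set of edges whose bundles $x$ satisfies must contain \emph{some} spanning tree $T'$ --- but $T'$ depends on the realized random instance $(Z, b)$ and is not known a priori. There is no reason an optimal assignment should be $T$-conforming for your pre-chosen $T$: translating a subtree of $T$ to force one $T$-bundle to be satisfied can simultaneously break an arbitrary number of non-$T$ bundles, so "any high-satisfying $x$ can be charged to a $T$-conforming one" is false in general. Your speculative repair --- conditioning on the $T$-randomness and applying Hoeffding to the non-$T$ indicators --- only bounds the satisfiability achieved by $T$-conforming assignments, which is not a bound on the optimum. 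The paper closes this by enumerating the nondeterministic choice of spanning tree as well, giving $\leq d^n \cdot 2^{\ell(n-1)}$ computation paths instead of your $2^{m + \ell(n-1)}$; the extra $d^n$ contributes an $\exp(n\ln d)$ term that the definitions of $\ell$ in (\ref{equNUGLGParameterLDef}) and $d$ in (\ref{equNUGLGParameterDGeqComplicated}) are calibrated to absorb. Your margin calculation ("$\alpha^2 d/4 - \ell\ln 2 \geq 3\ln d + \Theta(\ln(1/\varepsilon))$") is not an accident of slack --- that $\ln d$ budget is precisely what pays for the union over trees that you omitted.

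A secondary point: your per-assignment Hoeffding bound treats all $N = nd/2$ bundles as random with mean $2^{\ell - m}$, but for a tree-conforming assignment the $n-1$ tree bundles are satisfied deterministically. Only the $\frac{nd}{2} - (n-1)$ non-tree bundles carry independent randomness (conditioned on the tree values of $b$), and the deviation threshold must be measured against $n - 1 + (\frac{nd}{2} - (n-1))\,2^{\ell - m}$, not $N \cdot 2^{\ell - m}$. This is what (\ref{equNUGLGParameterMDef}) is set up to handle: the $-\frac{2}{d}$ in the definition of $m$ accounts for the additive $\frac{n-1}{nd/2} < \frac{2}{d}$ coming from the guaranteed tree bundles.
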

\begin{proof}
	Say that a bundle of constraints in $\widetilde{U}_2$ is \emph{satisfied} by a given assignment of variables if one of the $2^\ell$ constraints in the bundle is satisfied. Consider the following nondeterministic algorithm for satisfying a maximal number of constraints in $\widetilde{U}_2$, where $v_0$ is an arbitrarily chosen vertex of $\widetilde{H}$:
	\begin{enumerate}[label={(\arabic*)}]
		\item\label{itmNUGLGSoundnessU3PickTree} Nondeterministically choose a spanning tree $T \subseteq E(\widetilde{H})$.
		\item\label{itmNUGLGSoundnessU3PickVectorsFromBundles} For each $\{v_1, v_2\} \in T$, nondeterministically choose a vector $z^*(v_1, v_2) \in Z(v_1, v_2)$.
		\item\label{itmNUGLGSoundnessU3Assign} Assign $x_{v_0} := 0$, then assign all of the other variables so that, for all $\{v_1, v_2\} \in T$, $x_{v_1} + x_{v_2} = z^*(v_1, v_2)$ (this assignment is unique after fixing $x_{v_0}$, and can be defined inductively through the edges of $T$).
	\end{enumerate}
	Note that it is without loss of generality to assume $x_{v_0} = 0$ under any optimal assignment of variables, for if it was not, we could subtract $x_{v_0}$ from all of the variables and the exact same set of constraints would be satisfied. By similar reasoning, the set of edges of $\widetilde{H}$ whose bundles are satisfied under a given optimal assignment must contain a spanning tree of $\widetilde{H}$, for if it contained two connected components separated by an edge $e_0$, we could add some group element to all of the variables in one component so that all previously satisfied edges are still satisfied, and the bundle of $e_0$ is satisfied as well. Therefore, any optimal assignment must be one of the possible assignments output by this algorithm.
	
	Suppose $\widetilde{H}$ has $n$ vertices. Since there are at most $d^n$ spanning trees of $\widetilde{H}$ \cite{NumberOfSpanningTrees} that could be chosen in step \ref{itmNUGLGSoundnessU3PickTree}, and $(2^\ell)^{n - 1} \leq 2^{\ell n}$ functions $z^*$ that could be chosen in step \ref{itmNUGLGSoundnessU3PickVectorsFromBundles}, this algorithm has at most $d^n2^{\ell n}$ computation paths.
	
	As $\widetilde{H}$ has $\frac{nd}{2}$ edges, the expected number of bundles satisfied by any given assignment $x_v$ output by this algorithm is
	$$n - 1 + \left(\frac{nd}{2} - (n - 1)\right)2^{\ell - m},$$
	since the $n - 1$ bundles within $T$ are all satisfied, and each of the other $(\frac{nd}{2} - (n - 1))$ bundles are satisfied with probability $2^{\ell - m}$, independently\footnote{This is the probability that a given $m$-dimensional vector over $\ff_2$ lies in a randomly chosen affine subspace of dimension $\ell$. All that is necessary for this to be true is that the $b(v_1, v_2)$ vectors are chosen randomly; the $Z(v_1, v_2)$ subspaces and the choices made by the algorithm can be arbitrary (as long as they do not depend on $b$). }. Applying Hoeffding's inequality \cite{Hoeffding}, the probability that $x_v$ satisfies more than
	$$n - 1 + \left(\frac{nd}{2} - (n - 1)\right)\left(2^{\ell - m} + \frac{\alpha}{2}\right)$$
	bundles is at most
	$$\exp\left(-2\left(\frac{\alpha}{2}\right)^2\left(\frac{nd}{2} - (n - 1)\right)\right).$$
	By the union bound, the probability that there is \emph{some} computation path giving an assignment satisfying more than this many bundles is at most
	\begin{align*}
		&d^n 2^{\ell n} \cdot \exp\left(-2\left(\frac{\alpha}{2}\right)^2\left(\frac{nd}{2} - (n - 1)\right)\right)\\
		=&\exp(n \ln(d) + \ell n \ln(2)) \cdot \exp\left(-\left(\frac{\alpha^2}{2}\right)\left(\frac{n(d - 2)}{2} + 1\right)\right)\\
		=&\exp\left(n(\ln(d) + \ell\ln(2))-\left(\frac{\alpha^2}{2}\right)\left(\frac{n(d - 2)}{2} + 1\right)\right)\\
		\leq&\exp\left(n(\ln(d) + \ell\ln(2))-\left(\frac{\alpha^2}{2}\right)\left(\frac{n(d - 2)}{2}\right)\right)\\
		=&\exp\left(n\left(\ln(d) + \ell\ln(2)-\frac{\alpha^2(d - 2)}{4}\right)\right)\\
		\leq&\exp\left(n\left(\ln(d) + \ell\ln(2)-\frac{\alpha^2d}{8}\right)\right) \snc{(\ref{equNUGLGParameterDStrictlyGreater}) \implies d \geq 4}\\
		\leq&\exp\left(n\left(\frac{\alpha^2d}{16} + \ell\ln(2) - \frac{\alpha^2d}{8}\right)\right) \stextn{from (\ref{equNUGLGParameterDGeqComplicated}) and the fact that $\ln(\varepsilon) \leq 0$}\\
		=&\exp\left(n\left(\ell\ln(2) - \frac{\alpha^2d}{16}\right)\right)\\
		\leq&\exp\left(n\left(\ell\ln(2) - \left(\ln(d) + 2 + \ln(2) - \ln(\varepsilon)\right)\right)\right) \stext{from (\ref{equNUGLGParameterDGeqComplicated})}\\
		\leq&\exp\left(n\left(\left(\log_2(d) + 2\log_2(e) + 1\right)\ln(2) - \left(\ln(d) + 2 + \ln(2) - \ln(\varepsilon)\right)\right)\right) \stextn{from (\ref{equNUGLGParameterLDef})}\\
		=&\exp\left(n\left(\left(\frac{\ln(d) + 2}{\ln(2)} + 1\right)\ln(2) - \left(\ln(d) + 2 + \ln(2) - \ln(\varepsilon)\right)\right)\right)\\
		=&\exp\left(n\ln(\varepsilon)\right)\\
		=&\varepsilon^{n}\\
		\leq& \varepsilon \bcause{n \geq 1,\ \varepsilon \leq 1}.\\
	\end{align*}
	
	Since all optimal assignments arise from one of these computation paths, it follows that, with probability at least $1 - \varepsilon$, the optimal fraction of bundles which can be satisfied in $\widetilde{U}_2$ is at most
	\begin{align*}
		&\frac{n - 1 + \left(\frac{nd}{2} - (n - 1)\right)\left(2^{\ell - m} + \frac{\alpha}{2}\right)}{\frac{nd}{2}}\\
		<&\frac{n + \left(\frac{nd}{2}\right)\left(2^{\ell - m} + \frac{\alpha}{2}\right)}{\frac{nd}{2}}\\
		=&\frac{2}{d} + 2^{\ell - m} + \frac{\alpha}{2}\\
		\leq&\frac{2}{d} + 2^{\ell - \left(\ell - \log_2\left(\left(\frac12 - \gamma\right)\alpha - \frac{2}{d}\right)\right)} + \frac{\alpha}{2} \stext{from (\ref{equNUGLGParameterMDef})}\\
		=& \frac{2}{d} + \left(\left(\frac12 - \gamma\right)\alpha - \frac{2}{d}\right) + \frac{\alpha}{2}\\
		=&(1 - \gamma)\alpha.
	\end{align*}
	Since each bundle contains $2^\ell$ contradictory constraints, this is a $(1 - \gamma) \left(\frac{\alpha}{2^\ell}\right)$ fraction of the constraints of $\widetilde{U}_2$.\thmspace
\end{proof}

We can now prove Lemma \ref{lemNUGLGSoundnessU2}.\thmspace
\newtheorem*{R4}{Lemma~\ref{lemNUGLGSoundnessU2}}
\begin{R4}
	\lemNUGLGSoundnessUTwo
\end{R4}
\begin{proof}
	By Lemma \ref{lemNUGLGMostEdgesGood}, the probability that less than a $(1 - \gamma)$ fraction of edges of $\widetilde{H}$ are good edges is at most $\frac12$. By Lemma \ref{lemNUGLGSoundnessUTilde}, the probability that $\widetilde{U}_2$ is $(1 - \gamma) \left(\frac{\alpha}{2^\ell}\right)$-satisfiable is at most $\varepsilon$. By the union bound, the probability that either of these two events occurs is at most $\frac12 + \varepsilon$, so the probability that neither event occurs is at least $\frac12 - \varepsilon$. So it suffices to prove that, whenever at least a $(1 - \gamma)$ fraction of the edges of $\widetilde{H}$ are good edges, if $\widetilde{U}_2$ is not $(1 - \gamma) \left(\frac{\alpha}{2^\ell}\right)$-satisfiable, then $U_2$ is not $\left(\frac{\alpha}{2^\ell}\right)$-satisfiable.
	
	We instead prove the contrapositive, that if at least a $\left(\frac{\alpha}{2^\ell}\right)$ fraction of constraints are satisfiable in $U_2$, then at least a $(1 - \gamma)\left(\frac{\alpha}{2^\ell}\right)$ fraction of constraints are satisfiable in $\widetilde{U}_2$. Suppose that $\widetilde{U}_2$ has a total of $c$ constraints. Then $U_2$ has at least $(1 - \gamma)c$ constraints. So if at least a $\left(\frac{\alpha}{2^\ell}\right)$ fraction of constraints are satisfiable in $U_2$, it means that at least $\left(\frac{\alpha}{2^\ell}\right)(1 - \gamma)c$ constraints of $U_2$ are satisfied by some assignment $x_v$. Since $U_2$ and $\widetilde{U}_2$ have the same variable set, and all of the constraints of $U_2$ are also constraints of $\widetilde{U}_2$, it follows that $x_v$ must satisfy $\left(\frac{\alpha}{2^\ell}\right)(1 - \gamma)c$ constraints of $\widetilde{U}_2$ as well, that is, at least a $\left(\frac{\alpha}{2^\ell}\right)(1 - \gamma)$ fraction of constraints.
\end{proof}


\subsection{Proof of Lemma \ref{lemNUGLGRadiusProperty} (paths of length $r$ can be made consistent)}\label{appNUGLGRadiusProperty}

Here we prove Lemma \ref{lemNUGLGRadiusProperty}.\thmspace
\newtheorem*{R5}{Lemma~\ref{lemNUGLGRadiusProperty}}
\begin{R5}
	\lemNUGLGRadiusProperty
\end{R5}
\begin{proof}
	Since $H$ contains only good edges and $p$ has length at least $r$, there exists a set of vectors
	$$B \subseteq \bigcup_{i \in [n]} Z(v_{i - 1}, v_i)$$
	forming a basis of $\fm$. Write $h(i)$ for the number of basis vectors in $Z(v_{i - 1}, v_i)$, and denote these vectors by
	$$B = \bigcup_{i \in n} \{z_{i,j} \suchthat j \in [h(i)]\},$$
	where each $z_{i,j} \in Z(v_{i - 1}, v_i)$. Since $B$ is a basis, there exist coefficients $c_{i, j}$ such that
	\begin{equation}\label{equGStarBasis}
		g^*(v_0) - g^*(v_n) - \sum_{i \in [n]} b(v_{i - 1}, v_{i}) = \sum_{i \in [n]} \sum_{j \in [h(i)]} c_{i, j}z_{i, j}.
	\end{equation}
	For each $i$ in order from $1$ to $n$, inductively define
	$$g^*(v_i) := g^*(v_{i - 1}) - \sum_{j \in [h(i)]} c_{i, j}z_{i, j} - b(v_{i - 1}, v_i).$$
	Note that, by expanding the inductive definition for $g^*(v_n)$, we have
	\begin{align*}
		g^*(v_n) &= g^*(v_{n - 1}) - \sum_{j \in [h(n)]} c_{n, j}z_{n, j} - b(v_{n - 1}, v_n)\\
		&= g^*(v_{n - 2}) - \sum_{j \in [h(n - 1)]} c_{n - 1, j}z_{n - 1, j} - b(v_{n - 2}, v_{n - 1}) \push\hspace{.64cm} - \sum_{j \in [h(n)]} c_{n, j}z_{n, j} - b(v_{n - 1}, v_n)\\
		&= g^*(v_{n - 3}) - \sum_{j \in [h(n - 2)]} c_{n - 2, j}z_{n - 2, j} - b(v_{n - 3}, v_{n - 2}) \push\hspace{.64cm} - \sum_{j \in [h(n - 1)]} c_{n - 1, j}z_{n - 1, j} - b(v_{n - 2}, v_{n - 1}) \push\hspace{.64cm} - \sum_{j \in [h(n)]} c_{n, j}z_{n, j} - b(v_{n - 1}, v_n)\\
		&= \dots\\
		&= g^*(v_0) - \sum_{i \in [n]} \left(b(v_{i - 1}, v_{i}) + \sum_{j \in [h(i)]} c_{i, j}z_{i, j}\right)\\
	\end{align*}
	so our inductive definition agrees with the original definition by (\ref{equGStarBasis}). For any $i \in [n]$ and any arbitrary elements $g_{i - 1}, g_i, z \in \fm$,
	\begin{align*}
		&& x_{v_{i}}^{g_{i}} - x_{v_{i - 1}}^{g_{i - 1}} = z &\txt{ is an equation in } \mathcal{G}(U_1)\\
		\iff&& (x_{v_{i}}^{g_{i}} + g_{i}) - (x_{v_{i - 1}}^{g_{i - 1}} + g_{i - 1}) = z &\txt{ is an equation in } U_1\\
		\iff&& (x_{v_{i}}^{g_{i}} + g_{i} + g^*(v_{i - 1})) \hspace{.76cm}\\&& \txt{} - (x_{v_{i - 1}}^{g_{i - 1}} + g_{i - 1} + g^*(v_{i - 1})) = z &\txt{ is an equation in } U_1\\
		\iff&& (x_{v_{i}}^{g_{i}} + g_{i} + g^*(v_{i - 1})) \\&& \txt{} - (x_{v_{i - 1}}^{g_{i - 1}} + g_{i - 1} + g^*(v_{i - 1})) \\&& = z + \sum_{j \in [h(i)]} c_{i, j}z_{i, j} &\txt{ is an equation in } U_1 \\&&\snc{\sum_{j \in [h(i)]} c_{i, j}z_{i, j} \in Z(v_{i - 1}, v_{i})}\\
		\iff&& (x_{v_{i}}^{g_{i}} + g_{i} + g^*(v_{i - 1})) \\&& \txt{} - (x_{v_{i - 1}}^{g_{i - 1}} + g_{i - 1} + g^*(v_{i - 1})) \\&& = z + \sum_{j \in [h(i)]} c_{i, j}z_{i, j} + b(v_{i - 1}, v_{i}) &\txt{ is an equation in } U_2\\
		\iff&& (x_{v_{i}} + g_{i} + g^*(v_{i - 1}) - \sum_{j \in [h(i)]} c_{i, j}z_{i, j} \hspace{.76cm}\\&& \txt{} - b(v_{i - 1}, v_i)) - (x_{v_{i - 1}} + g_{i - 1} + g^*(v_{i - 1})) = z &\txt{ is an equation in } U_2\\
		\iff&& x_{v_{i}}^{g_{i} + g^*(v_{i - 1}) - \sum_{j \in [h(i)]} c_{i, j}z_{i, j} - b(v_{i - 1}, v_i)} \hspace{.76cm}\\&& \txt{} - x_{v_{i - 1}}^{g_{i - 1} + g^*(v_{i - 1})} = z &\txt{ is an equation in } \mathcal{G}(U_2)\\
		\iff&& x_{v_{i}}^{g_{i} + g^*(v_{i})} - x_{v_{i - 1}}^{g_{i - 1} + g^*(v_{i - 1})} = z &\txt{ is an equation in } \mathcal{G}(U_2)\\
		\iff&& f(x_{v_{i}}^{g_{i}}) - f(x_{v_{i - 1}}^{g_{i - 1}}) = z &\txt{ is an equation in } \mathcal{G}(U_2),
	\end{align*}
	so $f$ is a partial isomorphism.
\end{proof}


\subsection{Proof of Lemma \ref{lemGirth} (no paths in $F_i(u)$ with endpoints in $T_{i - 1}$)}\label{appGirth}

Here we prove Lemma \ref{lemGirth}. First, we need the following result about $T_i(u)$.\thmspace
\begin{lemma}\label{lemTreePath}
	On any round $i$, for any vertex $u \in V(H)$, any path in $T_i(u)$ passes through at most $k$ vertices in $P_i(u)$.
\end{lemma}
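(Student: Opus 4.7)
The plan is to exploit the minimality of $T_i(u)$ to control the leaves, and then to account for the two kinds of vertices in $P_i(u)$: those that are ``free'' (equal to $u$ or pebbled) versus those that appear only because they are branching, where branching must be ``paid for'' by an off-path subtree containing a leaf.

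First I would observe that, since Spoiler has just picked up a pebble, at most $k-1$ pebble pairs are on the board, so there are at most $k-1$ pebbled vertices in $V(H)$. Together with $u$, the set $W_1 := \{u\} \cup \{v \in V(H) : v \text{ is pebbled}\}$ satisfies $|W_1| \leq k$. By minimality of $T_i(u)$ (it is a minimal tree containing $u$ and every pebbled vertex), every leaf of $T_i(u)$ lies in $W_1$; otherwise that leaf could simply be pruned.

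Now fix any path $\pi$ in $T_i(u)$ and let $W := V(\pi) \cap P_i(u)$. Split $W$ as $W = (W_1 \cap V(\pi)) \sqcup W_2$, where $W_2$ consists of those vertices of $V(\pi) \cap P_i(u)$ that are neither $u$ nor pebbled, hence have degree at least $3$ in $T_i(u)$. For each $w \in W_2$, since $\deg_\pi(w) \leq 2 < \deg_{T_i(u)}(w)$, there is at least one edge leaving $w$ into a connected component of $T_i(u) \setminus V(\pi)$. Because $T_i(u)$ is a tree, the components of $T_i(u) \setminus V(\pi)$ attached at distinct vertices of $V(\pi)$ are pairwise disjoint, and every such component (being a nonempty tree) contains at least one leaf of $T_i(u)$. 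These leaves lie in $W_1 \setminus V(\pi)$. Assigning to each $w \in W_2$ one such leaf gives an injection $W_2 \hookrightarrow W_1 \setminus V(\pi)$, so $|W_2| \leq |W_1 \setminus V(\pi)|$. Combining,
\[
|W| \;=\; |W_1 \cap V(\pi)| + |W_2| \;\leq\; |W_1 \cap V(\pi)| + |W_1 \setminus V(\pi)| \;=\; |W_1| \;\leq\; k,
\]
which is the desired bound.

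I do not expect any serious obstacle here: the lemma is really a statement about trees and their leaves, and the argument reduces to a simple leaf-counting/injection argument once one uses the minimality of $T_i(u)$ to identify leaves with elements of $\{u\} \cup \{\text{pebbled vertices}\}$. The only point requiring a bit of care is tracking the ``just after Spoiler has picked up a pebble'' hypothesis, which is what keeps the number of pebbled vertices to $k-1$ and thus yields the clean bound $k$ rather than $k+1$.
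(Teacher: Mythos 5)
Your proof is correct and takes essentially the same approach as the paper: both construct an injection from the path's intersection with $P_i(u)$ into the set $\{u\} \cup \{\text{pebbled vertices}\}$ (of size at most $k$), fixing the low-degree/pebbled vertices in place and routing each branching vertex to an off-path subtree that, by minimality of $T_i(u)$, must contain a pebbled vertex or $u$. The only cosmetic difference is that you phrase the target of the injection in terms of leaves of $T_i(u)$, while the paper sends each branching vertex directly to a reachable pebbled vertex (or $u$); these are equivalent formulations of the same minimality argument.
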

\begin{proof}
	Let $p = v_0, \seq{v}{n}$ be a path in $T_i(u)$. Consider the following map $h: p \cap P_i(u) \to V(T_i(u))$:
	$$h(v_i) := \threecases{\txt{if } \deg(v_i) < 3}{v_i}{}{\txt{some pebbled vertex (or $u$) reachable}}{\txt{if } \deg(v_i) \geq 3}{\txt{from $v_i$ in $T_i(u) \setminus \{v_{i - 1}, v_{i + 1}\}$}}$$
	Note that such a vertex always exists when $v_i$ has degree at least 3, and is necessarily different from all other vertices in the image of $h$. Thus, $h$ is injective. Also, since vertices in $P_i(u)$ of degree less than 3 must be pebbled, the output of $h(v_i)$ must always be a pebbled vertex. Thus, we have an injection from $p \cap P_i(u)$ to a set of pebbled vertices (plus $u$), of which there are at most $k$ (since one pebble pair has been picked up), so $\abs{p \cap P_i(u)} \leq k$.
\end{proof}

Now we can prove Lemma \ref{lemGirth}.\thmspace
\newtheorem*{R6}{Lemma~\ref{lemGirth}}
\begin{R6}
	\lemGirth
\end{R6}
\begin{proof}
	Suppose toward a contradiction that there was such a path $p_1$, joining $v_1, v_3 \in T_{i - 1}$. Let $v_2$ be the first vertex along the path $p_1$ which is contained $T_{i - 1}$, excluding $v_1$ (it could just be $v_3$ if there are no earlier places where $p$ crosses $T_{i - 1}$). Since $T_{i - 1}$ is connected, there must be some path $p_2$ joining $v_1$ and $v_2$ in $T_{i - 1}$. Since $p_1$ is contained in $F_i(u)$, which shares no edges with $T_{i - 1}$, $p_1$ and $p_2$ share no edges. Aside from $v_1$ and $v_2$, they do not share any common vertices either, from the way that $v_2$ was chosen. So together, $p_1$ and $p_2$ form a cycle. Since $p_1$ is contained within $T_i(u)$, by Lemma \ref{lemTreePath} it intersects at most $k$ vertices in $P_i(u)$. Since, additionally, $p_1$ is contained in $F_i(u)$, the length of each of the $\leq(k + 1)$ segments between vertices in $P_i(u)$ and the endpoints is strictly less than $r$. Thus, $p_1$ has length strictly less than $(k + 1)r$. Since $p_2$ is contained within $T_{i - 1}$, which is minimal, $p_2$ cannot contain any subpaths of length $(k + 1)r$ which do not intersect $P_{i - 1}$, for otherwise, swapping out such a subpath for $p_1$ would yield a strictly smaller tree. Applying Lemma \ref{lemTreePath} to round $i - 1$ and vertex $u^*_{i - 1}$, we have that at most $k$ vertices of $p_2$ intersect $P_{i - 1}$, so $p_2$ has length at most $k (k + 1) r$. Thus, concatenating $p_1$ and $p_2$ yields a cycle of size strictly less than
	$$(k + 1) r + k (k + 1) r = (k + 1)^2 r$$
	in $H$. This contradicts the fact that $H$ was chosen to have girth at least $(k + 1)^2 r$. Hence, no such path $p_1$ can exist.
\end{proof}


\subsection{Derivation of bound on the growth of $q$}\label{appBigO}

Here we explicitly derive the bound $q = O(\frac{1}{\alpha^2} \log(\frac{1}{\alpha}))$ from Theorem \ref{thmUGLowGapMain}. First, we need the following lemma.\thmspace
\begin{lemma}\label{lemBigO}
	For any function $f: \zz_{\geq 1} \to \zz_{\geq 1}$,
	$$f(n) = O(n^2 \log(f(n))) \implies f(n) = O(n^2 \log (n)).$$
\end{lemma}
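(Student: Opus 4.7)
The plan is to take logarithms of the hypothesis and iterate. Unpacking the big-$O$ assumption, there exist constants $C > 0$ and $n_0$ such that $f(n) \leq C n^2 \log f(n)$ for all $n \geq n_0$. If $f$ is bounded, then $f(n) = O(1) \subseteq O(n^2 \log n)$ and we are done, so I would assume henceforth that $f(n) \to \infty$ along a subsequence; otherwise we replace $f$ by its eventual maximum.

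Next, I would take logarithms of the inequality to get
$$\log f(n) \leq \log C + 2 \log n + \log \log f(n).$$
The key observation is that $\log \log x = o(\log x)$ as $x \to \infty$, so for any fixed $\varepsilon \in (0,1)$ there is a threshold beyond which $\log \log f(n) \leq \varepsilon \log f(n)$. Rearranging,
$$(1 - \varepsilon) \log f(n) \leq \log C + 2 \log n,$$
which yields $\log f(n) = O(\log n)$.

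Substituting this bound back into the original hypothesis, we obtain
$$f(n) \leq C n^2 \log f(n) = O(n^2 \log n),$$
as desired. The main subtlety is handling the values of $n$ where $f(n)$ is small (so that $\log \log f(n)$ might not be dominated by $\log f(n)$), but since there can be at most finitely many such violations before the asymptotic regime takes over (or $f$ is bounded outright), these can be absorbed into the big-$O$ constant.
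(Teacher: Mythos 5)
Your proof is correct and follows the same two-step skeleton as the paper's --- first establish a crude bound ensuring $\log f(n) = O(\log n)$, then substitute back into the hypothesis to upgrade to $f(n) = O(n^2 \log n)$ --- but the two arguments realize the first step quite differently. The paper uses the elementary pointwise inequality $\log_2 m \leq 2\sqrt{m}$ (valid for all positive integers $m$, no threshold required), which immediately gives $\sqrt{f(n)} \leq 2c_0 n^2$ and hence $f(n) \leq 4c_0 n^4$ with explicit constants; no limiting or $o(\cdot)$ reasoning is needed. You instead take logarithms and invoke the asymptotic fact $\log\log x = o(\log x)$, which requires tracking where $f(n)$ is large versus small. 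Both work, but the paper's version is self-contained and constant-explicit, while yours is perhaps more conceptually transparent about \emph{why} the bootstrap succeeds.

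One small imprecision worth tightening: your claim that there are ``at most finitely many such violations before the asymptotic regime takes over'' is not literally true for an oscillating unbounded $f$ --- there can be infinitely many $n$ with $f(n)$ below the threshold $X_0$ for which $\log\log x \leq \varepsilon \log x$. The correct patch is simpler than counting violations: for any $n$ with $f(n) < X_0$, the conclusion $f(n) \leq X_0 \leq X_0 \cdot n^2 \log n$ holds trivially, so those $n$ never threaten the bound. (Your parenthetical suggestion of monotonizing $f$ by its running maximum also repairs this, since the running max still satisfies the hypothesis and, if unbounded, genuinely tends to infinity --- but it's worth saying explicitly that this is why the replacement is legitimate.) With either patch the argument is complete.
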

\begin{proof}
	Suppose that there exist $c_0, n_0$ such that, for all $n \geq n_0$,
	$$f(n) \leq c_0 n^2 \log_2(f(n)).$$
	Since $\log_2(m) \leq 2\sqrt{m}$ for all positive integers $m$, it follows that, when $n \geq n_0$,
	\begin{align*}
		f(n) \leq c_0 n^2 \log_2(f(n)) \leq c_0 n^2 \left(2\sqrt{f(n)}\right) &\implies \sqrt{f(n)} \leq 2 c_0 n^2\\
		&\implies f(n) \leq 4c_0 n^4.
	\end{align*}
	Let $c_1 := 5c_0$ and let $n_1 := \max\{n_0, 4c_0\}$. Then, for all $n \geq n_1$,
	\begin{align*}
		f(n) &\leq c_0 n^2 \log_2(f(n))\\
		&\leq c_0 n^2 \log_2(4 c_0 n^4)\\
		&= c_0 n^2 \left(\log_2(4c_0) + 4\log_2(n)\right)\\
		&\leq c_0 n^2 \left(\log_2(n) + 4\log_2(n)\right) \snc{n \geq n_1 \geq 4c_0}\\
		&= c_1 n^2 \log_2(n),
	\end{align*}
	so $f(n) = O(n^2 \log(n))$.
\end{proof}

Let $f(n)$ denote the smallest integer $d$ satisfying (\ref{equNUGLGParameterDGeqComplicated}) for $\alpha = \frac{1}{n}$ (note that we don't have to worry about (\ref{equNUGLGParameterDStrictlyGreater}) since $\gamma = \frac14$). By (\ref{equNUGLGParameterDGeqComplicated}),
$$f(n) = O\left(\left(\frac{1}{\alpha}\right)^2 \log (f(n))\right) = O(n^2 \log (f(n))),$$
so by Lemma \ref{lemBigO}, $f(n)= O(n^2 \log(n))$. In other words, we can take $d = O(\frac{1}{\alpha^2} \log(\frac{1}{\alpha}))$ to satisfy (\ref{equNUGLGParameterDGeqComplicated}). Then it is not hard to see that (\ref{equNUGLGParameterLDef}) and (\ref{equNUGLGParameterMDef}) imply $q = 2^m = O(\frac{1}{\alpha^2} \log(\frac{1}{\alpha}))$ as well.

\end{document}